\newif\ifconf
\newcommand{\onlyeprint}[1]{\ifconf\else#1\fi}
\DeclarePairedDelimiterX\Prc[2]{\Pr[}{]}{#1\mspace{1.5mu}\vert\mspace{1.5mu}#2}         
\newcommand{\N}{\mathbb{N}}
\newcommand{\Exp}{\mathsf{Exp}}
\newcommand{\msg}{\ensuremath{\mathsf{m}}}
\newcommand{\ct}{\ensuremath{\mathsf{ct}}}
\newcommand{\msk}{\ensuremath{\mathsf{msk}}}
\newcommand{\mpk}{\ensuremath{\mathsf{mpk}}}
\newcommand{\ek}{\ensuremath{\mathsf{ek}}}
\newcommand{\dk}{\ensuremath{\mathsf{dk}}}
\newcommand{\Xgate}{\ensuremath{\mathsf{X}}}
\newcommand{\Zgate}{\ensuremath{\mathsf{Z}}}
\newcommand{\UEQ}{\ensuremath{\mathsf{UEQ}}}
\renewcommand{\bin}{\ensuremath{\{0,1\}}}
\newcommand{\IdFE}{\ensuremath{\mathsf{IdFE}}}
\newcommand{\OneFE}{\ensuremath{\mathsf{OneQFE}}}
\newcommand{\TwoFE}{\ensuremath{\mathsf{TwoFE}}}
\newcommand{\Enc}{\ensuremath{\mathsf{Enc}}}
\newcommand{\Dec}{\ensuremath{\mathsf{Dec}}}
\newcommand{\Setup}{\ensuremath{\mathsf{Setup}}}
\newcommand{\Keygen}{\ensuremath{\mathsf{KeyGen}}}
\newcommand{\Sim}{\ensuremath{\mathsf{Sim}}}
\newcommand{\Encode}{\ensuremath{\mathsf{Encode}}}
\newcommand{\Decode}{\ensuremath{\mathsf{Decode}}}
\newcommand{\QRE}{\ensuremath{\mathsf{QRE}}}
\newcommand{\QFE}{\ensuremath{\mathsf{QFE}}}
\newcommand{\FE}{\ensuremath{\mathsf{FE}}}
\newcommand{\MIFE}{\ensuremath{\mathsf{MIFE}}}
\newcommand{\QMIFE}{\ensuremath{\mathsf{QMIFE}}}
\newcommand{\TP}{\ensuremath{\mathsf{TP}}}
\newcommand{\Obf}{\ensuremath{\mathsf{Obf}}}
\newcommand{\Eval}{\ensuremath{\mathsf{Eval}}}
\newcommand{\eps}{\varepsilon}
\newcommand{\T}{\ensuremath{\text{TD}}}  
\newcommand{\Tr}{\ensuremath{\text{Tr}}} 
\newcommand{\Adv}{\ensuremath{\mathcal{A}}}
\newcommand{\Dist}{\ensuremath{\mathcal{D}}}
\newcommand{\Hilb}{\ensuremath{\mathcal{H}}}
\newcommand{\AdvB}{\ensuremath{\mathcal{B}}}
\crefname{algorithm}{Game}{Games}
\Crefname{algorithm}{Game}{Games}
\crefname{equation}{}{}
\Crefname{equation}{}{}
\crefname{enumi}{item}{items}
\let\origleft\left
\let\origright\right
\renewcommand\left{\mathopen{}\mathclose\bgroup\origleft}
\renewcommand\right{\aftergroup\egroup\origright}
\let\doendproof\endproof
\renewcommand\endproof{~\hfill\qed\doendproof}
\definecolor{lightskyblue}{rgb}{0.53,0.81,0.98}
\newlength\glsnamewidth
\newlength\glssymbolwidth
 \def\@textbottom{\vskip \z@ \@plus 1pt}
 \let\@texttop\relax
\setlist[enumerate]{wide=.5\parindent,listparindent=1em,labelindent=\parindent}
\title{Unclonable Functional Encryption}
 \author{}
\institute{}
     \author{%
         Arthur Mehta\inst{3} \and Anne Müller\inst{1,2}
     }
     \institute{
     CISPA Helmholtz Center for Information Security, Saarbrücken, Germany\\
         {\scriptsize{\email{anne.mueller@cispa.de}}} \and
         Graduate School of Computer Science, Saarland University, Germany \and
         Department of Mathematics and Statistics, University of Ottawa, Ottawa, Canada\\ {\scriptsize{\email{amehta2@uottawa.ca}}}
     }
\setlist[itemize]{label=\textbullet}
\begin{document} 
\pagestyle{plain}
\onlyeprint{
\let\oldaddcontentsline\addcontentsline
\def\addcontentsline#1#2#3{}
}
\maketitle
\onlyeprint{
\def\addcontentsline#1#2#3{\oldaddcontentsline{#1}{#2}{#3}}
}

\begin{abstract}
In a functional encryption (FE) scheme, a user that holds a ciphertext and a function-key can learn the result of applying the function to the plaintext message. Security requires that the user does not learn anything beyond the function evaluation. We extend this notion to the quantum setting by providing definitions and a construction for a quantum functional encryption (QFE) scheme which allows for the evaluation of polynomialy-sized circuits on arbitrary quantum messages. Our construction is built upon quantum garbled circuits \cite{BY20}.

We also investigate the relationship of QFE to the seemingly unrelated notion of unclonable encryption (UE) and find that any QFE scheme \emph{universally} achieves the property of unclonable functional encryption (UFE). In particular we assume the existence of an unclonable encryption scheme with quantum decryption keys which was recently constructed by \cite{AKY24}. Our UFE guarantees that two parties cannot simultaneously recover the correct function outputs using two independently sampled function secret keys. As an application we give the first construction for public-key UE with variable decryption keys. 


Lastly, we establish a connection between quantum indistinguishability obfuscation (qiO) and quantum functional encryption (QFE); Showing that any multi-input indistinguishability-secure quantum functional encryption scheme unconditionally implies the existence of qiO.
\end{abstract}

\onlyeprint{
\newpage
{
\hypersetup{linkcolor=black}
\setcounter{tocdepth}{2}
\tableofcontents
}
\newpage
}

\section{Introduction}
The development of Functional Encryption (FE) marks a significant evolution in cryptography, enabling a more nuanced and controlled access to encrypted data \cite{oneil10, TCC:BonSahWat11}. Traditional public-key encryption allows either full decryption or none at all, a model insufficient for many modern applications, such as cloud services, where selective access to data is essential. FE addresses this by allowing decryption keys to reveal only specific functions of the encrypted data. 

In more detail, an FE scheme for a family of functions $\mathcal{F}$ enables a specialized form of decryption that takes as input both a ciphertext $\ct$ and a function key $\sk_f$ and outputs the evaluation $f(\msg)$ on the plain text $\msg$. The security of the scheme ensures that an adversary in possession of $(\ct, \sk_f)$ cannot recover additional information beyond  $f(\msg)$.

A broad goal within quantum cryptography aims to generalize various cryptographic tools into the quantum setting. This includes works studying verifiable delegation \cite{RUV13, Gri19}, randomized encodings and garbled circuits \cite{BY20}, and quantum indistinguishablity obfuscation (qiO) \cite{LC:BroKaz21}. Another approach explores new functionalities uniquely achievable in the quantum setting, such as unclonable encryption (UE) \cite{BL20}, where an adversary is unable to create two ciphertexts that both decrypt to the same message as the original ciphertext.

While the works mentioned above demonstrate the tremendous progress made in the field, there remain significant open challenges. Prior to this work, a formal treatment and secure construction of quantum functional encryption (QFE) had not been provided. Instead, \cite{BY20} suggests QFE as a potential application of quantum garbled circuits. Additionally, although there has been some progress, a complete construction for either qiO or UE remains an open problem. We explore how QFE can advance these topics.

\subsubsection{Summary of Results.} Our results on the topics of QFE, UE, and qiO are as follows: 
\begin{enumerate}[align= left, leftmargin=2.8em]
     \item We give the first formal definitions of QFE, covering both simulation and indistinguishability-based security. Our treatment spans adaptive and non-adaptive models, as well as multi-message, multi-query, and multi-input scenarios, addressing all key variants of functional encryption.
    \vspace{0.5em}
    \item We use quantum garbled circuits to give the first construction of single-query, adaptively simulation-secure QFE.
    \vspace{0.5em}
    \item We present a universal construction for unclonable QFE, showing that any QFE scheme can be made unclonably secure, assuming the existence of UE. As a corollary, we use this to obtain the first indistinguishable-uncloneable secure public-key encryption scheme with variable decryption keys. \vspace{0.5em}
    \item Laslty, we establish a connection between quantum indistinguishability obfuscation (qiO) and QFE; Showing that any multi-input indistinguishability-secure quantum functional encryption scheme unconditionally implies the existence of qiO.
\end{enumerate}

\subsection{Quantum Functional Encryption}

In this work, we formally define Functional Encryption in the quantum setting, referred to as Quantum Functional Encryption (QFE). At a high level, a QFE scheme for a class of circuits $\mathcal{C}$ allows for selective decryption with respect to function keys $\sk_C$, which must satisfy two key properties: \emph{correctness} and \emph{security}. The correctness property ensures that decryption returns $C(\rho_\msg)$ for all $C \in \mathcal{C}$ and states $\rho_\msg$,
and the security property guarantees that no additional information is revealed beyond $C(\rho_\msg)$.

While correctness in QFE is relatively straightforward, defining security requires more nuanced attention. Security can be analyzed through two primary frameworks: simulation-based security (SIM-security) and the generally weaker indistinguishability-based security (IND-security). Both approaches have further distinctions between adaptive and non-adaptive versions, whether an adversary has a single or multiple challenge ciphertexts, and depending on whether the adversary obtains one or more function keys. The formal definitions and detailed treatments are presented in \cref{qfe:sec} and \cref{sec:Apendix(Multi-Input)}. Below we provide the basic structure of QFE and outline notions of correctness as well as SIM-security and IND-security.

\subsubsection{QFE.$(\Setup, \Keygen, \Enc, \Dec)$}
$\Setup(1^\lambda)$, takes as input the security parameter $\lambda$, and outputs a master public key $\mpk$, and a master secret key $\msk$. Given $\msk$ and a circuit $C \in \mathcal{C}$, the key generation algorithm, $\Keygen(\msk, C)$, produces a secret function key $\sk_C$. Encryption, $\Enc(\mpk, \rho_\msg)$, uses $\mpk$ and outputs a ciphertext $\rho_\ct$. Finally, the decryption algorithm, $\Dec(\sk_C, \rho_\ct)$, takes a function key $\sk_C$ and the ciphertext $\rho_\ct$, and outputs a quantum state.

\subsubsection{Correctness.} Correctness requires that for all messages $\rho_\msg$, circuits $C \in \mathcal{C}$ and random coins used by $\Enc$ and $\Setup$ it holds that 
    \begin{align*}
        &C(\rho_\msg) = \Dec(\sk_C,\rho_\ct).
    \end{align*}
 
As outlined in \cref{qfe:sec} we additionally require correctness to respect correlation with possible side information.

\subsubsection{Simulation Security.}  Simulation security is formalized by comparing the output of two experiments: in the real experiment, the adversary interacts with the actual encryption scheme to produce an encryption of a chosen message, and choice of function key(s) $sk_C$. In the ideal experiment, a simulator is given access to the function key $\sk_C$ and the image state $C(\rho_\msg)$, and produces a ciphertext without access to the underlying message. The scheme is called simulation secure, abbreviated as SIM-secure, if the outputs of these two experiments are computationally indistinguishable. A QFE scheme is further said to be \emph{adaptively} simulation secure if the adversary can either choose the message first and then the function secret key or the other way around. 

The formal definition of simulation security in the restricted setting, where the adversary holds only a single ciphertext and single function key, is provided in \cref{qfe:def:simsecurity}.

\subsubsection{Indistinguishability Security.} In the classical setting, IND-security is defined with respect to \emph{admissible queries}. Specifically, an adversary holding a function key $\sk_f$ for some function $f$ cannot distinguish between encryptions of two admissible queries, meaning that $f(\msg_0) = f(\msg_1)$. Adapting IND-security to the quantum setting introduces some challenges, particularly in defining admissible queries. A first naive approach would be to require the trace distance of outputs states $C(\rho_{\msg_{1}})$ and $C(\rho_{\msg_{1}})$ to be suitably close in order for them to be admissible. However, as we discuss in \cref{qfe:sec:inddef} this approach is insufficient to prevent attacks based on quantum side information. 

An alternative, approach would be to take into account the internal state of an adversary and thereby restricting quantum side information. Although such an approach may be be useful in some applications, such as when the messages are are not chosen by the adversary, it remains too restrictive for many use cases. Instead, in \cref{qfe:def:admis} we introduce a notion of admissible queries which allows an adversary to be entangled with part of the message. As in the case in the classical setting our notion of sim-security is generally stronger and, we show that it implies IND-security.

\subsubsection{Multi-message Security.} More generally we also consider the notion of SIM-security and IND-security in the context where an adversary has access to numerous ciphertexts. In \cref{sec:Apendix(Multi-Input)} we provide an extension of the SIM-security from \cref{qfe:def:simsecurity} to the multi-message setting. As in the classical case, we show in \cref{qfe:lem:singletomulti} that any non-adaptive single-query simulation-secure scheme with classical secret keys is also multi-message simulation-secure. 

\subsubsection{Multi-query/Collusion Security}
In the classical setting, functional encryption schemes often require security to hold even in the presence of colluding key holders. A malicious user should not be able to combine several function keys to extract unauthorized information. More formally, a group of users holding secret keys $\sk_{C_1}, \dots, \sk_{C_q}$, along with an encryption of $\msg$, should only be able to learn $C_1(\msg), \dots, C_q(\msg)$, and nothing more about $\msg$. This scenario is often referred to as "collusion resistance." In our work, we refer to this property as \emph{multi-query security}. We note, however, that classical simulation-secure FE is not achievable against an adversary who may possesses an unbounded number of function keys, a scenario sometimes referred to as unbounded collusion \cite{AGVW13}\footnote{Assuming the existence of a family of weak pseudo-random functions.}.

In the quantum setting, the no-cloning theorem makes it unclear to what extent collusion is possible and presents challenges to formalising multi-querry security. In particular, without several copies of the underlying ciphertext it may not be possible to obtain several evaluations. In \cref{qfe:sec:qmife}, we introduce a more general form of QFE called \emph{quantum multi-input functional encryption} (QMIFE). This framework extends our treatment of both simulation-based security and indistinguishability-based security, encompassing multi-query security as a special case. Below, in \cref{QMIFE:subsec:intro}, we provide an overview of QMIFE and discuss how IND-security and SIM-security can be adapted to QMIFE.

\subsection{QFE for Poly-sized Circuits}

In the classical setting, it is known that a \emph{non-succinct} form of FE can be constructed using a cryptographic primitive known as randomized encodings (RE). Specifically, \cite{CCS:SahSey10, C:GorVaiWee12} show that any RE scheme which possess the additional property of being \emph{decomposable}, can be used to construct an FE scheme for the class of polynomial-sized circuits. Here the constructed FE scheme is considered non-succinct as the size of the ciphertext must be at least as large as the size of allowable circuits. 

\subsubsection{Randomized Encodings}
A randomized encoding (RE) of a function $f$ is a probabilistic function $\hat{f}$ such that, for any input $x$, the value of $f(x)$ can be recovered from $\hat{f}(x)$, but no additional information about $f$ or $x$ is revealed. An RE scheme is called \emph{decomposable} if a function $f$ and a sequence of inputs $(x_1, \dots, x_n)$ can be encoded in two parts: an \emph{offline} part $\hat{f}_{\text{off}}$, which depends only on $f$ and some randomness $r$, and an \emph{online} part $\hat{f}_i$, which depends on each input $x_i$ and the same randomness $r$. We write DRE for RE schemes which satisfy this proprety. 

In \cite{BY20} Brakerski and Yuen both define and give a construction for decomposable RE in the quantum setting called the \emph{Quantum Garbled circuit} (QGC) scheme. Our first main result presents a constuction for QFE based on QGC. 
\begin{theorem}[Informal]
Given a QGC scheme with perfect correctness and a public key encryption scheme there exists a single-query adaptive SIM-secure QFE scheme for the class of polynomial-sized circuits. 
\end{theorem}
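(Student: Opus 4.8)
The plan is to adapt the classical randomized-encoding-to-FE transformation of \cite{CCS:SahSey10} to the quantum setting, instantiating the decomposable encoding with the QGC scheme and using the public-key encryption scheme to obliviously deliver input labels. Concretely, I would fix a universal quantum circuit $U$ that takes a classical description of a circuit $C \in \mathcal{C}$ together with a quantum input and outputs $C(\rho_\msg)$. $\Setup$ samples, for every bit position $i$ of the circuit description and every $b \in \bin$, an independent PKE key pair $(\pk_i^b, \sk_i^b)$, collecting the public keys into $\mpk$ and the secret keys into $\msk$. Encryption garbles $U$ with the quantum input register initialised to $\rho_\msg$ while leaving the classical description register open; since $U$ is fixed and $\rho_\msg$ is known at encryption time, decomposability of the QGC yields the offline encoding, the quantum encoding of $\rho_\msg$, and \emph{both} labels $L_i^0,L_i^1$ for each description wire $i$. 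The encryptor then publishes $\ct_i^b \gets \mathsf{PKE}.\mathsf{Enc}(\pk_i^b, L_i^b)$ alongside the offline and quantum parts. A function key for $C = C_1\cdots C_s$ is simply $\sk_C = \{\sk_i^{C_i}\}_i$, which lets $\Dec$ recover exactly the labels $\{L_i^{C_i}\}$ and run QGC decoding to obtain $U(C,\rho_\msg) = C(\rho_\msg)$.

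Correctness is immediate from perfect correctness of the QGC: decoding the offline encoding together with the recovered labels $\{L_i^{C_i}\}$ and the quantum encoding of $\rho_\msg$ reproduces $C(\rho_\msg)$ exactly. I would additionally verify this holds when $\rho_\msg$ is entangled with external side information, which follows because QGC decoding acts as the correct channel on the message register and commutes with the identity on the side-information register, matching the correlation-respecting correctness of \cref{qfe:sec}.

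For adaptive SIM-security I would build the QFE simulator from the QGC simulator $\Sim_{\QGC}$ and the PKE. The simulator runs $\Setup$ honestly (so it knows every $\sk_i^b$) and answers the key query on $C$ with the genuine key $\{\sk_i^{C_i}\}$. To produce the challenge ciphertext it invokes $\Sim_{\QGC}$ on the data the ideal functionality supplies, namely $C$ and $C(\rho_\msg)$, obtaining a simulated offline part, a simulated quantum encoding, and a single simulated label per wire; it encrypts that label under $\pk_i^{C_i}$ and a dummy value under $\pk_i^{\bar{C_i}}$. Indistinguishability then follows from a hybrid argument: first, for each $i$, replace the honest wrong-value ciphertext under $\pk_i^{\bar{C_i}}$ by an encryption of a dummy, which is undetectable by semantic security of the PKE since the adversary never obtains $\sk_i^{\bar{C_i}}$; then replace the honestly generated offline/quantum/correct-label data by the output of $\Sim_{\QGC}$, which is undetectable by the simulation security of the QGC.

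The hard part is adaptivity together with the committing nature of the PKE ciphertexts: when the adversary sees the ciphertext before declaring $C$, the reduction cannot yet know which wire value is ``correct,'' so the label ciphertexts are fixed before $C$ is known and cannot be re-opened. Resolving this requires the \emph{adaptive} simulation security of the QGC of \cite{BY20}: the quantum encoding of the input is a one-time pad and is therefore maximally mixed and output-independent, so the simulator can publish it at challenge time without knowing $C(\rho_\msg)$, while the correction information fixing the decoded output to $C(\rho_\msg)$ is only supplied through the labels revealed at decryption. I would organise the hybrids so that the reductions to QGC adaptive simulation and to PKE semantic security respect the order in which the adversary interleaves its message and key queries, and I expect the delicate bookkeeping of this ordering, rather than any single reduction, to be the crux. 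Handling quantum side information, ensuring the guarantee survives when the challenge message is entangled with the adversary's register, is the remaining point requiring care and is dispatched by carrying the side-information register unchanged through every hybrid.
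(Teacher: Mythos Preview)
Your non-adaptive argument is essentially the Sahai--Seyalioglu transformation instantiated with the QGC and would go through. The gap is in the adaptive case, and the fix you sketch does not close it.

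You correctly identify the obstacle: once the PKE ciphertexts $\ct_i^b$ and the offline part are published, their contents are committed, yet at that moment the simulator knows neither $C$ nor $C(\rho_\msg)$. You then appeal to ``adaptive simulation security of the QGC of \cite{BY20}'', asserting that the correction information fixing the output can be supplied later through the labels. But this does not match your own construction: the labels are exactly the plaintexts of the PKE ciphertexts, which are fixed at encryption time, and the function key $\sk_C=\{\sk_i^{C_i}\}$ is fully determined by the adversary's choice of $C$. The simulator has no remaining degree of freedom at key-issuance time to program the decoded output to $C(\rho_\msg)$. Moreover, \cite{BY20} does not establish any such adaptive (equivocable) simulation property for the QGC, and even if it did, the committing PKE wrapping would still block equivocation of the label contents. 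Observing that the quantum input encoding is maximally mixed is true but insufficient: it lets you publish an output-independent quantum register, but gives you no channel through which to later bind it to $C(\rho_\msg)$.

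The paper resolves this with two ingredients you are missing. First, the classical label-delivery is not done with raw PKE but via adaptively SIM-secure classical FE sub-schemes (an identity-FE and a two-circuit FE, both from \cite{C:GorVaiWee12}); these are precisely equivocable, so the simulator can publish an ``empty'' classical ciphertext and later program the opened value. Second, for the quantum part the paper introduces a teleportation trick: the simulator sends halves of fresh EPR pairs as the quantum ciphertext (which are maximally mixed, hence indistinguishable from a real QOTP ciphertext), and upon learning $C(\rho_\msg)$ teleports it through those EPR pairs; the resulting Pauli correction keys are then revealed through the equivocable classical identity-FE. The combination of classical equivocation (from adaptive classical FE, itself built from PKE) and quantum equivocation (from teleportation) is what yields adaptive security; plain PKE plus the selective QGC simulator does not.
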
  

The formal statement and construction of our QFE scheme is given in \cref{qfe:sec:polyqfe}. Similar to the classical constructions given in \cite{CCS:SahSey10, C:GorVaiWee12}, our scheme is not succinct. While succinct FE is needed for many applications, such as delegated computation, we show that the our QFE scheme can be used to obtain the first public-key unclonable encryption scheme with variable decryption keys. This in turn provides several applications such as private-key quantum money. Details on our applications to unclonable cryptography are discussed in \cref{UQFE:subsec:intro}. Below we outline our construction for QFE and highlight the specific challenges which present in the quantum setting.

\subsubsection{Outline of QFE Construction} 
The basic observation that enables the construction of FE from garbled circuits is their decomposability. It allows one to decouple the circuit and input by viewing both as inputs to a universal circuit. Here a universal circuit $U$ takes as input a circuit description $C$ and state $\rho_{\msg}$ and outputs $C(\rho_{\msg})$. Due to the decomposability property the encoding of the classical circuit description and the encoding of the quantum input can be handled separately. In this way, using a decomposable RE scheme for a universal circuit, combined with a restricted form of functional encryption for pairs of circuits, enables functional encryption for all polynomial-sized circuits.

While the above construction is fairly straightforward to translate into the quantum setting using the QGC, more difficulty arises when creating adaptive security. In the adaptive security setting the adversary can first request a ciphertext and then a secret key for an arbitrary function. Since the simulator is not allowed learn the message which the adversary selected, the simulator only obtains the output of the circuit evaluation during the second stage. Therefore the simulator needs to first create an 'empty' ciphertext and later provide a secret key that opens the ciphertext to the correct value. Techniques for handling the classical part adaptively are well known but they cannot be applied to the quantum part. To resolve this we employ a 'trick' inspired by the concept of computation trough teleportation.

We describe the classical and quantum techniques to achieve this for a single bit or qubit repectively. For a classical message an 'empty' ciphertext can be created by encrypting the bit $0$ and the bit $1$ in two separate slots of the ciphertext and later revealing the key for only one slot.
Clearly in the quantum setting we cannot enumerate all possible values a single qubit can take. Instead the uniquely quantum phenomenon of teleportation can help us achieve such a construction. The simulator encrypts one qubit of an EPR pair $\sigma^{AB} = \frac{1}{\sqrt{2}} (|0\rangle^A |0\rangle^B + |1\rangle^A |1\rangle^B)$ pair using the quantum one time pad: $|\ct\rangle = X^a Z^b \sigma^A$.

Later when the simulator learns the output state $\rho$ it teleports the state into the ciphertext. This results into a randomization of the ciphertext since now the state $X^{a'} Z^{b'} \rho$ is contained resulting in the ciphertext $X^a Z^b X^{a'} Z^{b'} \rho = X^{a \xor a'} Z^{b \xor b'} \rho$ where $(a',b')$ are the teleportation correction keys.  We can then use a classical ciphertext as described above to reveal the keys $(a \xor a', b \xor b')$.

\subsection{Unclonable QFE}\label{UQFE:subsec:intro}
As an application of our $\QFE$ scheme we obtain a novel form of unclonable encryption (UE), which we call \emph{unclonable functional encryption}.  

\subsubsection{Unclonable Encryption} Unclonable encryption, first introduced by Broadbent and Lord \cite{BL20}, is an encryption scheme that leverages the no-cloning theorem to achieve a novel cryptographic functionality. Specifically, it guarantees that an adversary in possession of a ciphertext $\rho_{ct}$ cannot generate two states, $\rho_B$ and $\rho_C$, that both correctly decrypt to the same message $\msg$. This is formalised in the following security game with a tripartite adversary $\Adv = (A,B,C)$. In the first phase $A$ receives a ciphertext $\rho_\ct$ that enrcypts a message $\msg$ and has to produce a state $\rho_{BC}$ by applying an arbitrary quantum channel. In the second phase $B$ and $C$ are activated, they receive the state $\rho_B$ and $\rho_C$ respectively and each get a copy of the decryption key. They win the experiment if both $B$ and $C$ correctly guess the message $\msg$. The strongest security notion for unclonable encryption is unclonable-indistinguishability security which allows $A$ to choose two messages $\msg_0, \msg_1$. To win the game $B$ and $C$ have to both guess correctly which of these message was encrypted. A second weaker notion, often simply referred to as unclonable security, weakens the win condition of the security game to requiring $B$ and $C$ to simultaneously recover a random message in it's entirety. In our work all definitions are with respect to the stronger notion of unclonable-indistinguishable security.

\subsubsection{Vairable-key UE} While currently there is no provably secure construction for the strongest notion of UE there do exists several weaker variants which have allowed for more progress. Kundu and Tan consider one such variant called unclonable encryption with variable keys \cite{KT20}. Their modified version of UE allows a ciphertext to be decrypted using multiple decryption keys, with each adversary in a cloning attack receiving an independently generated key. In the device-independent setting \cite{KT20} give a construction for secret-key unclonable encryption with variable keys. While this scheme enjoys device indepedence it only obtains the weaker notion unclonable security.

Kundi and Tan also further outline that although weaker than UE such a scheme is still useful for known applications of UE such as quantum money. A private-key quantum money scheme can be constructed from unclonable encryption as follows: A banknote is created by creating a ciphertext of a random message. The bank holds a deryption key and can verify the message by decrypting it. In the case of unclonable encryption with variable decryption keys each bank that needs to verify the banknote independently samples a decryption key.

\subsubsection{Unclonable QFE} In this work, we introduce a novel cryptographic primitive called \emph{Unclonable Functional Encryption} (Unclonable QFE), which combines the security properties of QFE with the unclonable security characteristics of UE. The formal definition of Unclonable QFE is provided in  \cref{qfe:def:UEQFE}, where we extend the security requirements of a QFE scheme to include unclonable security. Our approach builds on the familiar security game from UE with some key modifications. In the first phase, the underlying message is encrypted using a QFE scheme. After an adversarial splitting channel is applied, in the second phase, two adversaries, $B$ and $C$, each receive independently generated function secret keys for some circuit. Our new security notion ensures that both $B$ and $C$ cannot simultaneously guess which of the two challenge messages was encrypted, thus preserving unclonability in the functional encryption setting. 
Notably, we allow for the encryption of quantum messages. While UE is usually concerned with the protection of classical messages we maintain the properties of functional encryption for quantum messages while at the same time achieving unclonability for any message, classical or quantum. In \cref{qfe:thm:UEQFE} we prove that such a scheme can be constructed from any QFE scheme, such as our construction from \cref{qfe:sec:polyqfe}, assuming the existence of an unclonable encryption scheme which allows for quantum decryption keys, such as that given by \cite{AKY24}.

\begin{theorem}[Informal]
    Any single-query QFE scheme for n-qubit messages is an unclonable-indistinguishable secure functional encryption scheme with variable decryption keys assuming an unclonable-indistinguishable secure encryption scheme with quantum decryption keys for single bit messages.
\end{theorem}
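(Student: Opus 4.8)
The plan is to obtain unclonable security by masking the functional ciphertext with a quantum one-time pad and then protecting the pad itself with the single-bit unclonable encryption scheme, while leaving the functional evaluation to the $\QFE$ scheme. Concretely, to encrypt $\rho_{\msg}$ I would first compute a $\QFE$ ciphertext $c \gets \QFE.\Enc(\mpk,\rho_{\msg})$, sample uniform Pauli keys $(a,b)$, and form the masked state $\sigma = \Xgate^{a}\Zgate^{b}\, c\, (\Xgate^{a}\Zgate^{b})^{\dagger}$. I would then encrypt each bit of $(a,b)$ under the single-bit $\mathsf{UE}$ scheme of \cite{AKY24}, obtaining unclonable ciphertexts $\{u_i\}$, and set the $\mathsf{UFE}$ ciphertext to be $(\sigma,\{u_i\})$. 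A $\mathsf{UFE}$ function key for a circuit $C$ bundles a key $\QFE.\Keygen(\msk,C)$ with freshly and \emph{independently} sampled $\mathsf{UE}$ quantum decryption keys for the $u_i$; this independence is exactly what supplies the variable-key property. Decryption recovers $(a,b)$ by $\mathsf{UE}$-decryption, inverts the pad to recover $c$, and runs $\QFE.\Dec$ to output $C(\rho_{\msg})$. Correctness is immediate from the correctness of the two underlying schemes and the exactness of un-masking, and the resulting scheme inherits the ordinary functional-encryption security of the $\QFE$ scheme of \cref{qfe:sec:polyqfe}.

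The structural observation driving unclonable security is that the challenge bit $b$ enters the ciphertext only through $c = \QFE.\Enc(\mpk,\rho_{\msg_b})$, which is immediately one-time padded. Since a uniform Pauli mask sends any fixed state to the maximally mixed state, the marginal of $\sigma$ is independent of $b$, and the function keys are produced from $(\msk,C)$ alone and so are independent of $b$ as well. Hence the masked state together with the function keys carries no information about $b$: the only route to $b$ is to first recover the pad $(a,b)$ from $\{u_i\}$ and then un-mask $\sigma$ before applying $\QFE.\Dec$. This reduces the task of \emph{simultaneously} guessing $b$ to the task of simultaneously recovering the pad from the $\mathsf{UE}$ ciphertexts.

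The main work is to turn a successful $\mathsf{UFE}$ cloning adversary $(A,B,C)$, for which both $B$ and $C$ guess $b$, into a violation of the unclonable security of the single-bit $\mathsf{UE}$ scheme. First I would run the $\QFE$ setup inside the reduction, so that the reduction can itself generate $\QFE$ function keys for any requested circuit $C$; the $\mathsf{UE}$ quantum decryption keys handed to $B$ and $C$ come from the $\mathsf{UE}$ challenger, and the prepared $\QFE$ ciphertext is masked exactly as in the construction. Because $b$ is information-theoretically hidden unless the pad is recovered, the event that both $B$ and $C$ guess $b$ forces, up to negligible loss, that both recover essentially all of the pad, i.e.\ a two-party cloning attack on the encrypted pad. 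To lift the single-bit guarantee to the poly-many pad bits and to control the partial-information leakage in this step, I would invoke the plaintext-expansion hybrid of \cite{HKNY23}, switching the masked message (through its pad bits) one coordinate at a time and charging each switch to the single-bit unclonable-indistinguishability of $\mathsf{UE}$.

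The hardest part will be this reduction rather than the construction. The delicate point is that $B$ and $C$ need not fully recover the pad to gain advantage: each could extract partial information about $(a,b)$ and about the partially un-masked $\QFE$ ciphertext, and a naive argument does not rule out the two of them combining such partial advantages. The plaintext-expansion argument must therefore be carried out so that the simultaneous-guessing advantage is bounded coordinate-by-coordinate while preserving the bipartite $(B,C)$ structure and tolerating arbitrary quantum side information that $A$ may have distributed between the two branches; in particular, masking a single pad bit at a time must keep both the $B$-branch and the $C$-branch well-defined reductions against the single-bit game. A secondary technical point I expect to handle carefully is verifying that the one-time-pad decoupling remains valid when $\rho_{\msg}$ is a genuinely quantum state possibly entangled with $A$'s auxiliary registers, so that the independence of $\sigma$ from $b$ holds at the level of the joint state and not merely for classical messages.
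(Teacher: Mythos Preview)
Your construction has a structural gap that the paper's approach is specifically designed to avoid. The underlying single-bit scheme of \cite{AKY24} is \emph{secret-key}: its security is stated for an adversary who never sees the encryption key $\ek$. In your scheme the $\{u_i\}$ are produced inside $\Enc$, but $\Enc$ in a public-key $\QFE$ receives only $\mpk$. So either the UE encryption keys live in $\mpk$ and are public (destroying the secret-key UE guarantee), or they are sampled fresh per ciphertext, in which case $\Keygen(\msk,C)$---which must be run independently of any particular ciphertext---has no way to output matching quantum decryption keys. Your phrase ``freshly and independently sampled UE quantum decryption keys'' does not help: the \cite{AKY24} key generation is pseudodeterministic and produces copies of one fixed $|\dk\rangle$ tied to a fixed $\ek$, not independent keys compatible with the same $\ek$. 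There is consequently no consistent way to make your $\Enc$ and $\Keygen$ agree on UE keys in the public-key setting the theorem targets.

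The paper sidesteps this by never invoking UE in the real construction. The actual $\Enc$ just runs $\QFE.\Enc$ on the padded message with a flag $f=0$, and $\Keygen$ outputs a $\QFE$ key for a circuit $U_{(C,a,b)}$ with uniformly random strings $(a,b)$ hardcoded---strings that are ignored when $f=0$. The UE scheme appears only in a hybrid with $f=1$: there the ciphertext carries EPR halves in the slot reserved for the quantum decryption key, and the reduction later \emph{teleports} $|\dk\rangle$ into those slots, obtaining correction keys that it hardcodes as the $(a,b)$ of the function key. The mode switch from $f=0$ to $f=1$ is charged to single-query IND-security of $\QFE$ (in a two-player variant), and the $f=1$ hybrid then reduces directly to the single-bit UE game---so no plaintext-expansion argument is needed. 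The uniform randomness of teleportation correction keys is precisely what makes each issued function key distinct; this is why the result is stated for \emph{variable} decryption keys rather than for full unclonable encryption.
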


When the function secret keys are fixed to be the identity circuit this implies a public-key unclonable-indisintguishable secure encryption scheme with variable decryption keys. In contrast to the standard definition of unclonable encryption here the $\Keygen$ algorithm is run twice to produce independently sampled secret keys. We assume that the randomness can be chosen in such a way that the same encryption key is produced with different decryption keys. 

\begin{corollary}[Informal] There exists a public-key unclonable-indistinguishable secure encryption scheme with variable decryption keys for n-bit messages assuming a single-query QFE scheme and an unclonable-indistinguishable secure encryption scheme with quantum decryption keys for single bit messages.
\end{corollary}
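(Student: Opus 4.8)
The plan is to instantiate the unclonable functional encryption scheme guaranteed by \cref{qfe:thm:UEQFE} and then specialize its supported circuit class to the single circuit $\Id$, the identity on $n$-bit inputs. Recall that \cref{qfe:thm:UEQFE} provides, from any single-query QFE scheme together with an unclonable encryption scheme with quantum decryption keys for single-bit messages, an Unclonable QFE scheme $(\Setup, \Keygen, \Enc, \Dec)$ satisfying the unclonable-indistinguishability notion of \cref{qfe:def:UEQFE}. First I would define the candidate public-key UE scheme by keeping $\Setup$ and $\Enc$ unchanged, taking the public encryption key to be $\mpk$, and defining each decryption key to be a fresh output of $\Keygen(\msk, \Id)$; decryption of a ciphertext $\rho_\ct$ under such a key is simply the QFE decryption $\Dec(\sk_{\Id}, \rho_\ct)$.

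Correctness follows immediately from the correctness of the Unclonable QFE scheme: since $\Id(\rho_\msg) = \rho_\msg$, running $\Dec(\sk_{\Id}, \rho_\ct)$ recovers the plaintext, and in particular the encrypted string $\msg \in \bin^n$, for every key $\sk_{\Id}$ produced by $\Keygen$. The variable-key structure of \cite{KT20} is then obtained essentially for free: because $\Keygen(\msk, \Id)$ is randomized, running it twice yields two independently sampled decryption keys, while $\Setup$ fixes a single shared $\mpk$ that serves as the common encryption key. This matches the requirement, noted above the corollary statement, that the randomness be arranged so that the same encryption key is compatible with distinct decryption keys.

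For security I would reduce the unclonable-indistinguishability of the resulting UE scheme to that of the Unclonable QFE scheme with the challenge circuit fixed to $\Id$. The UE cloning game with a tripartite adversary $\Adv = (A, B, C)$---where $A$ chooses $\msg_0, \msg_1$, receives $\rho_\ct$, and splits it between $B$ and $C$, who are each handed an independently generated decryption key---is syntactically the Unclonable QFE game of \cref{qfe:def:UEQFE} in which both $B$ and $C$ receive independent function keys $\sk_{\Id}$. Hence any $\Adv$ winning the UE game (both $B$ and $C$ correctly guessing the encrypted message) is exactly an adversary against the Unclonable QFE scheme for the identity circuit, so its advantage is bounded by the unclonable-indistinguishability advantage granted by \cref{qfe:thm:UEQFE}.

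The main conceptual point to verify is that $\Id$ is an admissible challenge circuit here and that fixing $C = \Id$ does not trivialize the game. Unlike ordinary IND-security, unclonable-indistinguishability does \emph{not} require $\Id(\msg_0)$ and $\Id(\msg_1)$ to coincide, so there is no obstruction to an adversary who is explicitly allowed to learn the full message through decryption---indeed this is precisely the UE setting, where each of $B$ and $C$ alone may decrypt but the two cannot do so simultaneously. I expect the remaining care to be bookkeeping: confirming that the single-query restriction of the underlying QFE suffices (each of $B$, $C$ uses exactly one key), that $\Id$ lies in the poly-sized circuit class of the construction in \cref{qfe:sec:polyqfe}, and that the classical message space $\bin^n$ is covered by the $n$-qubit message support of the scheme.
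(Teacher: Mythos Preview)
Your proposal is correct and follows essentially the same approach as the paper: instantiate the unclonable functional encryption scheme from \cref{qfe:thm:UEQFE}, fix the circuit to the identity, and observe that correctness and security specialize directly. The paper's own argument is in fact terser than yours, noting only that the unclonable QFE game imposes no admissibility constraint on key queries, so $B$ and $C$ may each request an identity key even when $\rho_{\msg_0}\neq\rho_{\msg_1}$; your final paragraph makes exactly this point.
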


\subsubsection{Outline of the Unclonable QFE Construction}
The construction is inspired by \textcite{HKNY23} and the universal construction of Waters and Wichs \cite{C:WatWic23}. In  \cite{HKNY23} they show a universal plaintext expansion result for unclonable encryption: A construction based solely on quantum randomized encodings is a mulit-bit unclonable encryption scheme if there exists a single bit unclonable-indistinguishable secure encryption scheme. Unfortunately, the existence of such a scheme is not yet known in the plain model. 
Therefore we instead rely on an unclonable encryption scheme with quantum decryption keys which was recently constructed by \textcite{AKY24}.

The idea of our construction is that the ciphertext has two modes indicated by a flag bit in the plaintext. In the first mode $(f=0)$, which is the mode the real encryption procedure uses, the plaintext is simply encrypted under the $\QFE$ scheme and padded to a certain length:
$$ \rho_{\ct_0} = \QFE.\Enc(\rho_\msg \otimes |0\rangle\langle0|^{O(\lambda)} \otimes |f=0\rangle\langle f=0|)$$

To prove security we want to reduce to the unclonable encryption scheme with quantum secret keys $\UEQ$. Therefore we show that, due to the security of the $\QFE$ scheme, the first ciphertext is indistinguishable to the following ciphertext which makes use of the $\UEQ$ scheme. Let $\ek, |\dk\rangle$ be the encryption and decryption keys of the $\UEQ$ scheme and let $\rho_{UE} \leftarrow \UEQ.\Enc(1^\lambda, b)$ be an unclonable encryption of a bit $b \leftarrow \bin$. Define $\rho_{\msg_b} = \rho_\msg$ and $\rho_{\msg_{1-b}}$ an arbitrary n-qubit state. Then a ciphertext in the second mode is created as follows:

$$\rho_{\ct_1} = \QFE.\Enc(\rho_{\msg_0} \otimes \rho_{\msg_1} \otimes |\dk\rangle \langle\dk| \otimes \rho_{UE}\otimes  |f=1\rangle\langle f=1|) $$

Now we can define a class of circuits $U(C,\cdot)$ that checks the last bit of the message and in the case of $f=0$ outputs the message $C(\rho_{\msg})$. In case of $f =1$ the circuit decrypts the $\rho_{UE}$ ciphertext to get $b$, selects the message $\msg_b$ and outputs $C(\rho_{\msg_b})$. Indistinguishability of the ciphertexts $\rho_{\ct_0}$ and $\rho_{\ct_1}$ follows from the security of the QFE scheme. 

During the reduction we encounter the issue that we have to create the $\QFE$ ciphertext before we learned the decryption key $|\dk\rangle$ for the single-bit unclonable encryption scheme. Only in the second phase of the experiment is this key revealed. At this point we have to reveal the decryption key to the adversary who is attacking the $\QFE$ construction.

This part of the proof is reminiscent of the transformation given in \cite{TCC:AnaKal21} who also use the mode change via a flag bit trick. They use classical functional encryption to transform secret-key unclonable encryption into public-key unclonable encryption. \footnote{They also explain very well why a normal public-key encryption scheme is not sufficient but a functional encryption scheme is.} 

In their construction  it is possible to hardcode the classical decryption key of an unclonable encryption scheme into the circuit description and then create a function secret key for this circuit to complete the proof. Unfortunately we cannot directly apply same technique as \cite{TCC:AnaKal21}. In our case the decryption key is a quantum state and our QFE scheme does not support hardcoding quantum states into the circuit description.  

To solve this issue we create 2n EPR pairs and put one qubit of each EPR pair in the ciphertext. Later we can teleport the quantum decryption keys into the ciphertext and hardcode the correction keys of the teleportation into the function secret key. The circuit applies the correction keys to the decryption key and can then use it to decrypt the $\rho_{UE}$ ciphertext. Hardcoding the teleportation keys into the circuit introduces a randomization of the function secret key which is why we do not achieve fully fledged unclonable encryption but only a version with variable decryption keys. Furthermore we have to make sure that each part of the reduction $B$ and $C$ who each obtain a quantum decryption key $|\dk\rangle$ can create a valid decryption key for their part of the reduction. Since the EPR pairs for the teleportation procedure cannot be held by both $B$ and $C$ at the same time we need to provide two teleportation slots. Then $B$ and $C$ each individually teleport the decryption key into the ciphertext and create a function secret key based on their teleportation keys. The teleportation keys $(a,b)$ are uniformly random bits, so the function secret keys that depend on them are indistinguishable from regular function secret keys that were created with freshly sampled bits.

In the second step of the proof we construct a ciphertext with the the flag bit set to 1  to reduce multi-bit security of our unclonable functional encryption scheme to the single bit security of the unclonable encryption scheme of \cite{AKY24}.

\subsection{QMIFE and Applications to Quantum Indistinguishability Obfuscation}\label{QMIFE:subsec:intro}
In the classical setting much research has focused on improving on the trade-off inherent between the size of allowable circuits and the length of the ciphertext. Recall the schemes constructed in \cite{CCS:SahSey10, C:GorVaiWee12}, as well as our scheme given in \cref{qfe:sec}, are considered non-succinct as the size of the ciphertext must be at least as large as the circuit description of allowable circuits. In \cite{EC:GGGJKH14}, a stronger variant on FE, known as multi-input functional encryption (MIFE) is introduced. In \cite{EC:GGGJKH14}  it is shown that MIFE  enables applications towards indistinguishability obfuscation without the requirement of succinctness.

\subsubsection{MIFE} Multi-Input Functional Encryption (MIFE) extends traditional functional encryption to handle functions over multiple ciphertexts, potentially encrypted under different keys. This general framework allows for the computation of aggregate information from various data sources, going beyond single-input functional encryption. In MIFE, the owner of a master secret key (MSK) can derive special function keys that enable the evaluation of an n-ary function $f(x_1, \dots, x_n)$ on ciphertexts corresponding to different messages, even when encrypted by different parties. Such multi-input functionality has been shown to allow for many powerful applications such as multi-party delegated computation, and construction of indistinguishability obfuscation (iO) and virtual black-box obfuscation (VBBO).

\subsubsection{QMIFE}
Analogously, a quantum multi-input functional encryption (QMIFE) scheme is a QFE scheme that can evaluate a function on multiple, individually encrypted quantum inputs. In our definition of QMIFE we switch to the secret-key version of functional encryption. Therefore the ciphertexts cannot be encrypted by anyone but only by the holder of encryption secret keys. Additionally the scheme is tagged with an encryption limit $k$ which indicates how many ciphertexts per encryption key can be obtained. 

The IND-definition for QMIFE readily generalizes using methods from the IND-security definition for QFE: For any combination of inputs and circuit queries the restriction of admissible queries has to be fulfilled. In the SIM-security definition a new uniquely quantum challenge arises. Informally we want to give the simulator exactly the information that we want to allow a participant in the QMIFE scheme to learn. In the classical setting this corresponds to the output of the quantum circuit for any combination of challenge inputs. In the quantum setting we have the problem that different combinations of inputs are possible but the quantum ciphertexts are not necessarily reusable. If we give the simulator access to all possible circuit outputs we are giving him too much information since obtaining all outputs might not be a physical process. On the other hand, allowing the simulator to obtain only one output is too little information.

For instance, an adversary could attempt to run decryption $\Dec(\sk_C, \rho_{\ct_1}, \rho_{\ct_2})$ on two ciphertext registers, measure one register, uncompute the decryption, and then swap the first register with a new state. We solve this issue by giving the simulator access to a trusted party that holds the input messages. The simulator can query the trusted party by defining a circuit and indices to select the input messages. Then the trusted party carries out the circuit evaluation, moves the output into a new register by applying a CNOT gate to every qubit and uncomputes the circuit on the input registers. The trusted party returns the output to the adversary and proceeds in the same manner for additional queries. Now the state that is obtained by the simulator is entangled with the trusted party and any measurements that might be performed by the simulator disturb the state and influence future circuit evaluations. This simulates the information we expect a recipient of a number of ciphertexts and function keys to be able to compute without breaking the security of the QMIFE scheme.

Our formal presentation of QMIFE, including both IND-security and SIM-security definitions, is given in \cref{qfe:sec:qmife}. Additionally, our treatment of QMIFE covers multi-query QFE as a special case. Our main application is given in \cref{qfe:thm:indtoqio} and \cref{QMIFE:thm:VBBO} which provide the following quantum analogue of the celebrated reductions to iO and VBBO given in \cite{EC:GGGJKH14}.

\begin{theorem}[Informal] 
Any single-query non-adaptive IND-secure QMIFE unconditionally implies qiO.
\end{theorem}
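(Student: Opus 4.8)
The plan is to mirror the classical reduction of \textcite{EC:GGGJKH14} from multi-input functional encryption to indistinguishability obfuscation, instantiated here with a single-query non-adaptive IND-secure $\QMIFE$ scheme. To obfuscate a classical circuit $C$ on $n$-bit inputs, I would set up an $(n{+}1)$-ary secret-key $\QMIFE$ and use its slots to decouple the circuit from the input: encrypt a description of $C$ in slot $0$, and for every input position $i \in [n]$ and every bit $b \in \bin$ encrypt $b$ in slot $i$, obtaining a ciphertext $\ct_{i,b}$. Generate a single function key $\sk_U$ for the universal circuit $U$ defined by $U(C, x_1, \dots, x_n) = C(x_1 \cdots x_n)$. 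The obfuscated program $\Obf(C)$ is then the tuple consisting of $\ct_0$, all the $\ct_{i,b}$, and $\sk_U$. To evaluate on input $x$, one selects $\ct_0$ together with $\ct_{i,x_i}$ for each $i$ and runs $\QMIFE.\Dec(\sk_U, \cdot)$, recovering $C(x)$. This uses exactly one function-key query, which is precisely what lets us rely on single-query security.

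Correctness is immediate from the correctness of $\QMIFE$, since decryption of the selected ciphertexts under $\sk_U$ returns $U(C, x) = C(x)$. For obfuscation security I would reduce directly to the non-adaptive IND-security of $\QMIFE$. Given two functionally equivalent circuits $C_0 \equiv C_1$ of the same size, the two candidate obfuscations $\Obf(C_0)$ and $\Obf(C_1)$ differ only in the slot-$0$ message, while all input-slot ciphertexts and the key $\sk_U$ are identical. Because $C_0$ and $C_1$ compute the same function, for every admissible combination of the published input ciphertexts we have $U(C_0, x) = C_0(x) = C_1(x) = U(C_1, x)$. Hence the two challenge message vectors satisfy the admissibility condition of \cref{qfe:def:admis}, and a single-query non-adaptive IND distinguisher for $\QMIFE$ would immediately distinguish the two obfuscations. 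Since the challenge messages, the fixed input-slot messages, and the lone key query are all fixed in advance, non-adaptive security suffices.

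The delicate point is that admissibility must be checked over \emph{all} combinations of the released input-slot ciphertexts, exactly as the quantum IND definition for $\QMIFE$ demands; functional equivalence of $C_0$ and $C_1$ is precisely what guarantees the outputs coincide for each such combination. Restricting attention to classical circuits keeps this step clean: the outputs $C_b(x)$ are classical strings, so the corresponding output states are literally identical and the admissibility requirement holds even in the presence of the quantum side information that the $\QMIFE$ definition permits. This sidesteps the subtleties discussed around admissible queries and avoids any need to reason about trace distance of genuinely quantum outputs.

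The main obstacle I expect is reconciling the secret-key, no-cloning nature of $\QMIFE$ with the reusability an obfuscated program must possess. Classically each $\ct_{i,b}$ is reused across every input $x$ whose $i$-th bit equals $b$; quantumly a ciphertext cannot be cloned and decryption may consume it, so a naive obfuscation could only be evaluated a bounded number of times. I would address this by exploiting the per-key encryption limit $k$ of $\QMIFE$ to release enough independently generated copies of each input-slot ciphertext for the intended number of evaluations, and by appealing to the qiO correctness notion in which the obfuscated program is itself a quantum state. Making the bookkeeping between $k$, the input length $n$, and the number of supported evaluations line up — while keeping every released ciphertext admissible in the IND game — is the part that requires the most care, and is where the quantum setting genuinely departs from \cite{EC:GGGJKH14}.
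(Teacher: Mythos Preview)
Your reduction is the right classical template and indeed the paper remarks that for circuits with \emph{classical} inputs one can do exactly what you describe. The gap is that qiO here is for quantum circuits acting on quantum input states $\rho_x$, and your construction gives no way to feed an unknown quantum input into an obfuscation whose ciphertexts were all fixed at obfuscation time. You cannot pre-encrypt ``both possible values'' of a qubit the way you can for a classical bit.

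The paper's construction solves this with a teleportation gadget. For each input qubit the obfuscator prepares an EPR pair, encrypts one half under $\QMIFE$, and publishes the other half in the clear as part of $\tilde C$. The function key is for the circuit
\[
U(C,\rho_1,\dots,\rho_n,a_1,b_1,\dots,a_n,b_n)=C\bigl(X^{a_1}Z^{b_1}\rho_1,\dots,X^{a_n}Z^{b_n}\rho_n\bigr).
\]
At evaluation time the user teleports $\rho_x$ through the published halves, obtaining classical correction bits $(a_i,b_i)$, and then uses exactly your $\ct_{i,b}$ trick to select encryptions of these bits. So your bit-selection gadget survives, but it encodes the \emph{Pauli correction keys}, not the input itself. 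The security argument must then check admissibility in the sense of \cref{qfe:def:compatabilityQ} with the published EPR halves sitting in the unencrypted register $U$; this is where the functional equivalence $\|C_0-C_1\|_\diamond=0$ is used, and where the quantum admissibility definition (rather than the classical $f(\msg_0)=f(\msg_1)$) is genuinely needed.

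On reusability: the paper does not attempt to salvage multi-use evaluation. It simply accepts that the EPR halves are consumed and that the obfuscation is single-use, observing this still meets the qiO correctness requirement as stated. Your idea of increasing $k$ would give more copies of the classical-bit ciphertexts but does nothing for the unclonable EPR halves, which are the real bottleneck once quantum inputs are allowed.
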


\begin{theorem}[Informal]
 Any single-query non-adaptive SIM-secure QMIFE scheme unconditionally implies virtual black box quantum obfuscation.   
\end{theorem}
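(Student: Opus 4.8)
The plan is to mirror the classical reduction of \textcite{EC:GGGJKH14} from multi-input functional encryption to virtual black-box obfuscation, replacing each classical ingredient with its QMIFE counterpart and paying careful attention to the no-cloning constraints imposed by the secret-key, single-query quantum setting. The reason SIM-security (rather than the weaker IND-security used for qiO) is the right hypothesis is precisely that a \emph{simulator} for the encryption scheme is what yields a \emph{simulator} for the obfuscated program, which is what VBB demands.

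First I would fix the object to be obfuscated: a circuit $C$ with a classical description acting on $n$-bit classical inputs (the output may be quantum). I would instantiate an $(n+1)$-ary secret-key QMIFE scheme and define the obfuscation of $C$ as follows. Run $\Setup(1^\lambda)$ to obtain the master secret and encryption keys, and generate the single function key $\sk_U$ for the \emph{universal} circuit $U$ which, on a circuit description in slot $0$ and bits $b_1,\dots,b_n$ in slots $1,\dots,n$, outputs $U(c,b_1,\dots,b_n)=c(b_1\cdots b_n)$. Encrypt the description of $C$ into slot $0$ to obtain $\ct_0$, and for every position $i\in[n]$ and bit value $\beta\in\bin$ encrypt $\beta$ into slot $i$ to obtain $\ct_{i,\beta}$. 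The obfuscation is the tuple $(\sk_U,\ct_0,\{\ct_{i,\beta}\}_{i\in[n],\beta\in\bin})$, and evaluation on input $x=x_1\cdots x_n$ is $\Dec(\sk_U,\ct_0,\ct_{1,x_1},\dots,\ct_{n,x_n})$. Functional equivalence $\Eval(\Obf(C),x)=C(x)$ is immediate from QMIFE correctness, and the polynomial slowdown follows since there are only $2n+1$ ciphertexts and one key, each of polynomial size.

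For security I would invoke the SIM-security of the QMIFE scheme. Since the obfuscation uses exactly one function key and all plaintexts (the description of $C$ and the bit values) are fixed before the key is issued, the single-query, non-adaptive SIM guarantee applies: there is a QMIFE simulator that reproduces the adversary's view given only access to the trusted party, which on a queried circuit and choice of input indices returns the corresponding evaluation of $U$. Because $U(C,x_1,\dots,x_n)=C(x)$, selecting indices in the trusted-party query is precisely selecting a classical input $x$, so each query is exactly one evaluation of $C$. I would therefore build the VBB simulator $\Sim^{C}$ by running the QMIFE simulator and answering each of its trusted-party queries with a single call to the oracle $C$. The indistinguishability of the real obfuscation from $\Sim^{C}$'s output is then exactly the indistinguishability of the real and ideal QMIFE experiments, completing the argument.

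The main obstacle I expect is reconciling the non-reusability of quantum ciphertexts with the notion of a reusable obfuscated program, and making the trusted-party/oracle correspondence exact. Because slot $0$ holds a single ciphertext $\ct_0$ that is consumed by $\Dec$, the construction is honestly only single-evaluation, so the resulting VBB statement is for an obfuscation that a distinguisher may evaluate a bounded number of times (tied to the QMIFE encryption limit $k$), and the oracle-query budget of $\Sim^{C}$ must be matched to that bound. The delicate point is that the trusted-party mechanism in the QMIFE SIM definition --- evaluate, copy the output out with transversal CNOTs, then uncompute on the input registers --- is exactly what captures \emph{which} evaluations the adversary can physically extract from the consumed ciphertexts, and I would need to verify that this disturbance-based accounting coincides with plain oracle access to $C$, so that neither side gains extra information. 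Checking that admissibility of the simulator's queries is automatic here (every query is a legal universal evaluation) is then routine by comparison with the classical case.
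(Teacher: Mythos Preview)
Your high-level reduction --- build the VBB simulator by running the QMIFE simulator and forwarding its trusted-party queries to the black-box oracle for $C$ --- is exactly the paper's argument, and your correctness/efficiency reasoning is fine. The paper's proof of \cref{QMIFE:thm:VBBO} is in fact just two sentences: reuse the obfuscation construction from the qiO theorem, and let the VBB simulator invoke the QMIFE simulator, answering trusted-party queries via the oracle.

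The one substantive difference is in the \emph{construction} you reuse. You restrict to circuits with classical $n$-bit inputs and encrypt the two bit values per slot; the paper's construction (shared with \cref{qfe:thm:indtoqio}) handles \emph{quantum} inputs by additionally encrypting one half of $n$ EPR pairs and publishing the other halves in the clear, so that the evaluator teleports $\rho_x$ into the ciphertexts and then selects bit-ciphertexts for the Pauli correction keys $(a_i,b_i)$. This is why the paper needs $k=2$ (two ciphertexts per slot: the bits $0$ and $1$) and why its VBB simulator must also manufacture the unencrypted EPR halves before calling the QMIFE simulator for the rest. Your version is the natural special case the paper explicitly mentions (``if the quantum circuit belongs to a class of circuits that only take classical inputs we can avoid the use of the teleportation helper state''), so nothing you wrote is wrong --- it just proves QVBB for a narrower circuit class than the paper's construction does. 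If you want the full statement, add the teleportation gadget; the security argument is unchanged because the EPR halves are public and independent of $C$.

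Your worries about matching the trusted-party's CNOT-and-uncompute semantics to plain oracle access, and about reusability, are real but the paper does not resolve them either: it simply notes the construction is single-use and leaves the oracle correspondence implicit.
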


\onlyeprint
\subsection{Additional Related Work}
\subsubsection{Functional Encryption}
While we are the first to consider functional encryption for quantum circuits there has been a series of works enhancing classical functional encryption using quantum techniques. By adding the possibility to certifiably delete the ciphertext of the FE scheme \cite{EC:HKMNPY24} construct certified everlasting functional encryption.
\cite{AC:KitNis22} define and create functional encryption with secure key leasing from any secret-key FE and they construct FE with single decryptor against bounded collusions assuming sub-exponentially secure indistinguishability obfuscation and the sub-exponential hardness of the learning with errors (LWE) problem.
Using different techniques \textcite{CakGoy23} construct functional encryption with copy protected secret keys against unbounded collusions from sub-exponentially secure indistinguishability obfuscation, one-way functions and LWE.

\subsubsection{Unclonable Encryption}
The notion of unclonable encryption was formally defined by \cite{BL20}, previously a similar notion was introduced by \cite{Got03}. Since then the gold standard of indistinguish\-able-unclonable secure encryption with negligible adversarial advantage has only been achieved in the quantum random oracle model by~\cite{C:AKLLZ22} and a construction in the plain model remains an open question. Various alternative notions of unclonable encryption have been achieved such as device-independent unclonable encryption with variable secret keys~\cite{KT20}, unclonable encryption with interaction~\cite{BC23}, unclonable encryption with quantum decryption keys~\cite{AKY24}. The encryption procedure of~\cite{KT20} is defined as an interactive process, although the interaction can be removed in the trusted device setting. The relationship of unclonable encryption to other primitives that require a form of unclonability such as quantum money \cite{wiesnercoding, STOC:AarChr12} and copy protected programs \cite{Aaronson09, TCC:AnaKal21, TCC:BJLPS21, C:AKLLZ22, CMP24, STOC:ColGun24} has also been studied. 

We note that the~\cite{KT20} scheme is only proven to achieve the weaker notion of unclonable security, where a successful attack requires both parties to guess the entire message. In our work all security notions obey the stronger uncloneable-indisinguishability. For brevity we often simply refer to this as unclonable security in the text.

\subsubsection{Quantum Obfuscation} \textcite{AF16} provide a quantum analouge of the classical impossibility result for virtual black box obfuscation (VBB), showing that the notion of quantum virtual black box obfuscation (QVBB) is also impossible to achieve. Furthermore, \cite{C:ABDS21} show that a quantum scheme cannot achieve VBB for classical circuits either. The first feasability result for qiO was obtained by \textcite{LC:BroKaz21} for circuits with log-many non-clifford gates relying on classical iO. Since then several works have put forth candidate constructions using a wide variety of techniques. \textcite{ITCS:BarMal22} construct qiO for null circuits\footnote{Null quantum circuits are circuits that reject on every input with overwhelming probability.} assuming classical VBB. \cite{STOC:BKNY23} construct QVBB for pseudo-deterministic circuits\footnote{A pseudo-deterministic circuit takes as input a classical string and outputs a deterministic bit with overwhelming probability taken over the randomness introduced by the quantum circuit.} with a classical description assuming classical VBB. \cite{STOC:BBV24} improve upon this result by constructing ideal QVBB\footnote{Ideal QVBB is very similar but slightly stronger than QVBB.} for pseudo-deterministic circuits with a quantum description assuming classical VBB. Since classical VBB is known to be impossible these constructions are only candidates for qiO meaning we can hope that if the classical VBB is instantiated with classical iO the constructions can be proven secure with new ideas.

\subsection{Open Questions}
An important open question is the construction of quantum indistinguishability obfuscation. In this work we make a step towards exploring the relationship of quantum functional encryption to qiO via multi-input quantum functional encryption. It is an interesting open question if QMIFE can be constructed by for example leveraging classical multi-input functional encryption which can be constructed from classical iO, a reasonable assumption in the construction of qiO. 

Another open question that this work raises are enhanced versions of quantum functional encryption. A QFE scheme with succinct ciphertext would have interesting applications such as delegated computation~\cite{STOC:GKPVZ13} and can potentially provide another route towards qiO. In the classical setting techniques to transform succinct FE to iO haven been explored extensively~\cite{FOCS:BitVai15, AJS15, GGHRSW16, C:AnaJai15} and might be applicable in the quantum setting to. 

Lastly we only construct QFE for a single key query and leave it as an open problem to construct QFE secure against multiple key queries.

\subsubsection{Acknowledgements} We would like to thank Henry Yuen for helpful discussions. Arthur Mehta is supported by NSERC Alliance Consortia Quantum grants, reference number:
ALLRP 578455 - 22 and the NSERC Discovery Grants Program. 
\section{Preliminaries}

For an integer $n \in \N$ we write $[n] = \{1, \ldots, n\}$. Let $p(\cdot)$ denote a polynomial. Let $negl(\cdot)$ denote a negligible function $f$, i.e. for every constant $c \in \N$ there exists a positive integer $n_0$ such that for all $n>n_0$, $f(n) < n^{-c}$.

Let $\Hilb_n$ denote a finite dimensional Hilbert space of dimension $2^n$ and let a pure quantum state be denoted by a vector $|\psi\rangle \in \Hilb$. Let a mixed quantum state be denoted as $\rho \in D(\Hilb_n)$ where $D(\Hilb_n)$ is the set of density operators on $\Hilb_n$ which are positive semidefinite and have trace equal to 1. A general quantum operation is a completely positive trace preserving (CPTP) map $\Phi: D(\Hilb_n)\rightarrow D(\Hilb_m)$.

For a classical string $x \in \bin^n$ we let $|x| = n$ denote the length of the string and for a quantum state $\rho \in D(\Hilb_n)$ we let $|\rho| = n$ denote the size, i.e. the number of qubits.  

Let $\Tr$ denote the trace operator. Let the partial trace be denoted as $\Tr_{(b)}[\rho_{ab}] = \rho_a = \Tr(\rho_b) \rho_a$. We write $\rho_{x_i}$ to denote taking the partial trace of everything but the i-th qubit $\Tr_{(\Bar{i})}(\rho_x) = \rho_{x_i}$. We write $\rho^A$ to denote that the qubits in $\rho$ are conceptually grouped together in a register $A$.

A family of quantum circuits $\{C_\lambda\}_{\lambda \in \N}$ is called uniform if there exists a deterministic Turing machine running in time poly($\lambda$) such that on input $1^\lambda$ it outputs a description of $C_\lambda$.
A quantum polynomial time (QPT) algorithm is a polynomial-time uniform family of quantum circuits.

A universal gate set for quantum circuits is the Clifford group consisting of the controlled-not gate $\mathsf{CNOT}$, phase gate $\mathsf{P}$ and Hadamard gate $\mathsf{H}$ with additionally the T-gate $\mathsf{T}$. Let $\Xgate$ and $\Zgate$ be the following gates
$$ \Xgate = \left(\begin{matrix}
    0 & 1 \\
    1 & 0 \\
\end{matrix}\right) \quad \Zgate = \left(\begin{matrix}
    1 & 0 \\
    0 & -1 \\
\end{matrix}\right)$$

\subsection{Indistinguishability of Quantum States}
The \textit{trace distance} between two quantum states $\rho, \sigma \in \Dist(\Hilb_n)$ is defined as  
$$\T(\rho, \sigma ) = \frac{1}{2} \Tr\left( \sqrt{(\rho-\sigma )^\dag(\rho- \sigma )}  \right)$$.

Let $\mathcal{R} = \{\rho_n\}_{n\in\N}$ and $\mathcal{S} = \{\sigma_n\}_{s\in\N}$ be two ensembles of quantum states such that $\rho_n$ and $\sigma_n$ are $n$-qubit states. 
$\mathcal{R}$ and $\mathcal{S}$ are called \textit{perfectly indistinguishable} if for all $n$: $\rho_n = \sigma_n$.

$\mathcal{R}$ and $\mathcal{S}$ are called \textit{satistically indistinguishable} if there exits a negligible function negl such that for all sufficiently large $n$: $$\T(\rho_n,\sigma_n) \leq negl(n)$$.

$\mathcal{R}$ and $\mathcal{S}$ are called \textit{computationally indistinguishable} if there exits a negligible function negl such that for all QPT distinguisher $\Dist$ and all states $\rho_n \in \mathcal{R}$ and $\sigma_n \in \mathcal{S}$:
 $$ \left| \Pr[\Dist(\rho_n) = 1] - \Pr[ \Dist(\sigma_n) = 1] \right| \leq negl(n)$$.

The diamond norm for two quantum channels $\Phi$ and $\Psi$ mapping a n-qubit quantum state to a m-qubit quantum state is defined as follows: 
$$|| \Phi - \Psi||_\diamond = \max_{\rho \in D(\Hilb^{2n})} \T((\Phi \otimes I)\rho - (\Psi \otimes I)\rho )$$

\subsection{Quantum Randomized Encodings}
We recall the following definitions from \cite{BY20}.

\paragraph{Classical Description of Quantum Circuits}
A quantum circuit is a tuple $(\mathcal{P}, \mathcal{G})$ where $\mathcal{P}$ is the topology of the circuit and $\mathcal{G}$ is a set of unitaries. 
The topology of a quantum circuit is a tuple $(\mathcal{B}, \mathcal{I}, \mathcal{O}, \mathcal{W}, \texttt{inwire}, \texttt{outwire}, \mathcal{Z}, \mathcal{T})$.
\begin{enumerate}[align=left, leftmargin=2.8em]
    \item $\mathcal{I}$ is an ordered set of input terminals.
    \item $\mathcal{Z}$ is a subset of $\mathcal{O}$ which indicates ancilla qubits that are to be initialised to the state $|0\rangle$.
    \item $\mathcal{O}$ is an ordered set of output terminals.
    \item $\mathcal{T}$ is the set of output terminals to be traced out. 
    \item $\mathcal{W}$ is the set of wires. 
    \item $\mathcal{B}$ are placeholder gates. For every $g \in \mathcal{B}$ $\texttt{inwire(g)}$ describes an ordering of input wires $w \in \mathcal{W}$ and $\texttt{outwire(g)}$ describes an ordering of output wires $w \in \mathcal{W}$.  For every $g \in \mathcal{B}$ the number of input and output wires is equal.
    \item The disjoint sets $\mathcal{I}, \mathcal{O} ,\mathcal{B}$ form the nodes of the circuit. Together with the set $\mathcal{W}$ as edges they define a directed acyclic graph. 
\end{enumerate}

The gate set $\mathcal{G}$ defines a unitary of the appropriate size for every node in $\mathcal{B}$. The evaluation of a circuit $C= (\mathcal{P}, \mathcal{G})$ on state $\rho$ of size $|\mathcal{I}|$ is defined as $C(\rho, |0\rangle^{\otimes |\mathcal{Z}|}) = \sigma$ where $\sigma$ resulted from applying the gates in $\mathcal{G}$ according to the topology and tracing out the qubits specified by $\mathcal{T}$. The size of a quantum circuit is the number of wires in $\mathcal{W}$. The descritpion of quantum operations by a quantum circuit describes a CPTP map.

\begin{definition}{Quantum Randomized Encodings (QRE)}

    Let $(\Encode, \Decode, \Sim)$ be QPT algorithms. Let $C$ denote a class of general quantum circuits.
\begin{itemize}[align=left]
    \item[$\Encode(F,\rho_x,r,\rho_e) \rightarrow \hat{F}(\rho_x,r)$:] Encode$(F,\rho_x,r,\rho_e)$ takes a function $F \in C$, quantum input $\rho_x$, classical randomness r and a set of EPR pairs $\rho_e$ and outputs a quantum randomized encoding $\hat{F}(\rho_x,r)$.
    \item[$\Decode(\hat{F}(\rho_x,r), T) \rightarrow F(\rho_x)$:] Decode takes as input a quantum randomized encoding $\hat{F}(\rho_x,r)$ and the topology $T$ of the function F and outputs $F(\rho_x)$. 
    \item[$\Sim(F(\rho_x), T)$:] Sim takes as input the value $F(\rho_x)$ and the topology of F and simulates a quantum randomized encoding.
\end{itemize}
A QRE scheme fulfills the following properties:
\begin{itemize}
    \item \textbf{Correctness} For all quantum states $(\rho_x,\rho_z)$ and randomness r it holds that 
    $$(\Decode(\hat{F}(\rho_x,r),T), \rho_z)= (F(\rho_x),\rho_z)$$.
    \item \textbf{(t,$\epsilon$)-Privacy} For all quantum states $(\rho_x,\rho_z)$ and distinguishers of size $t$ it holds that 
    $$(\Sim(F(\rho_x)),\rho_z)\approx_{\epsilon} (\hat{F}(\rho_x,r),\rho_z)$$.
\end{itemize}
\end{definition}

A QRE can additionally fulfill the following property:
\begin{definition} Decomposability
    \begin{itemize}
    \item \textbf{Decomposability:} The encoding $\hat{F}$ is decomposable if there exists an operation $\hat{F}_{off}$ (called the offline part of the encoding) and a collection of input encoding operations $\hat{F}_1, \dots, \hat{F}_n$ such that for all inputs $\rho_x = (\rho_{x_1},\dots , \rho_{x_n})$,  $\hat{F}(\rho_x,r) =  (\hat{F}_{off}, \hat{F}_1, \dots, \hat{F}_n)(r\rho_x,r,\rho_e)$ where the functions $\hat{F}_{off}, \hat{F}_1, \dots, \hat{F}_n$ act on disjoint subsets of qubits from $\rho_e$, $\rho_x$ (but can depend on all bits of r), each $\hat{F}_i$ acts on a single qubit $\rho_{x_i}$, and $\hat{F}$ does not act on any of the qubits of $\rho_x$.
    \item \textbf{Classical Encoding of Classical Inputs:} If an input qubit $x_i$ is classical, then the input encoding operation $\hat{F}_i$ is computable by a classical circuit.
\end{itemize}
\end{definition}

\begin{definition}Quantum Garbled Circuits (QGC)

    Quantum Garbled Circuits are an instantiation of QRE that fulfill the Decomposability property with classical encodings of classical inputs. For a quantum circuit of size $s$ the randomized encoding can be computed by a circuit of size $poly(\lambda,s)$ and fulfills computational security, that is for every polynomial $t(\lambda)$ there exists a negligible function  $\eps = negl(\lambda)$ such that the scheme is $(t', \epsilon')-private$, where $t'(\lambda) = t(\lambda) - poly(\lambda,s)$ and $\epsilon'(\lambda) = \epsilon(\lambda) \cdot s$. 
    The decoding and simulation procedures are computable in time $poly(\lambda)\cdot s$.
\end{definition}

Additional preliminaries regarding classical functional encryption, quantum obfuscation and unclonable encryption can be found in \cref{qfe:addprelims}.
\label{qfe:addprelims}

\subsection{The Quantum One Time Pad}
The Quantum One Time Pad (QOTP) ~\cite{AMTW00} is the quantum analogue to the classical One Time Pad. 
\begin{definition}(Quantum One Time Pad)
\begin{itemize}[align=left, leftmargin=2.8em]
    \item[$\bm{\Enc(\sk, |\phi\rangle \in \Hilb_1) \rightarrow |\ct\rangle}$] Given a secret key $\sk = (a,b)$ and a quantum message $|\psi\rangle$ apply the following operation to the state to obtain the ciphertext:
    $$ |\ct\rangle  = \Xgate^a \Zgate^b |\phi\rangle$$
    \item[$\bm{\Dec(\sk, |\ct\rangle) \rightarrow |\phi\rangle}$] Given a secret key $\sk = (a,b)$ and a ciphertext apply the following operation to obtain the message:
    $$ |\phi\rangle  = \Xgate^a \Zgate^b |\ct\rangle$$

\end{itemize}
\end{definition}
When the key $\sk = (a,b)$ is chosen uniformly at random from $\bin^2$, the QOTP information theoretically hides the state. The technique generalises to mulit-qubit states by encrypting qubit by qubit. 

\subsection{Quantum State Teleportation}

Two spatially separated parties $A$ and $B$ can teleport a quantum state from one person to the other by using shared entanglement and classical communication~\cite{Teleport}. $A$ holds the state $\rho$ and one qubit of an EPR pair, $B$ holds the other qubit of the EPR pair. $A$ performs a Bell measurement on the two states and obtains the correction keys $(a,b)$. The keys $(a,b)$ are send to $B$ who applies an $X$ gate to the state if $a=1$ and a Z gate to the state if $b=1$. Now Bob holds the state $\rho$. The technique generalises to mulit-qubit states by teleporting qubit by qubit.

\subsection{Classical Functional Encryption}
\begin{definition}(Functional Encryption)
Let $\lambda \in \N$ be the security parameter. Let $\mathcal{F} = \{\mathcal{F_\lambda}\}_{\lambda \in \N}$ be a class of circuits with input space $\mathcal{X} = \{\mathcal{X_\lambda}\}_{\lambda \in \N}$ and output space $\mathcal{Y} = \{\mathcal{Y_\lambda}\}_{\lambda \in \N}$. A functional encryption scheme is defined by the PPT algorithms $\FE = (\Setup, \Keygen,\allowbreak \Enc, \Dec)$. 
    \label{def:fe}
    \begin{description}
    \item[Setup$(\secparam) \rightarrow (\mpk,\msk)$:] given the security parameter $\secparam$ outputs the master public key $\mpk$ and the master secret key $\msk$.
    \item[KeyGen($\msk,f) \rightarrow \sk_{f}$:] given the master secret key $\msk$ and a circuit $f$ and outputs a  function key $\sk_{f}$.
    \item[Enc($\mpk,\msg) \rightarrow \ct$:] given $\mpk$ and a message $m \in \mathcal{X}$ output the ciphertext $\ct$.
    \item[Dec$(\sk_{f},\ct ) \rightarrow y$:] given a ciphertext $\ct$ and $\sk_{f}$ output a value $y \in \mathcal{Y}$.
    \end{description}
\end{definition}

The scheme has to fulfill the following correctness and security properties:

\begin{definition}(Correctness)
Let $(\mpk,\msk) \leftarrow$ $\Setup(\secparam)$, $\sk_{f} \leftarrow$ $\Keygen(\msk$, $f), \ct \leftarrow$ $\Enc(\mpk,\msg)$. Then the FE is correct, if for all $f \in \mathcal{F}$ and $\msg \in \mathcal{X}$ it holds that $f(\msg) = \Dec(\sk_{f},\ct)$.

\end{definition}

\begin{definition}(Single-Query IND-Security for Classical Functional Encryption)
Let $\lambda \in \N$ be the security parameter and let $\Adv$ be a QPT adversary.
Consider the experiment $\exp_{\Adv,b}^{\FE}(\secparam)$:
\begin{enumerate}[align=left,leftmargin=2.8em]
    \item $\FE.\Setup(\secparam) \rightarrow (\mpk,\msk)$
    \item $(\msg_0, \msg_1, \st) \leftarrow \Adv^{\sk_f \leftarrow\Keygen (\msk,\cdot)}(1^\lambda, \mpk)$ where $\msg_0, \msg_1$ have to be admissible queries for a function $f$ that $\Adv$ queries, they fulfil $f(\msg_0) = f(\msg_1)$.
    \item Sample $b \leftarrow \bin$
    \item $\ct \leftarrow \Enc(\mpk,\msg_b)$.
    \item $b' \leftarrow \Adv^{O(\cdot)}(1^\lambda, \ct,\st)$.
    \item If $b'=b$ the adversary wins and the experiment outputs $1$. Otherwise, the experiment outputs 0. 
\end{enumerate}
A functional encryption scheme is said to have single-key IND-security if for all QPT adversaries $\Adv$, there exists a negligible function $negl$ such that for all $\secpar \in \mathbb{N}$:
$$\left|\Pr\left[1 \leftarrow \Exp_{\Adv, b=0}^{Ind}\right] - \Pr\left[1 \leftarrow \Exp_{\Adv, b=1}^{Ind}\right]\right| \leq negl(\lambda)$$
where the random coins are taken over the randomnes of $\Adv$, $\Setup, \Keygen$ and $\Enc$.

\textbf{Adaptive vs. Non-adaptive security}
\begin{itemize}
    \item The scheme is called non-adaptively secure if the the adversary only queries the $\Keygen$ oracle before receiving a ciphertext. Then the oracle $O(\cdot)$ is the empty oracle. 
    \item The scheme is called adaptively secure if the adversary can either query the $\Keygen$ oracle before or after receiving the ciphertext. Then the oracle $O(\cdot)$ is the function $\Keygen(\msk, \cdot)$.
\end{itemize}
\end{definition}

\begin{definition}(Single-Query SIM-security)
    Let $\lambda$ be the security parameter and let $\Adv = (\Adv_1, \Adv_2)$ be a QPT adversary and let $\Sim$ be a QPT simulator. 
   \begin{table}[H]
        \centering
        \begin{tabular}{p{6cm}|p{6cm}}
            $\Exp_{\Adv}^{Real}(1^\lambda)$ & $\Exp_{\Adv}^{Ideal}(1^\lambda)$ \\
            $(\mpk,\msk) \leftarrow \Setup(1^\lambda)$&$(\mpk,\msk) \leftarrow \Setup(1^\lambda)$ \\
            $ (\msg, \st) \leftarrow \Adv_1^{O_1(\cdot)}(1^\lambda, \mpk)$ & $ (\msg,\st) \leftarrow \Adv_1^{O_1(\cdot)}(1^\lambda, \mpk)$ \\
            ${\ct} \leftarrow \Enc(\mpk, \msg)$& ${ct} \leftarrow \Sim(1^\lambda, \mpk, \mathcal{V})$ \\
            & \quad where $\mathcal{V} = (C, \sk_{C}, C(\msg), 1^{|\msg|})$ if $\Adv$\\ & \quad queried $O_1$ on $C$ and $\mathcal{V} = \emptyset$ otherwise.\\
            $\alpha \leftarrow \Adv_2^{O_2(\cdot)}({\ct}, \st)$ & $\alpha \leftarrow \Adv_2^{O_2'(\cdot)}(\ct, \st)$\\
            The experiment outputs the state $\alpha$ & The experiment outputs the state $\alpha$\\
        \end{tabular}
    \end{table}
The FE scheme is single-query simulation-secure if for any adversary $\Adv$ and all messages $\msg$ there exists a simulator $\Sim$ such that the real and ideal distributions are computationally indistinguishable:
$$ \{ \Exp_{\Adv}^{Real}(1^\lambda)\}_{\lambda \in \N} \approx_c \{ \Exp_{\Adv}^{Ideal}(1^\lambda)\}_{\lambda \in \N} $$.

\textbf{Adaptive vs Non-adaptive security:}
\begin{enumerate}[align=left,leftmargin=2.8em]
    \item Non-adaptive: the adversary $\Adv_1$ is allowed to make one key query to $O_1(\cdot)$ where the oracle $O_1(\cdot)$ is $\Keygen(\msk, C) \rightarrow sk_C$.
    \item Adaptive: the adversary is allowed to make one key query either to $O_1(\cdot)$ or $O_2(\cdot)$ ($O'_2(\cdot)$ in the ideal world) where $O_1(\cdot)$ and $O_2(\cdot)$ are $\Keygen(\msk, C) \rightarrow sk_C$ and $O_2'(\cdot)$ is a $\Keygen$ oracle controlled by the simulator $\sk_C \leftarrow \Sim(1^\lambda, \msk, C, C(\msg), 1^{|\msg|})$. The simulator is stateful, in this invocation $\Sim$ has access to the state of the simulator from it's first invocation where it produced the ciphertext. 
\end{enumerate}
\end{definition}

In this work we only require a very simple functional encryption schemes: We require a single-query adaptively SIM-secure FE scheme for the identity circuit and we require a single-query adaptively SIM-secure FE scheme for a family of two cicruits. Such schemes are constructed in \cite{C:GorVaiWee12}.

\subsection{Classical Multi-input Functional Encryption}
We recall the syntax and security definition of a classical multi-input functional encryption scheme (MIFE)~\cite{EC:GGGJKH14}. We only consider the case where all encryption keys are secret.

Let $\mathcal{X}=\left\{\mathcal{X}_\lambda\right\}_{\lambda \in \mathbb{N}}$ and $\mathcal{Y}=$ $\left\{\mathcal{Y}_\lambda\right\}_{\lambda \in \mathbb{N}}$ be ensembles where each $\mathcal{X}_\lambda$ and $\mathcal{Y}_\lambda$ is a finite set. Let $\mathcal{F}=\left\{\mathcal{F}_\lambda\right\}_{\lambda \in \mathbb{N}}$ be an ensemble where each $\mathcal{F}_\lambda$ is a finite collection of $n$-ary functions. Each function $f \in \mathcal{F}_\lambda$ takes as input $n$ strings $x_1, \ldots, x_n$, where each $x_i \in \mathcal{X}_\lambda$ and outputs $f\left(x_1, \ldots, x_n\right) \in \mathcal{Y}_\lambda$. A multi-input functional encryption scheme is additionally parametrized by a parameter $k$ which denotes how many ciphertexts can be produced for one encryption key $\ek$.

A multi-input functional encryption scheme $\MIFE$ for $\mathcal{F}$ consists of four algorithms $(\Setup, \Keygen,$ $ \Enc, \Dec)$ as described below.
\begin{itemize} [align=left,leftmargin=2.8em]
    \item[$\bm{\Setup}$]  $\Setup(1^\lambda, n) \rightarrow (\msk, \ek_1, \ldots, \ek_n)$ is a PPT algorithm that takes as input the security parameter $\lambda$ and the function arity $n$. It outputs $n$ encryption keys $\ek_1, \ldots, \ek_n$ and a master secret key $\msk$.
   
    \item[$\bm{\Keygen}$] $\Keygen(\msk, $f$)\rightarrow \sk_f$ is a PPT algorithm that takes as input the master secret key $\msk$ and an $n$-ary function $f \in \mathcal{F}_\lambda$ and outputs a corresponding secret key $\sk_f$.
    \item[$\bm{\Enc}$] $\Enc(\ek,x) \rightarrow \ct$ is a PPT algorithm that takes as input an encryption key $\ek_i \in\left(\ek_1, \ldots, \ek_n\right)$ and an input message $x \in \mathcal{X}_\lambda$ and outputs a ciphertext $\ct$. In the case where all of the encryption keys $\ek_i$ are the same, we assume that each ciphertext $\ct$ has an associated label $i$ to denote that the encrypted plaintext constitutes an $i$'th input to a function $f \in \mathcal{F}_\lambda$. For convenience of notation, we omit the labels from the explicit description of the ciphertexts. In particular, note that when $\ek_i$'s are distinct, the index of the encryption key $\ek_i$ used to compute $\ct$ implicitly denotes that the plaintext encrypted in $\ct$ constitutes an $i$'th input to $f$, and thus no explicit label is necessary.
    \item[$\bm{\Dec}$] $\Dec(\sk_f, \ct_1, \ldots, \ct_n)\rightarrow y$ is a deterministic algorithm that takes as input a secret key $\sk_f$ and $n$ ciphertexts $\ct_1, \ldots, \ct_n$ and outputs a string $y \in \mathcal{Y}_\lambda$.
\end{itemize}

\begin{definition}(Correctness)
    A multi-input functional encryption scheme $\mathcal{F E}$ for $\mathcal{F}$ is correct if for all $f \in \mathcal{F}_\lambda$ and all $\left(x_1, \ldots, x_n\right) \in \mathcal{X}_\lambda^n$ :

    $$\mathrm{Pr}\left[\begin{array}{l} \Dec(\sk_f, \Enc(\ek_1, x_1), \ldots, \Enc(\ek_n,x_n)) = f(x_1,\ldots, x_n) :\\
    (\msk, \ek_1, \ldots, \ek_n)\leftarrow\Setup(1^\lambda,n), \sk_f \leftarrow \Keygen(\msk, f)  
    \end{array}\right]= 1- negl(\lambda)$$
where the probability is taken over the coins of $\Keygen, \Setup, \Enc$.
\end{definition}

\begin{definition}(Compatibility of function and message queries)
\label{qfe:def:compatability}

Let $\{f\}$ be any set of n-ary functions $f \in \mathcal{F_\lambda}$. Let $X^0, X^1$ a pair of input vectors where $X^b = \{x^b_{1,j}, \dots, x^b_{n,j}\}_{j \in [k]}$. We say $(X^0,X^1)$ and $\{f\}$ are compatible if they satisfy the following property:

For every $f \in \{f\}$ and every $j_1, \dots, j_n \in [k]$

$$ f(x^0_{1,j_1}, \dots, x^0_{n,j_n}) = f(x^1_{1,j_1}, \dots, x^1_{n,j_n})$$
\end{definition}

\begin{definition}(Classical MIFE selective IND-Security)

    A multi-input functional encryption scheme $\MIFE$ for $n$-ary functions $\mathcal{F}$ is $k$-IND-secure if for every PPT adversary $\mathcal{A}=\left( \mathcal{A}_1, \mathcal{A}_2\right)$, the advantage of $\mathcal{A}$ defined as
$$
\operatorname{Adv}_{\mathcal{A}}^{\MIFE, \mathrm{IND}}\left(1^\lambda\right)=\left|\operatorname{Pr}\left[\Exp_{\mathcal{A}}^{\operatorname{IND}-\MIFE}\left(1^\lambda\right)=1\right]-\frac{1}{2}\right| \leq negl(\lambda)
$$
 where:
$$
\begin{aligned}
& \Exp_{\mathcal{A}}^{\operatorname{IND}-\MIFE}\left(1^\lambda\right): \\
& \left(\vec{X}^0, \vec{X}^1, \st_1\right) \leftarrow \mathcal{A}_1\left(1^\lambda, n \right) \text { where } \vec{X}^{\ell}=\left\{x_{1, j}^{\ell}, \ldots, x_{n, j}^{\ell}\right\}_{j \in [k]} \\
&( \{\ek_i\}_{i\in[n]}, \msk) \leftarrow \Setup\left(1^\lambda, n\right) \\
& b \leftarrow\{0,1\} \\
& \ct_{i, j} \leftarrow \Enc\left(\mathrm{ek}_i, x_{i, j}^b\right) \quad \forall i \in[n], j \in[k] \\
& b^{\prime} \leftarrow \mathcal{A}_2^{\Keygen(\msk, \cdot)}\left(\st_1, \{\ct_{i,j}\}_{i \in [n], j \in [k]}\right) \\
& \text{Output: } \left(b=b^{\prime}\right)
\end{aligned}
$$

Let $\{f\}$ denote the entire set of key queries made by $\mathcal{A}$ at any point dirung the game. Then, the challenge message vectors $\vec{X}_0$ and $\vec{X}_1$ chosen by $\mathcal{A}$ must be compatible with $\{f\}$ (\cref{qfe:def:compatability}).
\end{definition}

\begin{lemma}\cite{EC:GGGJKH14}
    Let  $k = k(\lambda)$ be a fixed $poly(\lambda)$. Then, assuming indistinguishability obfuscation for all polynomial-time computable classical circuits and one-way functions, there exists a $k-\MIFE$ scheme that is selectively IND-secure.
\end{lemma}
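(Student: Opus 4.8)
The plan is to instantiate the standard obfuscation-based construction and prove selective IND-security by a hybrid argument over the challenge ciphertexts. First I would build $\MIFE$ from a symmetric-key encryption scheme $\SKE$ (which exists assuming one-way functions) together with indistinguishability obfuscation. Here $\Setup(1^\lambda,n)$ samples $n$ independent $\SKE$ keys and sets $\msk=(\ek_1,\dots,\ek_n)$ with each $\ek_i$ the $i$-th encryption key; $\Enc(\ek_i,x)$ outputs $\SKE.\Enc(\ek_i,x)$ carrying the slot label $i$; and $\Keygen(\msk,f)$ outputs $\sk_f\leftarrow\Obf(1^\lambda,G_f)$, where $G_f$ has $\ek_1,\dots,\ek_n$ hardwired, decrypts its $n$ input ciphertexts to recover $x_1,\dots,x_n$, and returns $f(x_1,\dots,x_n)$. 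Correctness is immediate from $\SKE$ correctness and the correctness of the obfuscator, and the efficiency and polynomial key size follow because $|G_f|=poly(\lambda,|f|,n)$ and iO expands size only polynomially.

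For security I would fix a selective adversary that commits to $(\vec X^0,\vec X^1)$ up front, so that all $nk$ challenge ciphertexts are determined before any key is issued, and interpolate between the $b=0$ and $b=1$ worlds by rewriting the challenge ciphertexts from the $x^0$-plaintexts to the $x^1$-plaintexts. The naive idea of flipping one ciphertext at a time fails: a single key $\sk_f$ can be evaluated on any of the $k^n$ admissible index combinations, while the compatibility condition (\cref{qfe:def:compatability}) only equates the all-$0$ combination with the all-$1$ combination and says nothing about mixed combinations. I would therefore reprogram the obfuscated keys rather than the ciphertexts alone, replacing each honest $G_f$ by a trapdoor variant that outputs hardwired values on the fixed challenge ciphertexts and behaves honestly elsewhere, detecting the challenge inputs through a puncturable PRF following Sahai--Waters; indistinguishability of consecutive hybrids then reduces to iO security on the functionally-equivalent branches together with puncturable-PRF and $\SKE$ security on the trigger inputs.

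The main obstacle, which is exactly what forces the heavier machinery of \cite{EC:GGGJKH14}, is taming these $k^n$ mixed combinations without an exponential blow-up: one cannot simply hardwire the output of $G_f$ on every combination. The crux of the argument is to schedule the rewrites so that at every hybrid step the two obfuscated circuits differ only on a polynomial-size, efficiently recognizable set of inputs on which their outputs provably coincide --- either by compatibility or because the corresponding plaintexts have not yet been touched --- so that iO can close each gap. I expect this bookkeeping, rather than any single cryptographic step, to be the delicate part; once the hybrid schedule is in place, summing the negligible advantages over the polynomially many hybrids yields the claimed $negl(\lambda)$ bound. Since this is precisely the statement established in \cite{EC:GGGJKH14}, I would ultimately either reproduce their hybrid schedule in full or invoke their theorem directly.
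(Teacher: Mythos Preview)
The paper does not prove this lemma at all: it appears in the preliminaries, is attributed to \cite{EC:GGGJKH14}, and is simply quoted without argument. Your sketch therefore goes beyond what the paper provides; your closing remark that one would ``invoke their theorem directly'' is precisely the route the paper takes.
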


\begin{definition}(Classical MIFE Sim-Security)
    A multi-input functional encryption scheme for n-ary functions is k-SIM-secure if for every QPT adversary $\Adv = (\Adv_1, \Adv_2)$ there exists a stateful simulator $\Sim$ such that the outputs of the following experiments are computationally indistinguishable: 

     \begin{table}[H]
        \centering
        \begin{tabular}{p{6cm}|p{6cm}}
            $\Exp_{\Adv}^{Real}(1^\lambda)$ & $\Exp_{\Adv}^{Ideal}(1^\lambda)$ \\
            $(\{\ek_i\}_{i\in[n]},\msk) \leftarrow \Setup(1^\lambda,n)$& \\
            $ (X,  \st) \leftarrow \Adv_1^{\Keygen(\msk,\cdot)}(1^\lambda,n)$ & $(X,  \st) \leftarrow \Adv_1^{O_1(\cdot)}(1^\lambda)$\\
             where $X = \{ {\msg_{1,j}}, \dots, {\msg_{n,j}} \}_{j \in [k]}$&where $X = \{ {\msg_{1,j}}, \dots, {\msg_{n,j}} \}_{j \in [k]}$\\
            $ {\ct_{i,j}} \leftarrow \Enc(\ek_i,  {\msg_{i,j}}) \quad \forall i \in [n], j \in [k]$&$ \{\ct_{i,j}\}_{i\in [n],j\in[k]} \leftarrow \Sim^{\TP(\cdot)}(1^\lambda,1^{|m_{i,j}|}) $  \\
            $\alpha \leftarrow \Adv_2^{\Keygen(\msk,\cdot)}( \{\ct_{i,j}\}_{i\in[n],j\in[k]},\st)$ & $\alpha \leftarrow \Adv_2^{O_2(\cdot)}( \{\ct_{i,j}\}_{i\in[n],j\in[k]},\st)$\\
            The experiment outputs $\alpha$ & The experiment outputs $\alpha$\\
        \end{tabular}
    \end{table}
    
    where the oracle $\TP(\cdot)$ denotes the ideal world trusted party. $\TP$ accepts queries of the form $(g,(j_1, \dots, j_n))$ and outputs $g(\msg_{1,j_1}, \dots, \msg_{n,j_n})$.
    
    $O_1(\cdot)$ is a $\Keygen$ oracle controlled by the simulator and $O_2(\cdot)$ is a $\Keygen$ oracle controlled by the simulator with access to $\TP$. We say $\Sim$ is admissible if $\Sim$ only queries $\TP$ on functions that $\Adv$ queried to its oracle. 
    
    In a single-query secure scheme $\Adv$ (in the real world) can only make a single query to the $\Keygen$ oracle or (in the ideal world) a single query to either $O_1(\cdot)$ or $O_2(\cdot)$.
\end{definition}
 \subsection{Quantum Obfuscation}

 \begin{definition}
     Let $\{\mathcal{C}_{\lambda}\}_{\lambda \in \N}$ be a family of circuits and let $\mathcal{X}_\lambda$ be the input space and let $\mathcal{Y}_\lambda$ be the output space of the circuit family.
    A quantum obfuscator consists of two QPT algorithms $(\Obf, \Eval)$ with the following syntax:
    \begin{enumerate}[align=left, leftmargin=2.8em]
        \item[${\Obf(1^\lambda,C)\rightarrow \Tilde{C}}$] The obfuscator takes as input the security parameter $\lambda$ and a classical description of a quantum circuit $C \in \{\mathcal{C}_\lambda\}_{\lambda \in \N}$ and outputs an obfuscation of $C$ which can be classical or quantum.
        \item[${\Eval(\Tilde{C}, \rho_x) \rightarrow \rho_y}$] The evaluation takes as  input the obfuscated program $\Tilde{C}$ and an input $\rho_x \in \mathcal{X}$ and outputs $\rho_y \in \mathcal{Y}$.
    \end{enumerate}
 \end{definition}

\subsubsection{Quantum Indistinguishability Obfuscation}
Several definitions for qiO have come up in the literature. We closely follow the definition of \cite{LC:BroKaz21}. \footnote{The qiO definition from the earlier work of~\cite{AF16} differs in  that they require a weaker notion of functional equivalence for $C_1,C_2$ in item~\ref{qfe:item:sec}. The circuits are required to have a negligible diamond norm but we (following~\cite{LC:BroKaz21}) require a diamond norm of 0. In fact \cite{AF16} show that qiO is impossible to achieve under their definition.}  

\begin{definition} (Quantum Indistinguishability Obfuscation)

    The following three properties are required of a quantum indistinguishability obfuscator:
    \begin{enumerate}[align=left, leftmargin=2.8em]
        
     \item Correctness: The obfuscation scheme is correct if for any circuit $C \in \{\mathcal{C}_{\lambda}\}_{\lambda \in \N}$ there existat a negligible functions $negl(\lambda)$ such that  
     $$||\Eval(\Tilde{C}, \cdot) - C(\cdot)||_\diamond \leq 1-negl(\lambda) $$
     where $\Tilde{C} \leftarrow \Obf(1^\lambda, C)$.
        \item Efficiency: There exists a polynomial $p(\lambda)$ such that for any $C \in \{\mathcal{C}_{\lambda}\}_{\lambda \in \N}$ the size of the obfuscated circuit is only larger by a factor of $p(|C|)$ : $$|\Obf(C)| \leq p(|C|)$$
        \item \label{qfe:item:sec} Security: For any two circuits $C_1, C_2 \in \{\mathcal{C}_{\lambda}\}_{\lambda \in \N}$ that are perfectly equivalent

      $$ ||C_1 - C_2||_\diamond = 0$$
        no QPT distinguisher can distinguish their obfuscation with more than negligible probability:
        $$ \left|\Pr[\Dist(\Obf(C_1)) = 1] - \Pr[ \Dist(\Obf(C_2)) = 1] \right| \leq negl(\lambda)$$
    \end{enumerate}
\end{definition}

\subsubsection{Quantum Virtual Black Box Obfuscation}

\begin{definition}(Quantum Virtual Black Box Obfuscation)
    The following properties are required of a QVBB obfuscator:
    \begin{enumerate}[align=left, leftmargin=2.8em]
        \item Correctness: The obfuscation scheme is correct if for any circuit $C \in \{\mathcal{C}_{\lambda}\}_{\lambda \in \N}$ there existat a negligible functions $negl(\lambda)$ such that  
     $$||\Eval(\Tilde{C}, \cdot) - C(\cdot)||_\diamond \leq 1-negl(\lambda) $$
     where $\Tilde{C} \leftarrow \Obf(1^\lambda, C)$.
        \item Efficiency: There exists a polynomial $p(\lambda)$ such that for any $C \in \{\mathcal{C}_{\lambda}\}_{\lambda \in \N}$ the size of the obfuscated circuit is only larger by a factor of $p(|C|)$ : $$|\Obf(C)| \leq p(|C|)$$
        \item Security: For every QPT adversary $\Adv$, there exists a QPT simulator $\Sim$ with superposition access to its oracle such that for all circuits $C \in \{C_\lambda\}_{\lambda \in \N}$,
        $$ \left|\Pr[\Adv(\Tilde{C}) = 1] - \Pr[\Sim^{C(\cdot)}(1^\lambda, 1^{|C|}) = 1]\right| \leq negl(\lambda)$$
        where $\Tilde{C} \leftarrow \Obf(1^\lambda, C)$.
    \end{enumerate}
\end{definition}
\subsection{Unclonable Encryption}

\begin{definition}(Unclonable Encryption)
\label{qfe:def:ue}
A unclonable encryption scheme consists of three QPT algorithms $(\Keygen, \Enc, \Dec)$
\begin{itemize}[align=left, leftmargin=2.8em]
    \item[$\bm{\Keygen(1^\lambda) \rightarrow (\ek,\dk)}$] $\Keygen$ takes as input the security parameter and outputs an encryption key $\ek$ and a decryption key $\dk$.
    \item[$\bm{\Enc(\ek, \msg) \rightarrow |\ct\rangle}$] $\Enc$ takes as input the encryption key and a message and outputs a quantum ciphertext. 
    \item[$\bm{\Dec(\dk, |\ct\rangle) \rightarrow \msg}$] $\Dec$ takes as input the decryption key and the quantum ciphertext and outputs a message
\end{itemize}
\end{definition}

\begin{definition} (Correctness)
    $$\Pr[\msg = \Dec(\dk, |\ct\rangle) : |\ct\rangle \leftarrow \Enc(\ek,\msg), (\ek,\dk) \leftarrow \Keygen(1^\lambda)] \geq 1 - negl(\lambda)$$
\end{definition}

There are various flavours of unclonable encryption such as secret-key unclonable encryption with quantum decryption keys where $\ek$ is a private classical key and $|\dk\rangle$ is a quantum state (see \cref{qfe:def:ueqdec}) or public-key unclonable encryption where $\ek$ is a classical public key and $\dk$ is a classical decryption key.

\begin{definition} (Unclonable Encryption with Quantum Decryption Keys)
\label{qfe:def:ueqdec}
    An unclonable encryption scheme with quantum decryption keys is defined as in \cref{qfe:def:ue} where $\Keygen$ produces a secret key pair such that the decryption key is a quantum state $|\dk\rangle$ and the encryption key is a classical key. The algorithm $\Keygen$ is pseudodeterministic such that it can produce several copies of the same decryption key.
\end{definition}

\begin{definition} (Unclonable-indistinguishable Security for Secret Key UE)
\label{qfe:def:ueqsec}
    Let $\Adv = (A,B,C)$ be a QPT adversary and let $\lambda \in \N$ be the security parameter.
    \begin{align*}
        &\Exp^{IND-UEQ}_{\Adv}(1^\lambda)\\
        &(\msg_0,\msg_1, st) \leftarrow A(1^\lambda) \text{where} |\msg_0|=|\msg_1|=1\\
        &(\ek,|\dk\rangle^{\otimes 2}) \leftarrow \Keygen(1^\lambda)\\
        & b \leftarrow \bin\\
        & |\ct\rangle \leftarrow \Enc(\ek, \msg_b)\\
        & \rho^{BC} \leftarrow A(|\ct\rangle, st)\\
        & b_B \leftarrow B(\rho^B, |\dk\rangle) \text{ and } b_C \leftarrow C(\rho^C, |\dk\rangle)   \text{ where $B$ and $C$ are not allowed to communicate.}\\ 
        & \text{Output } b_B = b_C = b\\
    \end{align*}
    
     An unclonable encryption scheme is called one-time unclonable-indisintguishable secure if for all $(A,B,C)$ if there exists a negligible function negl such that for all $\lambda \in \N$: 
    $$\Pr[\Exp^{IND-UEQ}_{\Adv}(1^\lambda) = 1] \leq \frac{1}{2} + negl(\lambda)$$
\end{definition}

Such a scheme is presented in \cite{AKY24}, where the authors additionally define a security notion of t-unclonability which allows the adversary to get t copies of the secret key.

\begin{lemma}\cite{AKY24}
    There is a one-time unclonable encryption scheme with quantum decrpytion keys for single bit messages.
\end{lemma}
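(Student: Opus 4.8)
Since this statement is quoted directly from \cite{AKY24}, the plan is to recall their construction and the structure of its security proof rather than to reprove it from scratch. The construction is based on conjugate (BB84) coding. Concretely, \Keygen\ samples uniformly random bases $\theta \in \bin^n$ and a string $x \in \bin^n$ with $n = poly(\lambda)$, together with a seed for a hardcore predicate $h$; the classical encryption key is $\ek = (\theta, x, \text{seed})$, and \Enc\ encrypts a bit $m \in \bin$ as the pair $\big(\,\mathsf{H}^{\theta}|x\rangle,\ m \oplus h(x)\,\big)$, where $\mathsf{H}^{\theta}$ applies a Hadamard to the $i$-th qubit exactly when $\theta_i = 1$. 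The quantum decryption key $|\dk\rangle$ coherently carries the basis information $\theta$, so that its holder can rotate the ciphertext qubits back to the computational basis, measure to recover $x$, and output $m = c \oplus h(x)$; since $\theta$ and the seed are fixed at setup, \Keygen\ is pseudodeterministic and can reproduce identical copies of $|\dk\rangle$, as \cref{qfe:def:ueqdec} demands. First I would pin down this syntax and verify correctness, which is immediate from the correctness of conjugate coding.

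For security I would reduce the tripartite game of \cref{qfe:def:ueqsec} to a monogamy-of-entanglement statement for BB84 states in the spirit of Tomamichel--Fehr--Kaniewski--Wehner. In the reduction the ciphertext register $\mathsf{H}^{\theta}|x\rangle$ is the challenge that $A$ must split into a bipartite state for $B$ and $C$, while the classical blinding $c = m \oplus h(x)$ may be copied freely to both parties; only in the second phase is the basis information, carried by $|\dk\rangle$, handed to both $B$ and $C$. The monogamy bound then asserts that the probability that $B$ and $C$ \emph{simultaneously} recover enough of $x$ to predict $h(x)$ is exponentially small in $n$, which forces $\Pr[b_B = b_C = b] \le \tfrac{1}{2} + negl(\lambda)$ once the single message bit is accounted for. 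The single-bit, one-time nature of the scheme is exactly what keeps this clean: a fresh $(\theta,x)$ is used per ciphertext, so no hybrid over many messages is needed.

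The step I expect to be the main obstacle is arguing that distributing a \emph{copy} of the quantum decryption key $|\dk\rangle$ to each of $B$ and $C$ confers no joint advantage beyond the monogamy bound. In the usual monogamy game the bases $\theta$ are revealed as classical side information to both parties, which is benign; here the same information arrives as a quantum state that could in principle be measured adaptively or exploited coherently, so one must show that the optimal joint strategy may assume $|\dk\rangle$ is simply measured to yield the classical $\theta$. A second delicate point, and the reason \cite{AKY24} works in the plain model where full unclonable encryption remains open, is the hardcore extraction: joint unpredictability of the single bit $h(x)$ must be controlled by the monogamy value for $x$ itself, without passing through the quantum random oracle model. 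I would discharge both points by invoking the corresponding lemmas of \cite{AKY24} directly, treating their monogamy and extraction analysis as the black box that yields the claimed negligible advantage.
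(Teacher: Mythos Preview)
The paper does not prove this lemma at all; it is stated in the preliminaries purely as a black-box citation to \cite{AKY24}, with no construction recalled and no argument sketched. Your proposal therefore goes well beyond what the paper does: for the paper's purposes the result is simply imported as prior work, and a one-line reference is the entirety of the ``proof''.

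That said, since you chose to sketch a construction, one concern is worth flagging. You describe $|\dk\rangle$ as coherently carrying the BB84 basis string $\theta$, and then argue that handing this state to both $B$ and $C$ is no worse than revealing the classical $\theta$. But if $|\dk\rangle$ is pseudodeterministically reproducible and encodes only $\theta$, what you have written collapses to standard conjugate-coding unclonable encryption with a classical key---and unclonable-\emph{indistinguishability} for that scheme in the plain model is precisely the open problem this line of work is trying to sidestep. The whole point of allowing quantum decryption keys in \cite{AKY24} is that it alters the cloning game in a way that avoids this barrier; your sketch does not capture that mechanism, and the ``hardcore extraction without QROM'' step you identify as delicate is in fact the place where the BB84 template is known to get stuck. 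If you want to include more than the bare citation, you should consult the actual construction in \cite{AKY24} rather than defaulting to the BB84 outline.
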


\begin{definition} (Unclonable-Indistinguishable Security for Public Key Unclonable Encryption with Variable Decryption Keys)
\label{qfe:def:pkeue}

        Let $\Adv = (A,B,C)$ be a QPT adversary and let $\lambda$ be the security parameter.\\
    \begin{align*}
        &\Exp^{UE-VDK}_{\Adv}(1^\lambda)\\
        &(\ek,\dk_0) \leftarrow \Keygen(1^\lambda, r_0), (\ek,\dk_1) \leftarrow \Keygen(1^\lambda, r_1), \text{ where } r_0 = (r,r_0'),\\
        &r_1 = (r,r_1'), r_0',r_1' \leftarrow \bin^{l(\lambda)}, r \leftarrow \bin^{k(\lambda)}\\
        &(\msg_0,\msg_1, \rho_\st) \leftarrow A(1^\lambda, \ek) \text{ where } |\msg_0|=|\msg_1|=n\\
        & b \leftarrow \bin\\
        &|\ct\rangle \leftarrow \Enc(\ek, \msg_b)\\
        &\rho^{BC} \leftarrow A(|\ct\rangle, \rho_\st)\\
        &b_B \leftarrow B(\rho^B, \dk_0) \text{ and } b_C \leftarrow C(\rho^C, \dk_1)  \text{  where B and C are not allowed to}\\ &\text{communicate.} \\
        & \text{Output } b_B = b_C = b
    \end{align*}
     An unclonable encryption scheme is called unclonable-indisintguishable secure if for all $(A,B,C)$ there exists a negligible function negl such that for all $\lambda \in \N$: 
$$\Pr\left[ \Exp^{UE-VDK}_{\Adv}(1^\lambda) = 1\right] \leq \frac{1}{2} + negl(\lambda)$$

\end{definition}

\section{Definition: Quantum Functional Encryption}\label{qfe:sec}

In this section we adapt the definition of Functional Encryption to the Quantum setting. First, we give a defintion for simulation security and then for indistinguishability security. We show that simulation security implies our definition of indistinguishability security. 

\begin{definition}{Quantum Functional Encryption}
\label{qfe:def:qfe}
    Let $\lambda$ be the security parameter and let $(\Setup, $ $\Keygen,$ $ \Enc, \Dec)$ be QPT algorithms.  
\begin{itemize}[align=left]
    \item[$\Setup(1^\lambda) \rightarrow (\mpk, \msk)$] Given the security parameter $\lambda$ output a master public key $\mpk$ and a master secret key $\msk$. 
    \item[$\Keygen(\msk, C) \rightarrow \sk_C$] Given the master secret key and a quantum circuit $C$  output a secret key $\sk_C$.
    \item[$\Enc(\mpk, \rho_\msg) \rightarrow \rho_\ct$] Given the public key $\mpk$ and a message $\rho_\msg$ output a ciphertext $\rho_\ct$.
    \item[$\Dec(\sk_C, \rho_\ct) \rightarrow C(\rho_\msg)$] Given a function secret key $\sk_C$ and ciphertext $\rho_\ct$ which is an encryption of $\rho_\msg$ output the value $C(\rho_\msg)$.
\end{itemize}
\end{definition}

\begin{definition}[Correctness of a functional encryption scheme]
\label{qfe:def:cor}
For all messages $\rho_{\msg z}$, circuits $C$ and random coins used by $\Enc$ and $\Setup$ it holds that 
    \begin{align*}
        &(C(\rho_\msg),\rho_z) = (\Dec(\sk_C,\rho_\ct), \rho_z)
    \end{align*}
    where $\sk_C \leftarrow \Keygen(\msk,C) ,\rho_\ct \leftarrow \Enc(\mpk, \rho_\msg)$and $ (\mpk,\msk)\leftarrow\Setup(\lambda)$
\end{definition}

\subsection{Simulation Based Security Definition}

\begin{definition}[Single-query Sim-Security for QFE]
\label{qfe:def:simsecurity}
Let $\lambda$ be the security parameter and let $\Adv = (\Adv_1, \Adv_2)$ be a QPT adversary and let $\Sim$ be a QPT simulator. 
   \begin{table}[H]
        \centering
        \begin{tabular}{p{6cm}|p{6cm}}
            $\Exp_{\Adv}^{Real}(1^\lambda)$ & $\Exp_{\Adv}^{Ideal}(1^\lambda)$ \\
            $(\mpk,\msk) \leftarrow \Setup(1^\lambda)$&$(\mpk,\msk) \leftarrow \Setup(1^\lambda)$ \\
            $ (\rho_\msg, \rho_\st) \leftarrow \Adv_1^{O_1(\cdot)}(\mpk)$ & $ (\rho_\msg,\rho_\st) \leftarrow \Adv_1^{O_1(\cdot)}(\mpk)$ \\
            $\rho_{\ct} \leftarrow \Enc(\mpk, \rho_\msg)$& $\rho_{ct} \leftarrow \Sim(1^\lambda, \mpk, \mathcal{V})$ \\
            & \quad where $\mathcal{V} = (C, \sk_{C}, C(\rho_\msg), 1^{|\rho_\msg|})$ if $\Adv$\\ & \quad queried $O_1$ on $C$ and $\mathcal{V} = \emptyset$ otherwise.\\
            $\alpha \leftarrow \Adv_2^{O_2(\cdot)}(\rho_{\ct}, \rho_\st)$ & $\alpha \leftarrow \Adv_2^{O_2'(\cdot)}(\rho_\ct, \rho_\st)$\\
            The experiment outputs the state $\alpha$ & The experiment outputs the state $\alpha$\\
        \end{tabular}
    \end{table}
The QFE scheme is single-query simulation-secure if for any adversary $\Adv$ and all messages $\rho_\msg$ there exists a simulator $\Sim$ such that the real and ideal distributions are computationally indistinguishable:
$$ \{ \Exp_{\Adv}^{Real}(1^\lambda)\}_{\lambda \in \N} \approx_c \{ \Exp_{\Adv}^{Ideal}(1^\lambda)\}_{\lambda \in \N} $$.

\textbf{Adaptive vs Non-adaptive security:}
\begin{enumerate}[align=left,leftmargin=2.8em]
    \item Non-adaptive: the adversary $\Adv_1$ is allowed to make one key query to $O_1(\cdot)$ where the oracle $O_1(\cdot)$ is $\Keygen(\msk, C) \rightarrow sk_C$.
    \item Adaptive: the adversary is allowed to make one key query either to $O_1(\cdot)$ or $O_2(\cdot)$ ($O'_2(\cdot)$ in the ideal world) where $O_1(\cdot)$ and $O_2(\cdot)$ are $\Keygen(\msk, C) \rightarrow sk_C$ and $O_2'(\cdot)$ is a $\Keygen$ oracle controlled by the simulator $\sk_C \leftarrow \Sim(1^\lambda, \msk, C, C(\rho_\msg), 1^{|\rho_\msg|})$. The simulator is stateful, in this invocation $\Sim$ has access to the state of the simulator from it's first invocation where it produced the ciphertext. 
\end{enumerate}

\end{definition}

\subsection{Indistinguishability Based Security Definition}
\label{qfe:sec:inddef}
In this section we comment on potential issues when trying to find an appropriate indistinguishability-based definintion of functional encryption for the quantum setting.
The simulation-based definition is generally preferred as indistinguishability-based security does not capture a meaningful security notion for some functions~\cite{oneil10, TCC:BonSahWat11}. Nevertheless indistinguishability-based security can be easier to achieve and still has many important applications as we can see in the extension to the multi-input setting in \cref{qfe:sec:qmife}.

First we give the definition for the IND-security experiment and then we discuss the notion of admissible queries in depth. 

\begin{definition}[Single-Query IND-Security for QFE]
\label{qfe:def:indsecurity}

Let $\lambda$ be the security parameter and let $\Adv = (\adv_0,\Adv_1)$ be a QPT adversary.
   \begin{align*}
            &\Exp^{IND}_{\Adv,b}(1^\lambda)\\
            &(\mpk,\msk) \leftarrow \Setup(1^\lambda)\\
             &(\rho_{\msg_0}, \rho_{\msg_1}, \rho_\st) \leftarrow \Adv_0^{\sk_C \leftarrow \Keygen(\msk, \cdot )}(\mpk), \text{where $\rho_{\msg_0}$ and $\rho_{\msg_1}$ are admissible queries} \\ &\text{for the circuit $C$ that $\Adv$ queries. }\\ 
            &\rho_\ct \leftarrow \Enc(\mpk, \rho_{\msg_b})\\
            & b' \leftarrow \Adv_1^{O(\cdot)}(\mpk, \rho_\ct, \rho_\st)
    \end{align*}
The FE scheme is called secure if for any adversary $\Adv$ that makes admissible queries (\cref{qfe:def:admis}) it holds that
$$\left|\Pr\left[1 \leftarrow \Exp_{\Adv, b=0}^{Ind}\right] - \Pr\left[1 \leftarrow \Exp_{\Adv, b=1}^{Ind}\right]\right| \leq negl(\lambda)$$
where the random coins are taken over the randomnes of $\Adv$, $\Setup, \Keygen$ and $\Enc$.

\textbf{Adaptive vs. Non-adaptive security}
\begin{itemize}
    \item The scheme is called non-adaptively secure if the the adversary only queries the $\Keygen$ oracle before receiving a ciphertext. Then the oracle $O(\cdot)$ is the empty oracle. 
    \item The scheme is called adaptively secure if the adversary can either query the $\Keygen$ oracle before or after receiving the ciphertext. Then the oracle $O(\cdot)$ is the function $\Keygen(\msk, \cdot)$.
\end{itemize}
\end{definition}

In the classical setting admissible queries are defined as $C({\msg_0}) = C({\msg_1})$. To adjust this definition to the quantum setting we have to redefine the condition that the quantum circuit has the same output on the inputs $\rho_{\msg_0}$ and $\rho_{\msg_1}$. A natural first attempt to define admissible queries $\rho_{\msg_0}, \rho_{\msg_1}$ is to use the trace distance of the output states since the trace distance bounds the adversaries probability of distinguishing two quantum states

$$ \T\left(C(\rho_{\msg_0}), C(\rho_{\msg_1})\right) \leq negl(\lambda)$$ 

This definition is not sufficient as can be seen in the following scenario: $\Adv$ creates the states $\rho = |EPR\rangle\langle EPR|$ and $\sigma = |EPR \rangle \langle EPR|$ and gives one qubit each to the experiment $\rho_{\msg_0} = \rho_1$ and $ \rho_{\msg_1} = \sigma_1$. $\Adv$ queries the identity circuit and receives $\rho_\ct$.  The states $\rho_{\msg_0}$ and $\rho_{\msg_1}$ have trace distance 0 since they are both the maximally mixed state. $\Adv$ can decrypt $\rho_\ct$ using the function secret key and check with non-negl probability which quibt it is by applying a coherent measurement on the qubit remaining in it's internal state and the received qubit. 

The above attack is not applicable in the simulation-based setting. The scheme that we proved secure under Sim-security allows an adversary to stay entangled with the challenge message. This entanglement is maintained by the encryption procedure or the simulator respectively. 

An alternative approach to defining IND-security would be to take the adversaries internal state into account. The messages $\rho_{\msg_0} = \sum_i p_i \rho_{\msg_{0,i}} , \rho_{\msg_1} = \sum_i q_i \rho_{\msg_{1,i}}$ are admissible queries if 
\begin{align}
\label{qfe:admis1}
    \T\left(\sum_{i} p_i C(\rho_{\msg_{0,i}}) \otimes \rho_{A_i}, \sum_{i} q_i C(\rho_{\msg_{1,i}}) \otimes \rho_{A_i}\right) \leq negl(\lambda).
\end{align}

where $\rho_{A}$ is the adversary's internal state.

For many functionalities this would enforce the adversary to stay unentangled with the challenge message queries. The definition might still be useful in some applications, as for example messages that are not chosen by the adversary fall into this category.

To allow the adversary more freedom and in particular to enable the adversary to stay entangled with a part of the challenge message we can allow the following way of querying messages. 

\begin{definition}(Admissible queries)
\label{qfe:def:admis}
    For a challenge message $\rho^{EU}_{\msg_{b}}$ the adversary specifies a register $E$ that is encrypted and a register $U$ that will be returned unencrypted to the adversary. The message $\rho^{EU}_{\msg_{1-b}}$ which is not used as the challenge is not returned to the adversary. Then the challenge queries have to fulfill: 

\begin{align}
\label{qfe:admis2}
     \T\left(\sum_{i} p_i C(\rho^E_{\msg_{0,i}}) \otimes \rho^U_{\msg_{0,i}} \otimes \rho_{A_i}, \sum_{i} q_i C(\rho^E_{\msg_{1,i}}) \otimes \rho^U_{\msg_{1,i}} \otimes \rho_{A_i}\right) \leq negl(\lambda)
\end{align}

where $\rho_{A}$ is the adversary's internal state and $C$ is the circuit that the adversary queries.
\end{definition}

In practice this allows for any entanglement to be moved into the challenge query such that the state of the adversary is unentangled with the message queries and the state can be written as $\rho_{\msg^{EU}_0} \otimes \rho_{\msg^{EU}_1} \otimes \rho_\Adv$. This simplifies the check if the query is admissible to 
\begin{align*}
    &\T\left(\sum_{i} p_i C(\rho^E_{\msg_{0,i}}) \otimes \rho^U_{\msg_{0,i}} \otimes \rho_{A}, \sum_{i} q_i C(\rho^E_{\msg_{1,i}}) \otimes \rho^U_{\msg_{1,i}} \otimes \rho_A \right) \\
    & =\T\left(\sum_{i} p_i C(\rho^E_{\msg_{0,i}}) \otimes \rho^U_{\msg_{0,i}} , \sum_{i} q_i C(\rho^E_{\msg_{1,i}}) \otimes \rho^U_{\msg_{1,i}}\right) \leq negl(\lambda).\\
\end{align*}

 Useful special cases of \cref{qfe:def:admis} are
\begin{enumerate}[align = left ,leftmargin= 2.8em]
    \item Classical messages. For classical messages the definition reduces to the classical definition of admissibility since 
    \begin{align*}
        \T\left(C({\msg_{0}}) \otimes \rho_{A}, C({\msg_{1}}) \otimes \rho_{A}\right) = 0.
    \end{align*}
    exactly when $C(\msg_0) = C(\msg_1)$.
    \item Defining both $\rho_{\msg_0}$ and $\rho_{\msg_1}$ with respect to a single quantum state $\sigma$. In the IND-security game the adversary might hold a single copy of a special quantum state $\sigma$ which he would like to use for defining both messages $\rho_{\msg_0}$ and $\rho_{\msg_1}$. Since the experiment only creates a single ciphertext and discards the other message we can allow the adversary to only provide a single copy of $\sigma$ and define $\rho_{\msg_0}$ and $\rho_{\msg_1}$ to each contain the state $\sigma$. 
\end{enumerate}
This definition of IND-security is implied by simulation secure quantum functional encryption. 

\begin{lemma}
\label{qfe:lem:simtoind}
    A QFE scheme that is single-query (non)-adaptively  SIM-secure (\cref{qfe:def:simsecurity}) is also single-query (non)-adaptively IND-secure (\cref{qfe:def:indsecurity}).
\end{lemma}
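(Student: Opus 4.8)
The plan is to run the familiar three-step reduction from the classical SIM-implies-IND argument, replacing equality of function outputs by statistical closeness coming from the quantum admissibility condition. Fix a single-query IND-adversary $\Adv=(\Adv_0,\Adv_1)$ making admissible queries (\cref{qfe:def:admis}), and for each fixed $b\in\bin$ I would build a SIM-adversary $\AdvB^{(b)}=(\AdvB_1,\AdvB_2)$ that runs $\Adv$ internally. Concretely, $\AdvB_1(\mpk)$ simulates $\Adv_0(\mpk)$, relays its single key query $C$ to the key oracle $O_1=\Keygen(\msk,\cdot)$, and outputs the \emph{encryption register} $\rho^E_{\msg_b}$ as the message to be encrypted; everything else $\Adv_0$ holds --- its internal register $A$, the copy of $\mpk$, and the \emph{unencrypted register} $\rho^U_{\msg_b}$ that \cref{qfe:def:admis} hands back to the adversary --- is bundled into the state $\rho_\st$ passed to $\AdvB_2$. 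Then $\AdvB_2(\rho_\ct,\rho_\st)$ feeds $(\mpk,\rho_\ct,\rho^U_{\msg_b},A)$ to $\Adv_1$, relays any adaptive key query to its oracle $O_2$ (resp. $O_2'$), and outputs $\Adv_1$'s guess. The same $\AdvB$ works verbatim in the non-adaptive and adaptive modes, since it merely forwards the unique key query to whichever phase $\Adv$ issues it in.

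By construction the \emph{real} SIM-experiment for $\AdvB^{(b)}$ reproduces the IND-experiment $\Exp^{IND}_{\Adv,b}$ exactly: in both the challenge is a genuine $\Enc(\mpk,\rho^E_{\msg_b})$, the unencrypted register is returned, and $\Adv_1$ sees identical inputs and the same live $\Keygen$ oracle. Applying single-query (non-)adaptive SIM-security (\cref{qfe:def:simsecurity}) once for each bit then gives
\[
  \Exp^{IND}_{\Adv,0}\;=\;\Exp^{Real}_{\AdvB^{(0)}}\;\approx_c\;\Exp^{Ideal}_{\AdvB^{(0)}},
  \qquad
  \Exp^{Ideal}_{\AdvB^{(1)}}\;\approx_c\;\Exp^{Real}_{\AdvB^{(1)}}\;=\;\Exp^{IND}_{\Adv,1}.
\]

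The core of the proof is the middle link $\Exp^{Ideal}_{\AdvB^{(0)}}\approx_s\Exp^{Ideal}_{\AdvB^{(1)}}$. In the ideal world everything executed once the message is fixed --- the (stateful) simulator $\Sim$, which in the adaptive case derives $\sk_C$ from $C(\rho^E_{\msg_b})$, together with the internal run of $\Adv_1$ --- is a single CPTP map $\Lambda$ applied to the $b$-dependent state $\sum_i p_i\,C(\rho^E_{\msg_{b,i}})\otimes\rho^U_{\msg_{b,i}}\otimes\rho_{A_i}$ (with weights $q_i$ for $b=1$), while the remaining inputs $\mpk$, $C$ and $1^{|\rho_\msg|}$ carry no information about $b$. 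Inequality \eqref{qfe:admis2} in \cref{qfe:def:admis} states exactly that these two states have trace distance at most $negl(\lambda)$, and since $\Lambda$ cannot increase trace distance, neither can the two ideal outputs. Chaining the three bounds yields $\bigl|\Pr[1\leftarrow\Exp^{IND}_{\Adv,0}]-\Pr[1\leftarrow\Exp^{IND}_{\Adv,1}]\bigr|\leq negl(\lambda)$.

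The step I expect to demand the most care is checking that the $b$-dependence of the ideal experiment funnels \emph{only} through the register triple $(C(\rho^E_{\msg_b}),\,\rho^U_{\msg_b},\,\rho_A)$ that appears in \eqref{qfe:admis2}. This means routing the unencrypted register and $\Adv$'s internal state faithfully through $\AdvB$'s state register so that no other $b$-dependent wire reaches $\Lambda$, and confirming this in both regimes: the non-adaptive one, where $\Sim$ consumes $C(\rho^E_{\msg_b})$ while manufacturing the ciphertext, and the adaptive one, where the ``empty'' ciphertext is $b$-independent and $\Sim$ instead consumes $C(\rho^E_{\msg_b})$ only when producing $\sk_C$ in the second phase. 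Once this bookkeeping is pinned down, the data-processing inequality for trace distance closes the middle step with no further computation.
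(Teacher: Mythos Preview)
Your proposal is correct and follows essentially the same approach as the paper: reduce to SIM-security, then in the ideal world use the admissibility condition \eqref{qfe:admis2} together with the data-processing inequality for trace distance to show the two choices of $b$ are indistinguishable. The only cosmetic difference is that the paper packages this as a single SIM-adversary that samples $b$ uniformly and outputs the pair $(b',b)$, whereas you unfold it into the three-hybrid chain $\Exp^{IND}_{\Adv,0}\approx_c\Exp^{Ideal}_{\AdvB^{(0)}}\approx_s\Exp^{Ideal}_{\AdvB^{(1)}}\approx_c\Exp^{IND}_{\Adv,1}$; the two presentations are standard and interchangeable.
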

The proof can be found in \cref{qfe:app:lem1proof}.

\section{Construction: Quantum Functional Encryption}

We construct a single-query adaptively secure functional encryption scheme for quantum circuits. We start by constructing a simple functional encryption scheme for a single circuit where the circuit has to be fixed ahead of time. Then we use this construction to achieve single-query adaptively secure functional encryption for polynomial sized circuits. Our construction follows the ideas used by \cite{CCS:SahSey10, C:GorVaiWee12} in the classical setting. They show how to leverage classical randomized encodings to achieve classical functional encryption. Similarly, quantum randomized encodings can be used to achieve quantum functional encryption.

\subsection{QFE for a Single Circuit}
\label{qfe:sec:unitaryfe}
First we construct a quantum functional encryption scheme that only allows to evaluate a circuit family consisting of one circuit $\mathcal{C} = \{C_\lambda\}_{\lambda in \N}$ with fixed input size $n = poly(\lambda)$ and output size $d = poly(\lambda)$. To achieve this construction we make use of the Quantum One Time Pad and a classical FE scheme that allows functional encryption for the identity circuit $\IdFE = (\IdFE.\Setup, \IdFE.\Keygen, $ $\IdFE.\Enc, \IdFE.\Dec)$. Such a scheme is constructed in~\cite{C:GorVaiWee12}. Let $D(\Hilb_{n})$ be the input space, let $D(\Hilb_{d})$ be the output space and let the circuit be denoted as $C$.

\begin{itemize}[align=left,leftmargin=2.8em]
    \item[$\bm{\Setup(1^\lambda) \rightarrow (\mpk,\msk)}$] Run the classical $\IdFE$ scheme to obtain the keys $(\pk,\sk) \leftarrow \IdFE.\Setup(1^\lambda)$. Output $(\mpk=\pk, \msk=\sk)$.
    \item [$\bm{\Enc(\mpk, \rho_\msg) \rightarrow \ct}$] Sample a pair of keys for the QOTP $(a,b)$, where $a, b \in \bin^d$. Compute $$\rho_{\ct_0} = \Xgate^{a} \Zgate^{b} C(\rho_\msg) $$ Encrypt the QOTP keys using the classical FE scheme $$\ct_1 = \IdFE.\Enc(\mpk, (a,b))$$ Output $\ct = (\rho_{\ct_0}, \ct_1)$. 
    \item[$\bm{\Keygen(\msk) \rightarrow \sk^*}$] Run the $\IdFE$ scheme to obtain the  secret key $\sk^* = \IdFE.\Keygen(\msk)$.
    \item[$\bm{\Dec(\sk^{*}, \ct) \rightarrow \rho_\msg}$] Given $\ct = (\rho_{\ct_0}, \ct_1)$ use the key $\sk^*$ to obtain the QOPT keys $(a,b) = \IdFE.\Dec(\sk^*, \ct_1)$ and then decrypt the quantum state $$ \rho_\msg = \Xgate^{a} \Zgate^{b} \rho_{\ct_0}$$
\end{itemize}

\begin{theorem}
    Given a classical FE scheme for the identity circuit that fulfills adaptive sim-security, there exists an adaptively sim-secure QFE scheme for a single circuit. 
\end{theorem}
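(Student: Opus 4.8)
The plan is to build the quantum simulator $\Sim$ out of the classical simulator $\Sim_{\IdFE}$ guaranteed by the adaptive sim-security of the underlying $\IdFE$ scheme, and to argue indistinguishability by a short hybrid argument. Correctness is immediate: decryption recovers the one-time-pad keys $(a,b)$ from $\ct_1$ via the identity key $\sk^*$, and applying $\Xgate^a \Zgate^b$ to $\rho_{\ct_0} = \Xgate^a \Zgate^b C(\rho_\msg)$ returns $C(\rho_\msg)$. The interesting direction is security, and the two branches of the adaptive definition behave quite differently.

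In the \emph{non-adaptive} case (the key is requested before the challenge ciphertext) the simulator already knows $C(\rho_\msg)$ when it must output $\rho_\ct$. It samples $(a,b) \getsr \bin^{2d}$, sets $\rho_{\ct_0} = \Xgate^a \Zgate^b C(\rho_\msg)$, and invokes $\Sim_{\IdFE}$ to produce a simulated $\ct_1$ together with the identity key $\sk^*$ that opens $\ct_1$ to the value $(a,b)$. Since the only nonclassical component, $\rho_{\ct_0}$, is produced exactly as in the real scheme, indistinguishability reduces directly to the adaptive sim-security of $\IdFE$, the value disclosed by the identity key being precisely the pad keys.

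The \emph{adaptive} case (the key is requested after the ciphertext) is the crux, since now $\rho_{\ct_0}$ must be fixed before $C(\rho_\msg)$ is known. Here I would use the teleportation trick sketched in the introduction. At ciphertext time the simulator prepares $d$ EPR pairs, places the QOTP-encrypted $A$-halves $\Xgate^a \Zgate^b \sigma^A$ into $\rho_{\ct_0}$, keeps the $B$-halves, and lets $\Sim_{\IdFE}$ output a simulated $\ct_1$ (no value is needed yet since no identity key has been requested). When the adversary later requests the key and the simulator receives $C(\rho_\msg)$, it teleports $C(\rho_\msg)$ into the $A$-register through a Bell measurement on $C(\rho_\msg)$ and the $B$-halves, obtaining corrections $(a',b')$; the ciphertext register then holds $\Xgate^{a \xor a'} \Zgate^{b \xor b'} C(\rho_\msg)$. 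It then calls $\Sim_{\IdFE}$ a second time with target value $(a \xor a', b \xor b')$ to obtain the identity key $\sk^*$ that opens $\ct_1$ to these corrected pad keys. Note the simulator is stateful, carrying $(a,b)$ and the $B$-halves between its two invocations, which the definition permits.

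For indistinguishability I would use two hybrids. Starting from the real experiment, the first hybrid replaces the direct QOTP encryption by the teleportation-based preparation above (still performed using $C(\rho_\msg)$). Because teleportation yields uniformly random corrections $(a',b')$ independent of the teleported input, the joint distribution of the ciphertext register together with the revealed pad keys $(a \xor a', b \xor b')$ is identical to that of $\bigl(\Xgate^a \Zgate^b C(\rho_\msg), (a,b)\bigr)$ with uniform $(a,b)$; hence this hybrid is perfectly indistinguishable from the real experiment. The second hybrid replaces honest $\IdFE.\Enc$ and $\IdFE.\Keygen$ by $\Sim_{\IdFE}$, which is computationally indistinguishable by the adaptive sim-security of $\IdFE$. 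In the resulting hybrid the EPR-half ciphertext and the simulated $\ct_1$ depend only on public data, while the teleportation and the key opening depend only on $C(\rho_\msg)$, so this is exactly the ideal experiment and the scheme is adaptively sim-secure. The main obstacle is verifying the first hybrid: one must confirm that the \emph{correlation} between the quantum ciphertext register and the classical keys that eventually leak through the identity key is preserved exactly under the switch to teleportation, rather than merely the marginal on the ciphertext register.
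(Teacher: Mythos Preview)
Your proof is correct and matches the paper's approach: the teleportation trick for the adaptive quantum ciphertext combined with the adaptive $\IdFE$ simulator for the classical pad keys, argued via a short hybrid chain. The only cosmetic differences are that the paper orders the hybrids the other way (first swap in $\Sim_{\IdFE}$, then replace the QOTP state by raw EPR halves, which sidesteps the deferred-teleportation commutation you tuck in at the end), omits your redundant $X^a Z^b$ mask on the EPR halves, and in the non-adaptive branch simply runs honest encryption on $C(\rho_\msg)$---note that there the key $\sk^*$ is already supplied by the real oracle $O_1$, so your simulator should not be producing it.
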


\begin{proof}
~\paragraph{Correctness} Due to the correctness of the classical FE scheme and the correctness of the QOTP the scheme is correct. 

~\paragraph{Security} We define a simulator $\Sim$ for the scheme. The adversary can either query the key first and then obtain the ciphertext or obtain the ciphertext first and then the key. We distinguish the simulator's behaviour in these two cases. 
\begin{enumerate}[align=left,leftmargin=2.8em]
    \item The adversary queries non-adaptively, i.e. it queries the key first. That means the simulator obtains $C(\rho_\msg)$. The simulator creates the ciphertext as the honest encryption algorithm would. 
    \item The adversary queries adaptively, i.e. it queries the ciphertext first. The simulator needs to create a ciphertext without knowledge of the value it should later decrypt to. The simulator creates d EPR pairs and sends one qubit of each EPR pair to the adversary and keeps the other qubit of each EPR pair. The classical ciphertext is simulated via the simulator of the classical FE scheme: 
    $$ \ct = \IdFE.\Sim(\mpk, |x|=2d) $$
    
    When the adversary queries the key, the simulator learns $C(\rho_\msg)$ and performs the teleportation circuit using $C(\rho_\msg)$ and the halves of the EPR pairs which he holds. $\Sim$ obtains the correction keys $(a,b) \in \bin^{d}$ and creates the key using the simulator for the classical $\IdFE$-scheme: 
    $$  sk^* = \IdFE.\Sim(\sk, (a,b))$$.
    The simulator outputs $\sk^*$.
\end{enumerate}
In the case of a non-adaptive query the simulator behaves as the experiment in the real world. Therefore real and ideal experiments are indistinguishable. For the case of an adaptive query we establish security via a series of hybrids:

\paragraph{Hybrid 0:} This is the real world, where the ciphertext is created by the Encryption algorithm 

\paragraph{Hybrid 1:} In this Hybrid we use the simulator of the classical $\IdFE$-scheme to simulate the ciphertext in case of an adaptive query. The quantum state part of the ciphertext is  created honestly and the corresponding encryption keys are used to answer the key query using the simulator of the $\IdFE$-scheme.  

\begin{claim}
    Hybrid 0 and Hybrid 1 are computationally indistinguishable. 
\end{claim}

\begin{proof}
    Due to the adaptive security of the $\IdFE$-scheme this change is not noticeable to the adversary. An adversary that can distinguish between Hybrid 0 and Hybrid 1 could distinguish between the real and simulated experiment of the $\IdFE$ scheme. 
\end{proof}

\paragraph{Hybrid 2:}This is the Ideal world where the simulator $\Sim$ runs as defined above.

\begin{claim}
    Hybrid 1 and Hybrid 2 are perfectly indistinguishable. 
\end{claim}

\begin{proof}
    The simulator creates $d$ EPR pairs and sends one qubit each as a ciphertext $\rho_{\ct_0}$. Upon receiving $(\rho_{\ct_0},\ct_1)$ the adversary cannot distinguish $\rho_{\ct_0}$ in Hybrid 1 from the state in Hybrid 2 since 1 qubit of an EPR pair appears as a maximally mixed state, the same as a state encrypted under the QOTP. Since $\ct_1$ is a ciphertext simulated by $\IdFE$ as in the previous Hybrid it contains no information about the QOTP keys. 
    Upon receiving the key query the simulator obtains $C(\rho_\msg)$ and teleports the state through the corresponding EPR pairs and obtains the correction keys $(a,b)$. The teleported state the adversary now holds is $X^a Z^b C(\rho_\msg)$ which is equivalent to a QOTP encrypted state with the key $(a,b)$. The keys are revealed using the $\IdFE$ simulator. 
\end{proof}

\end{proof}

\subsection{QFE for a poly-sized family of circuits}
\label{qfe:sec:polyqfe}

In this section we construct a QFE scheme for circuits of polynomial size.  We need the following building blocks: 

Let $\OneFE = (\Setup, \Keygen, \Enc, \Dec)$ be the single circuit QFE scheme from the previous section. Let $\TwoFE = (\Setup, \Keygen, \Enc, \Dec)$ be a classical FE scheme for a family of two circuits~\cite{C:GorVaiWee12}. Let $\QRE = (\Encode, \Decode)$ be a quantum randomized encoding scheme that is also decomposable. In particular we will use the quantum garbled circuits construction of \cite{BY20} which has the special property that if there is a classical part of the input the encoding procedure is classical. 

Let $U$ be a universal quantum circuit, that is on inputs $\rho_\msg$ and $C$ it evaluates to  $U(C,\rho_\msg) = C(\rho_\msg)$. Let the decription of $C$ have length $l$ and $\rho_\msg$ be a quantum state of dimension $n$. Then we can create a randomized encoding of $U(C,\rho_\msg)$ where due to the decomposability the randomized encoding can be created in independent pieces where each piece only depends on one bit of the input. Let $R$ denote classical randomness and $e$ denote a set of EPR pairs, then the randomized encoding $\Tilde{U}$ can be written as 
\begin{align*}
    \Encode(U,C,\rho_\msg,R,e) &= \Tilde{U}(C,\rho_\msg,R,e)\\
    &= (\Tilde{U}_1(C[1],R,e_1), \dots, \Tilde{U}_l(C[l],R,e_1), \Tilde{U}_x(\rho_\msg,R,e_2), \Tilde{U}_{off}(R,e_3))
\end{align*}

where $e_1,e_2,e_3$ are disjoint subsets of the qubits contained in the set of EPR pairs e. 

To construct FE for a poly sized family of circuits we use $l$ instances of the classical $\TwoFE$ scheme for the family of two circuits $\{f_{C[i] = 0}, f_{C[i]=1}\}$:
\begin{align*}
    f_{C[i] = 0}(R,t) = \Tilde{U}_i(0,R,t)\\
    f_{C[i] = 1}(R,t) = \Tilde{U}_i(1,R,t)
\end{align*}
where $t$ is a classical bit that can be obtained from measuring the EPR pairs from the set $e_1$. 

During encryption we create $l$ classical ciphertexts that can later be opened to the description of the circuit using the $\Keygen$ algorithm of $\TwoFE$. Given a description of $C$ the $i'th$ ciphertext is opened such that it decrypts to $\Tilde{U}_i(C[i],R,t)$, the randomized encoding of the $i'th$ bit of the description of $C$. 

Additionally we use two instances of  the quantum $\OneFE$ scheme for the circuits:
$$ f_{in}(\rho_\msg,e,R) = \Tilde{U}_{in}(\rho_\msg,R,e)$$
$$ f_{off}(e,R) = \Tilde{U}_{off}(R,e)$$

Putting everything together we can see that our encryption procedure produces the individual pieces of the decomposable randomized encoding scheme by relying on simpler functional encryption primitives. The final output $C(\rho_\msg)$ can be obtained by decrypting the individual parts of the ciphertext and then the result can be decoded. 

Let $\lambda \in \N$ be the security parameter and let $\mathcal{M} = D(\Hilb_s)$ where $s = poly(\lambda)$ be the message space. Let $\mathcal{C} = \{C_\lambda\}_\lambda$ be a family of quantum circuits with inputs of size $s$, outputs of size $t = poly(\lambda)$ and classical description of size $l = poly(\lambda)$.

\begin{itemize}[align=left,leftmargin=2.8em]
    \item[$\bm{\Setup(1^\lambda) \rightarrow (\mpk,\msk)}$] Create $l$ keys for the $\TwoFE$-scheme: $$(\pk_i, \sk_i) \leftarrow \TwoFE.\Keygen(1^\lambda) \quad for~ i \in 1, \ldots, l$$
    where the i-th keypair is associated with the circuit family $\{f_{C[i]=0}, f_{C[i]=1}\}$.
    
    Run the $\OneFE$ scheme twice, once for the circuit $f_{in}$ and once for the circuit $f_{off}$.
    \begin{align*}
    (\pk_{in}, \sk_{in}) \leftarrow \OneFE.\Keygen(1^\lambda)\\
    (\pk_{off}, \sk_{off}) \leftarrow \OneFE.\Keygen(1^\lambda)
    \end{align*}
    Output  $(\mpk = (\pk_1, \ldots, \pk_l, \pk_{in}, \pk_{off}), \msk = (\sk_1, \ldots, \sk_l, \sk_{in}, \sk_{off}))$.
    \item [$\bm{\Enc(\mpk, \rho_\msg \in \mathcal{M}) \rightarrow \rho_\ct}$] Sample $R \leftarrow \mathcal{R}$ and sample $l+n+k$ EPR pairs. The EPR pairs are split into 3 groups $E^l = \{(e^l_{i,1},e^l_{i,2})\}_{i \in [l]}$,  $E^n = \{(e^n_{i,1},e^n_{i,2})\}_{i \in [n]}$ and $E^k = \{(e^k_{i,1},e^k_{i,2})\}_{i\in [k]}$.
    
    For $i \in 1, \ldots, l$ take the first qubit of each EPR pair in the group $E^l$ and measure it in the computational basis to obtain $t_i$, then compute
    $$ ct_i \leftarrow \TwoFE.\Enc(\pk_i,R,t_i)$$
    Use the quantum $\OneFE$ scheme to compute the ciphertext
    $$\rho_{ct_{in}} \leftarrow \OneFE.\Enc(\pk_{in}, \rho_\msg, R, \{e^n_{i,1}\}_{i \in [n]})$$
    and compute the ciphertext
    $$\rho_{ct_{off}} \leftarrow \OneFE.\Enc(\pk_{off}, R, \{e^n_{i,2}\}_{i \in [n]}, \{e^l_{i,2}\}_{i \in [l]}, E^k )$$
    
    Output $\rho_\ct = (\{\ct_i\}_{i \in [l]},\rho_{\ct_{in}}, \rho_{\ct_off})$.
    \item[$\bm{\Keygen(\msk, C \in \mathcal{C}) \rightarrow \sk_C^*}$] For $i \in 1,\ldots,l$ create
    $$ sk^*_i = \TwoFE.\Keygen(\sk_i, f_{C[i]})$$
    and create
    $$ sk_{in}^* \leftarrow \OneFE.\Keygen(\sk_{in},f_{in})$$
    $$ sk_{off}^* \leftarrow \OneFE.\Keygen(\sk_{off}, f_{off})$$
    Output $sk_C^* = (\sk^*_1, \ldots, \sk^*_l, \sk_{in}^*, \sk_{off}^*)$.
    \item[$\bm{\Dec(\sk_C^{*}, \rho_\ct) \rightarrow \rho_\msg}$] For $i \in 1,\cdots,l$ decrypt
    $$ \Tilde{U}(C[i],R,t_i) = \TwoFE.\Dec(\sk^*_i, ct_i)$$
    and create
    $$ \Tilde{U}(\rho_\msg,R,e) \leftarrow \OneFE.\Dec(\sk^*_{in},\rho_{ct_{in}})$$
    $$ \Tilde{U}(R,e) \leftarrow \OneFE.\Dec(\sk^*_{off}, \rho_{ct_{off}})$$
    Output $y = \Decode(\Tilde{U}(C[1],R,t_1), \ldots, \Tilde{U}(C[l],R,t_l),\Tilde{U}(\rho_\msg,R,e), \Tilde{U}(R,e))$.
\end{itemize}

\begin{theorem}
    Given an adaptively sim-secure classical FE scheme for a family of two circuits, an adaptively sim-secure QFE scheme for a single circuit and a QGC scheme, there exists an adaptively sim-secure QFE scheme for poly sized circuits.
\end{theorem}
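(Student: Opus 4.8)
The statement requires establishing correctness and adaptive simulation security of the construction. Correctness is immediate from the correctness of the three primitives together with decomposability: decrypting the $i$-th $\TwoFE$ ciphertext with $sk^*_i$ returns the piece $\Tilde{U}_i(C[i],R,t_i)$, decrypting the two $\OneFE$ ciphertexts returns $\Tilde{U}_{in}(\rho_\msg,R,e)$ and $\Tilde{U}_{off}(R,e)$, and by correctness of the decomposable $\QRE$ the decoder recovers $\Decode(\cdots)=U(C,\rho_\msg)=C(\rho_\msg)$; the same holds in the presence of side information $\rho_z$ since the encoding touches only the message register.

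The plan for security is to give a single simulator and prove indistinguishability by a hybrid argument that swaps out the primitives one at a time. The combined simulator operates in two phases matching \cref{qfe:def:simsecurity}. In the ciphertext phase it outputs all $l+2$ component ciphertexts from input lengths alone, invoking $\TwoFE.\Sim$ for each of the $l$ slots and $\OneFE.\Sim$ for the $in$ and $off$ slots; this is possible precisely because both underlying schemes are adaptively simulation secure and therefore produce a ciphertext before the function is fixed. In the key phase, on learning $C$, $\sk_C$ and $C(\rho_\msg)$, the simulator runs the garbled-circuit simulator $\QRE.\Sim(C(\rho_\msg),T)$ to obtain simulated pieces $\Tilde{U}_1,\dots,\Tilde{U}_l,\Tilde{U}_{in},\Tilde{U}_{off}$ and feeds each as the target output to the corresponding component simulator's key generation, yielding keys $sk^*_1,\dots,sk^*_l,sk^*_{in},sk^*_{off}$ that open the already-fixed ciphertexts to exactly those pieces.

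For the hybrids I would start at the real experiment (Hybrid $0$) and, for $i=1,\dots,l$, pass to Hybrid $i$ in which the $i$-th $\TwoFE$ instance is replaced by its adaptive simulator while everything else, including the genuine $R$, the measured bits $t_i$ and the true encoding pieces, is still computed honestly; each step is indistinguishable by adaptive SIM-security of $\TwoFE$, through a reduction that generates all remaining components itself and relays the adversary's ciphertext and key requests to its own $\TwoFE$ challenger. Hybrids $l+1$ and $l+2$ replace the $f_{in}$ and $f_{off}$ instances of $\OneFE$ by its adaptive simulator, each justified by adaptive SIM-security of $\OneFE$. After Hybrid $l+2$ every ciphertext is input-independent and the real inputs survive only as the honest pieces fed into the component key simulators, i.e.\ as the full real encoding $\Tilde{U}(C,\rho_\msg,R,e)$. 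The final move to Hybrid $l+3$, the ideal experiment, replaces this encoding by $\QRE.\Sim(C(\rho_\msg),T)$ and is indistinguishable by the $(t,\epsilon)$-privacy of the QGC, with the adversary's correlated register playing the role of the side information $\rho_z$. Since $l=poly(\lambda)$ there are only polynomially many hybrids, and the accumulated advantage stays negligible (the single QGC term scaling with the circuit size $s$ as in the privacy bound).

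The main obstacle is the bookkeeping of these adaptive reductions rather than any isolated deep step. Each intermediate reduction must perfectly emulate the other $l+1$ component instances together with the shared randomness and EPR structure while embedding just one external challenger, and it must correctly map the adversary's two possible orderings (ciphertext-before-key or key-before-ciphertext) onto the two-phase interface of the underlying adaptive simulator; the key-first ordering is the easier case since the component simulators then learn their target outputs immediately. The point demanding the most care is consistency: the pieces handed to the various component simulators must all be derived from one coherent choice of $R$ and of the EPR pairs $e$, so that the tuple assembled at decoding time is exactly a single sample of $\Tilde{U}(C,\rho_\msg,R,e)$; only then is the final swap a clean one-shot application of QGC privacy rather than an $l$-fold hybrid over mismatched encodings.
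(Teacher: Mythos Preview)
Your proposal is correct and follows essentially the same approach as the paper: the same hybrid sequence (swap the $l$ $\TwoFE$ instances one by one, then the two $\OneFE$ instances, then replace the real encoding by $\QRE.\Sim$) and the same two-phase adaptive simulator. The paper additionally spells out the non-adaptive simulator separately, but your remark that the key-first ordering is the easy case covers this.
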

\begin{proof}

~\paragraph{Correctness}
\begin{align*}
    &\Dec(\sk_C,\Enc(\msk, \rho_\msg)) = \Dec(\sk^*_C, ct_1, \ldots ct_l, \rho_{ct_{in}}, \rho_{ct_{off}})\\
    &= \Decode(\TwoFE.\Dec(\sk^*_1, ct_1), \ldots, \TwoFE.\Dec(\sk^*_l, ct_l), \OneFE.\Dec(\sk^*_{in}, \rho_{ct_{in}}), \\
    & \quad \OneFE.\Dec(\sk^*_{off}, \rho_{ct_{off}}))\\
    &= \Decode(\Tilde{U}(C[1],R,t_1), \ldots, \Tilde{U}(C[l],R,t_l),\Tilde{U}(\rho_\msg,R,e), \Tilde{U}(R,e))\\
    &= \Decode(\Encode(U,C,\rho_\msg,R,e))\\
    &= C(\rho_\msg)
\end{align*}

~\paragraph{Security}
We define a simulator $\Sim$ for the scheme. We distinguish whether the adversary makes an adaptive or non-adaptive query. 
\begin{enumerate}[align=left,leftmargin=2.8em]
    \item The adversary queries non-adaptively. The simulator obtains $C,C(\rho_\msg)$ and creates the ciphertext as follows:
    \begin{enumerate}[align=left,leftmargin=2.8em]
        \item Create the randomized encoding using the simulator of the QRE scheme. 
        $$(\Hat{U}(C[1],R,t_1), \ldots, \Hat{U}(C[l],R,t_l),\Hat{U}(\rho_\msg,R,e), \Hat{U}(R,e)) \leftarrow \QRE.\Sim(C(\rho_\msg), \mathcal{T}_C)$$
        where $\mathcal{T}_C$ is the topology of the circuit $U(\cdot)$.
        \item Create the ciphertexts using the simulator of the classical $\TwoFE$ scheme and the quantum $\OneFE$ scheme for the non-adaptive case to create ciphertexts:
        \begin{align*}
            ct_i &\leftarrow \TwoFE.\Sim(\pk_i, \Hat{U}(C[i],R,t_1)) \quad for ~i \in [l]\\
            \rho_{ct_{in}} &\leftarrow \OneFE.\Sim(\pk_{in}, \Hat{U}(\rho_\msg,R,e))\\
            \rho_{ct_{off}} &\leftarrow \OneFE.\Sim(\pk_{off}, \Hat{U}(R,e))
        \end{align*}
    \end{enumerate}
    \item The adversary queries adaptively. The simulator has to create a ciphertext without knowing the evaluation result. 
    \begin{enumerate}[align=left,leftmargin=2.8em]
        \item Use the simulator of the classical $\TwoFE$ scheme and the quantum $\OneFE$ scheme for the adaptive case to create ciphertexts:
        \begin{align*}
            (ct_i, st_i) &\leftarrow \TwoFE.\Sim(\pk_i, 1^{|C[i]|+|R|+|t_i|}) \quad for ~i \in [l]\\
            (\rho_{ct_{in}}, st_{in}) &\leftarrow \OneFE.\Sim(\pk_{in}, 1^{|\msg|+|R|+|e|})\\
            (\rho_{ct_{off}}, st_{off}) &\leftarrow \OneFE.\Sim(\pk_{off}, 1^{|R|+|e|})
        \end{align*}
        \item Upon receiving a key query $\Sim$ obtains $C(\rho_\msg)$ and can create the randomized encoding. 
        $$(\Hat{U}(C[1],R,t_1), \cdots, \Hat{U}(C[l],R,t_l),\Hat{U}(\rho_\msg,R,e), \Hat{U}(R,e)) \leftarrow \QRE.\Sim(C(\rho_\msg))$$
        \item Then $\Sim$ creates the key by using the simulator of the underlying FE schemes.
        \begin{align*}
            \sk^*_i &\leftarrow \TwoFE.\Sim(\sk_i, \Hat{U}(C[i],st_i, R,t_i)) \quad for ~i \in [l]\\
            \sk^*_{in} &\leftarrow \OneFE.\Sim(\sk_{in}, st_{in}, \Hat{U}(\rho_\msg,R,e))\\
            \sk^*_{off} &\leftarrow \OneFE.\Sim(\sk_{off}, st_{off}, \Hat{U}(R,e))
        \end{align*}
    \end{enumerate}
\end{enumerate}

\paragraph{Hybrid 0} This is the real world.

\paragraph{Hybrid i} For $i \in \{1, \dots, l\}$. Sample $R$ and $E^l, E^n, E^k$ as in $\Enc$.

For $1\leq j<l$ let the ciphertexts be created honestly: 
$$ ct_j \leftarrow \TwoFE.\Enc(pk_j,R,t_j)$$

In the non-adaptive case: 

For $i \leq j \leq l$ create the partial randomized encoding and simulate the ciphertext using the simulator of the underlying scheme: 
$$ ct_j \leftarrow \TwoFE.\Sim(\Tilde{U}(C[i],R,t_i))$$

In the adaptive case:

For $i \leq j \leq l$ simulate the ciphertext using the simulator of the underlying scheme: 
$$ ct_j \leftarrow \TwoFE.\Sim(\pk_i, 1^{|C[i]|+|R|+|e_i|})$$

Create $\rho_{ct_{in}}, \rho_{ct_{off}}$ honestly. 

\begin{claim}
    Hybrids 0 to l  are indistinguishable up to negligible probability.
\end{claim}
\begin{proof}
    To show indistinguishability of each pair of games we can invoke the security of the $\TwoFE$ scheme. A distinguisher between the Hybrids can break the security of the $\TwoFE$ scheme. 
   
\end{proof}

\paragraph{Hybrid l+1, Hybrid l+2} 

For ciphertexts $\rho_{ct_{in}}$ and $\rho_{ct_{off}}$ use the simulator to create the ciphertexts in Hybrid l+1 and Hybrid l+2 respectively. 

In the non-adaptive case:

$$ \rho_{ct_{in}} \leftarrow \OneFE.\Sim(\pk_{in}, \Tilde{U}(\rho_\msg,R,E^n))$$
$$ \rho_{ct_{off}} \leftarrow \OneFE.\Sim(\pk_{off}, \Tilde{U}(R,E^k))$$

In the adaptive case: 

$$ \rho_{ct_{in}} \leftarrow \OneFE.\Sim(\pk_{in}, 1^{|\msg|+|R|+|E^n|})$$
$$ \rho_{ct_{off}} \leftarrow \OneFE.\Sim(\pk_{off}, 1^{|R|+|E^k|})$$

\begin{claim}
    Hybrids l and l+1 are indistinguishable as well as Hybrids l+1 and l+2 up to negligible probability.
\end{claim}
\begin{proof}
    To show indistinguishability of each pair of games we can invoke the security of the $\OneFE$ scheme. A distinguisher between the Hybrids can break the security of the $\OneFE$ scheme. 
\end{proof}

\paragraph{Hybrid l+3} 
In the non-adaptive case:
Upon receiving the key query $C, C(\rho_\msg)$ use the simulator of the randomized encoding to create $\Hat{U}(C(\rho_\msg),\mathcal{T}_C) \leftarrow \Sim$ and use the simulated randomized encoding to create the ciphertext instead of the real randomized encoding. 

In the adaptive case: 
Upon receiving the key query $C, C(\rho_\msg)$ use the simulator of the randomized encoding to create $\Hat{U}(C(\rho_\msg),\mathcal{T}_C) \leftarrow \Sim$ and use the simulated randomized encoding to answer the secret key query. 

This is the ideal world.

\begin{claim}
    Hybrids l+2 and l+3 are indistinguishable up to negligible probability.
\end{claim}
\begin{proof}
    Due to the indistinguishability of the simulated randomized encoding from the real randomized encoding the Hybrids are indistinguishable. 
\end{proof}

\end{proof}
\section{Unclonable Functional Encryption}

In this section we define and construct an unclonable functional encryption scheme.  Security requires that two participants who try to copy a ciphertext can obtain independently generated function secret keys for any circuit and cannot both guess which messages out of two challenge messages was encrypted. When the function secret keys are fixed to be the identity circuit this implies a public-key unclonable encryption scheme with variable decryption keys (\cref{qfe:def:pkeue}).

\subsection{Definition}
An unclonable functional encryption scheme is defined by the syntax and correctness properties of a $\QFE$ scheme, see \cref{qfe:def:qfe} and \cref{qfe:def:cor}.

\begin{definition}(Non-adaptive Unclonable-Indistinguishable Functional Encryption)
\label{qfe:def:UEQFE}

Let $\lambda$ be the security parameter, let $\Adv = (A,B,C)$ be a QPT adversary and let $\mathcal{C}_\lambda$ be a family of circuits.
   \begin{align*}
            &\Exp^{QFE-UE-IND}_{\Adv}(1^\lambda)\\
            &(\mpk,\msk) \leftarrow \Setup(1^\lambda)\\
             &(\rho_{\msg_0}, \rho_{\msg_1}, \rho_{\st}, C_B, C_C) \leftarrow A(1^\lambda, \mpk)\\
             & b \leftarrow \bin
            &\rho_\ct \leftarrow \Enc(\mpk, \rho_{\msg_b})\\
            & \sk_{C_B} \leftarrow \Keygen(\msk, C_B), \sk_{C_C} \leftarrow \Keygen(\msk, C_C)\\
            & \rho_{BC} \leftarrow A(\rho_\ct, \rho_{\st})\\
            & b_B \leftarrow B(\mpk, \rho_\ct, \rho_{\st_B}, \sk_{C_B})\\
            & b_C \leftarrow C(\mpk, \rho_\ct, \rho_{\st_C}, \sk_{C_C})\\
            & Output b_B = b_C = b
    \end{align*}
The FE scheme is called unclonable-indistinguishably secure if for any adversary $\Adv = (A,B,C)$ and  any $C_B, C_C \in \mathcal{C}_\lambda$
$$\Pr\left[ \Exp^{QFE-UE-IND}_{\Adv}(1^\lambda) = 1 \right] \leq \frac{1}{2} + negl(\lambda)$$
where the random coins are taken over the randomness of $\Adv$, $\Setup, \Keygen$ and $\Enc$.
\end{definition}

For brevity we often refer to this definition as unclonable functional encryption. 

\begin{remark}
   An adaptive security notion of unclonable functional encryption can be defined by giving each $B$ and $C$ oracle access to the $\Keygen$ functionality instead of $\Adv$ outputting a description of the circuits for which secret keys should be produced.  
\end{remark}

\subsection{Construction}
\label{sec:UFEconstruction}
We need the following components:

\begin{itemize}
    \item Let $\QFE = (\Setup, \Keygen, \Enc, \Dec)$ be a non-adaptive IND-secure QFE scheme.
    \item Let $\UEQ = (\Keygen, \Enc, \Dec)$ be a one-time unclonable-indistinguishable encryption scheme for single bit messages with quantum decryption keys of size $l(\lambda)$ and ciphertext size $t(\lambda)$ and let $s(\lambda)$ be an upper bound on the description of the decrpytion circuit.
\end{itemize}

We highlight that the construction is solely based on the $\QFE$ scheme and the $\UEQ$ scheme is not explicitly used, we only require the existence of a $\UEQ$ scheme. The unclonable $\QFE$ scheme is universal which means that the unclonable security property holds as long as any $\UEQ$ scheme with the mentioned properties is secure, such as ~\cite{AKY24}. 

The construction relies on a QFE scheme for the family of circuits $\mathcal{U}_\lambda$ = $\{U_{p(\lambda),l(\lambda), s(\lambda), n(\lambda)}\}_{\lambda \in \N}$ which has the following structure:

\begin{align*}
        U_{(C,a,b)}&(\rho_{\msg_0},\rho_{\msg_1},|\dk_0\rangle,|\dk_1\rangle,\rho_{UE}, C_{Dec}, f) =\\
        &\text{ if } f = 0 \text{ output } C(\rho_{\msg_0}) \\
        &  \text{ if } f = 1 \text{ do: }\\
        & \quad \text{Compute }|\dk_0'\rangle = X^{a} Z^{b} |\dk_0\rangle \text{ and } |\dk_1'\rangle = X^{a} Z^{b} |\dk_1\rangle\\
        & \quad \text{Measure the first $\lambda$ bits of } |\dk_0'\rangle \text{ in the computational basis,}\\
        & \quad \text{if they are all 0 remove them and set } |\dk^*\rangle = |\dk_0'\rangle\\
        & \quad \quad \text{else measure the first $\lambda$ bits of } |\dk_1'\rangle \text{ in the computational basis,}\\
        & \quad \quad \text{if they are all 0 remove them and set } |\dk^*\rangle = |\dk_1'\rangle\\
        & \quad \quad \text{else if both checks fail output }\bot\\
        & \quad \text{Interpret $C_{Dec}$ as the description of a circuit for decryption:  } U(C_{Dec}, |\dk^*\rangle, \rho_{UE}) = b\\
        & \quad \text{Output } C(\rho_{\msg_b})
    \end{align*}

Then the following is an unclonable functional encryption scheme for a family of quantum circuits $\mathcal{C} = \{C_{\lambda}\}_{\lambda \in \N}$ with classical description of size $p(\lambda)$ inputs in $\mathcal{X} = \mathcal{D}(\Hilb_n)$.

\begin{itemize}[align=left,leftmargin=2.8em]
    \item[$\bm{\Setup(1^\lambda, r) \rightarrow (\mpk,\msk)}$] Run $(\mpk,\msk) = \QFE.\Setup(1^\lambda,r)$.
    
    Output $(\mpk,\msk)$.
    
    \item[$\bm{\Keygen(1^\lambda, C \in \mathcal{C_\lambda}, r') \rightarrow \sk_C}$]
    
     Sample random strings $a,b \leftarrow \bin^{l(\lambda)+s(\lambda)}$  using randomness $r'$. 
    
    Run $\sk_C \leftarrow \QFE.\Keygen(\msk, U_{(C,a,b)})$.

    Output $\sk_C$.
    
    \item[$\bm{\Enc(\mpk, \rho_\msg \in \mathcal{X}) \rightarrow \rho_\ct}$] \phantom{sdd}
    
    Compute $\rho_\ct \leftarrow \QFE.\Enc(\mpk, (\rho_\msg \otimes |0\rangle\langle0|^{\otimes (n+2l+t+s)} \otimes |0\rangle\langle0|))$

    Output $\rho_\ct$.

    \item[$\bm{\Dec(\sk_C, \rho_\ct) \rightarrow \rho_\msg}$] 
    
    Run $\QFE.\Dec(\sk_C, \rho_\ct) = \rho_\msg$ and output $\rho_\msg$.
\end{itemize}

\begin{theorem}\label{qfe:thm:UEQFE}
    Any single-query QFE scheme for n-qubit messages and universal circuits (\cref{qfe:def:indsecurity}) is a single-query unclonable-indistinguishable functional encryption scheme (\cref{qfe:def:UEQFE}) if there exists an unclonable-indistinguishable encryption scheme with quantum decryption keys for single bit messages (\cref{qfe:def:ueqdec}). 
\end{theorem}

\begin{proof}\phantom{sdasd}
    \paragraph{Correctness}
    The scheme is correct based on the correctness of the underlying functional encryption scheme.

    We show security by a series of Hybrids:
    \paragraph{Hybrid 0:} This is the unclonable functional encrpytion experiment $\Exp^{QFE-UE-IND}_{\Adv}$.

    \paragraph{Hybrid 1:} In this Hybrid we change how the challenge ciphertext is created, in particular we change the flag bit $f$ to 1 such that the circuit executes the second case of it's description. 

    $\Enc^*(\mpk, \rho_{\msg_0}, \rho_{\msg_1}):$
    \begin{enumerate}[align=left, leftmargin=2.8em]
        \item Run $\UEQ.\Keygen(1^\lambda, r^*) = (\ek, |\dk_0\rangle)$. Produce another copy of the decryption key by using the same randomness $\UEQ.\Keygen(1^\lambda, r^*) = (\ek, |\dk_1\rangle)$.
        \item Sample 2 sets of $l(\lambda)$ EPR pairs $\sigma_0^{AB}$ and $\sigma_1^{AB}$. Let $\sigma_0^A$, $\sigma_1^A$ denote  registers containing the first qubit of each EPR pair and $\sigma_0^B$,$\sigma_1^B$ denote registers containing the second qubit of each EPR pair respectively.

        \item Sample $b \leftarrow \bin$.
        \item Run $\rho_{UE} \leftarrow \UEQ.\Enc(1^\lambda, b)$.
        \item Create the ciphertext 
        $$\rho_\ct = \QFE.\Enc(\mpk, (\rho_{\msg_0} \otimes \rho_{\msg_{1}} \otimes \sigma_0^A \otimes \sigma_0^B \otimes \rho_{UE} \otimes C_{Dec} \otimes |1\rangle\langle 1|))$$
        \item Teleport the key ($0^\lambda \otimes |\dk_0\rangle$),$(0^\lambda \otimes |\dk_1\rangle)$ through the EPR pairs $\sigma_0^B, \sigma_1^B$ respectively and obtain the teleportation keys $(a_0',b_0'),(a_1',b_1')$.
        Output $(\rho_{\ct}, (a_0', b_0'), ( a_1', b_1'))$.
    \end{enumerate}

\begin{claim}
    $|p_0-p_1| \leq negl(\lambda)$ where $p_0$ is the winning probability of the adversary in Hybrid 0 and $p_1$ is the winning probability in Hybrid 1.
\end{claim}

\begin{proof}
    We show that an adversary $\Adv = (A, B, C)$ that can win in Hybrid 0 with a higher probability than in Hybrid 1 can be used to break IND-security of the underlying $\QFE$ scheme.
    
    During the reduction both parties $B$ and $C$ will need to obtain independently sampled secret keys for the functional encryption scheme. Since our $\QFE$ scheme is only single-query secure we cannot allow the adversary to sample two secret keys. Instead we reduce to the notion of 2-player single-query IND-security which we define in \cref{qfe:def:na2playerindsecurity}. This security notion allows two recipients of a ciphertext that don't further communicate to each receive a functional secret key from the single-query secure $\QFE$ scheme. We also show that this security notion is implied by single-query IND-secure $\QFE$.

    Let $\Adv^* = (A^*,B^*,C^*)$ be the adversary in the 2-player single-query non-adaptive IND-security game against the quantum functional encryption scheme. $A^*$ receives the public key $\mpk$ from the experiment and runs $A$ on input $(1^\lambda, \mpk)$ until $A$ outputs messages $\rho_{\msg_0},\rho_{\msg_1}$. Sample $b \leftarrow \bin$.
    
    To create the first challenge message $\rho_{\msg^*_0}$ $A^*$ performs the steps of the honest $\Enc$ algorithm without the creation of the $\QFE$ ciphertext. Then $A^*$ sets 

    $$\rho_{\msg^*_0} = (\rho_{\msg_b}\otimes |0\rangle\langle0|^{n(\lambda)} \otimes |0\rangle\langle0|^{\otimes 2l(\lambda)+ t(\lambda)+s(\lambda)}\otimes |0\rangle\langle0|)$$

    To create the challenge message $\rho_{\msg^*_0}$ $A^*$ performs encryption as defined in $\Enc^*$ without the creation of the ciphertext (step 5) but with the teleportation (step 6) to obtain teleportation keys $(a_0,b_0), (a_1,b_1)$. In step 4 use the bit $b$ that was already sampled. Then the message $\rho_{\msg^*_1}$ is defined as

    $$\rho_{\msg^*_1} = (\rho_{\msg_0}\otimes \rho_{\msg_1} \otimes \sigma_0^A \otimes \sigma_0^B \otimes \rho_{UE} \otimes C_{Dec} \otimes |1\rangle\langle 1|)$$

    Note that $\Adv$ is not required to copy the messages $\rho_{\msg_0}$ and $\rho_{\msg_1}$ to define the challenge messages. According to the IND-security experiment $\Adv^*$ can define both messages by referring to a single quantum state, this is a special case of \cref{qfe:def:admis}. $\Adv^*$ declares the messages $\rho_{\msg^*_0},\rho_{\msg^*_1}$ and additionally outputs the circuit descriptions $\sk_{C_B} = U_{(C,a_0,b_0)}$ and $\sk_{C_C} = U_{(C, a_1,b_1)}$.

    Both $\sk_{C_B}$ and $\sk_{C_C}$ are admissible function queries since 

    $$ U_{(C,a_0,b_0)}(\rho_{\msg^*_0}) = \rho_{\msg_b} = U_{(C,a_0,b_0)}(\rho_{\msg^*_1})$$
    
    and 
    
    $$ U_{(C,a_1,b_1)}(\rho_{\msg^*_0}) = \rho_{\msg_b} = U_{(C,a_1,b_1)}(\rho_{\msg^*_1})$$

    and $\Adv^*$ is no longer entangled with the input messages.

    $\Adv^*$ receives the ciphertext $\rho_\ct$ and runs $\Adv$ to obtain $\rho_{BC}$. 

    Then $B^*$ and $C^*$ are activated with the state $(\rho_B,\sk_{C_B},b)$ and $(\rho_C, \sk_{C_C},b)$ respectively and each run $B$ and $C$ on input $(\rho_B,\sk_{C_B})$ and $(\rho_C,\sk_{C_C})$ respectively until they output a bit $b_B$, $b_C$. 

    $A^*,B^*,C^*$ simulate Hybrid 0 if $\rho^*_{\msg_0}$ is picked as challenge and they simulate Hybrid 1 if $\rho^*_{\msg_1}$ is picked. Let $b^* \in \bin$ denote the choice of the challenge message. 
    
    $B^*$ outputs $b_B^* = 0$ if $b_B = b$ otherwise $B^*$ outputs $b_B^* = 1$, similarly $C^*$ outputs $b_C^* = 0$ if $b_C = b$ and otherwise outputs $b_C^* = 1$.

    This means that if $b_B = b_C = b$ we have $b_B^* = b_C^* = 0$. 
     For Hybrid 1 we show in \cref{qfe:lem:reductiontoueq} that $b_B = b_C = b$ only occurs with negligible advantage.

    The winning probability of $\Adv^*$ is 
    \begin{align*}
        &\frac{1}{2} (\Pr[b_B^* = b_C^* = 0 | b^* = 0] + \Pr[b_B^* = b_C^* = 1 | b^* = 1])\\
        &= \frac{1}{2} (\Pr[b_B = b_C = b | b, b^* = 0] + (\Pr[b_B \neq b \lor b_C \neq b | b, b^* = 1 ]) )\\ 
        &=\frac{1}{2} ( \underbrace{\Pr[b_B = b_C = b | b, b^* = 0]}_{p_0} + (1 - \underbrace{\Pr[b_B = b_C = b | b, b^* = 1 ]}_{1/2 + negl(\lambda)}) )\\ 
        &= \frac{1}{2} (p_0 + 1/2-negl(\lambda))
    \end{align*}

    Therefore, if the advantage of $\Adv$ in Hybrid 0 is non-negligible $p_0 = \frac{1}{2} + non-negl(\lambda)$, $\Adv^*$ can break the 2-player IND-security of QFE with non-negligible advantage. 
    
\end{proof}

\begin{lemma}
\label{qfe:lem:reductiontoueq}
        In Hybrid 1 the advantage of $\Adv = (A, B, C)$ is negligible if UEQ is secure.
    \end{lemma}
\begin{proof}
    We show that an adversary that breaks the security of the unclonable functional encryption scheme breaks uncloneability of the underlying $\UEQ$ encryption scheme with quantum decryption keys with the same advantage. 

    Let $\Adv = (A,B,C)$ be an adversary that breaks the security of the unclonable functional encryption scheme. Then we can build an adversary $\Adv^* = (A^*, B^*, C^*)$ that breaks the security of the $\UEQ$ scheme. In the role of $A^*$ send challenge messages $b_0  = 0, b_1= 1$ to the experiment and obtain $\rho_{UE}$.
    
    Create the setup for unclonable functional encryption $(\mpk, \msk) \leftarrow \Setup(1^\lambda)$ and run $A$ on input $(1^\lambda, \mpk)$. Receive the challenge messages $\rho_{\msg_0}, \rho_{\msg_1}$ from $A$.
    
    Build the ciphertext as in Hybrid 2: Sample 2 sets of n EPR pairs $\sigma_0^{AB}$ and $\sigma_1^{AB}$. Let $\sigma_0^A$, $\sigma_1^A$ denote registers containing the first qubit of each EPR pair and $\sigma_0^B$,$\sigma_0^B$ denote registers containing the second qubit of each EPR pair respectively. 
    
    Create the ciphertext $\rho_\ct = \QFE.\Enc(\mpk,(\rho_{\msg_0}\otimes \rho_{\msg_1} \otimes \sigma_0^A \otimes \sigma_0^B \otimes \rho_{UE}  \otimes C_{Dec} \otimes |1\rangle\langle 1|))$ and send $\rho_\ct$ to  $A$. If $\rho_{UE}$ is an encryption of $b=0$ then $\rho_\ct$ is an encryption of $\rho_{\msg_0}$, if $\rho_{UE}$ is an encryption of $b=1$ then $\rho_\ct$ is an encryption of $\rho_{\msg_1}$.

    Additionally $\Adv$ outputs $\rho_{\st_B} = (\msk, \sigma_0^B)$ and $\rho_{\st_C} = (\msk,\sigma_1^B)$.
    
    $A$ performs the splitting channel and outputs a state $\rho^{BC}$ which is also the state that $A^*$ defines as it's result of the splitting channel. 
    Now $B^*$ and $C^*$ are activated. They take as input the states $\rho^B,\rho_{\st_B} $ and $\rho^C, \rho_{\st_C}$ respectively and each receive a copy of the secret key $|\dk\rangle$ from the experiment.

    $B^*$ teleports the state $ 0^\lambda \otimes |\dk\rangle$ trough the EPR pairs $\sigma_0^B$ and obtains the teleportation correction keys $a_0,b_0$. 
    
    He produces the secret key $\dk_B = \QFE.\Keygen(\msk, U_{(C,a_0,b_0)})$ where $C$ is the identity circuit. 
    He runs the adversary $B$ on input $\rho^B$ and the secret key $\dk_B$ and outputs whatever $B^*$ outputs.

    $C^*$ does the same actions as $B^*$ on his respective EPR pairs. He teleports the state $0^\lambda \otimes |\dk\rangle$ trough the EPR pairs $\sigma_1^B$ and obtains the teleportation correction keys $a_1,b_1$. 
    
    He produces the secret key $\dk_C = \QFE.\Keygen(\msk, U_{(C, a_1,b_1)})$ where $C$ is the identity circuit. He runs the adversary $C$ on input $\rho^C$ and the secret key $\dk_C$ and outputs whatever $C^*$ outputs. 
    
    $(A^*, B^*, C^*)$ wins with the same probability as $(A,B,C)$. 
\end{proof}
    
\end{proof}

\begin{lemma}
    Any non-adaptive unclonable-indistinguishable functional encryption scheme is a  public key unclonable-indistinguishable encryption scheme with variable decryption keys  (\cref{qfe:def:pkeue}). 
\end{lemma}
\begin{proof}
    Note that the key queries in the unclonable functional encryption experiment do not have to be admissible queries. In particular $B$ and $C$ can both obtain a secret key for the circuit that computes the identity even if $\rho_{\msg_0}, \rho_{\msg_1}$ are different messages. This defines decryption keys for an unclonable public-key encryption scheme. Security and correctness follow as a special case of the security and correctness of the unclonable functional encryption scheme.
\end{proof} 

\begin{corollary}
    There exists a universal public-key unclonable-indisinguishable encryption scheme with variable decrpytion keys  (\cref{qfe:def:pkeue})  for n-bit messages assuming a single-query QFE scheme (\cref{qfe:def:UEQFE}) and assuming the existance of an unclonable-indistinguishable encryption scheme with quantum decryption keys for single bit messages (\cref{qfe:def:ueqdec}). 
\end{corollary}

\begin{remark}
   The techniques given in this section can be applied more generally. Assuming the existence of UE with a given level of security, any QFE scheme can be modified to achieve unclonable security at the same level, without explicitly using the original UE scheme. For example, instead of reducing to the security of UE with variable decryption keys we could reduce to an ideal UE scheme, i.e. an unclonable-indistinguishable encryption scheme with standard classical decryption keys and then also achieve this ideal security notion for the unclonable QFE scheme. 
\end{remark}
\section{From Quantum Multi-input Functional Encryption to Quantum Indistinguishability Obfuscation}
\label{qfe:sec:qmife}
In this section we first define multi-input functional encryption in the quantum setting. Then, we show that multi-input quantum functional encryption implies quantum indistinguishability obfuscation. In the classical setting it is known that IND-secure multi-input functional encryption and qiO are equivalent, one notion can be constructed from the other~\cite{EC:GGGJKH14}. An interesting open question that we do not address in this work is from what assumptions IND-secure quantum multi-input functional encryption could be constructed.

\subsection{Definitions}
 In this section we are switching to a secret-key flavor of functional encryption. The adversary cannot create ciphertexts on its own but has to query the $\Enc$ functionality for this. First we establish the syntax of a quantum multi-input functional encryption scheme.

A quantum multi-input functional encryption scheme $\QMIFE$ for a family of circuits $\{\mathcal{C}_\lambda\}_{\lambda \in \N}$ with input space $\mathcal{X}_\lambda$ and output space $\mathcal{Y}_\lambda$ consists of four algorithms $(\Setup, \Keygen, \Enc, \Dec)$ as described below.
\begin{itemize} [align=left,leftmargin=2.8em]
    \item[$\bm{\Setup}$]  $\Setup(1^\lambda, n) \rightarrow (\msk, \ek_1, \ldots, \ek_n)$ is a QPT algorithm that takes as input the security parameter $\lambda \in \N$ and the number of input qubits $n \in \N$. It outputs $n$ encryption keys $\ek_1, \ldots, \ek_n$ and a master secret key $\msk$.
    \item[$\bm{\Keygen}$] $\Keygen(\msk, C)\rightarrow \sk_C$ is a QPT algorithm that takes as input the master secret key $\msk$ and a circuit $C \in \mathcal{C}_\lambda$ and outputs a corresponding secret key $\sk_C$.
    \item[$\bm{\Enc}$] $\Enc(\ek,\rho_x)\rightarrow \rho_\ct$ is a QPT algorithm that takes as input an encryption key $\ek_i \in\left(\ek_1, \ldots, \ek_n\right)$ and an input message $\rho_x \in \mathcal{X}$ and outputs a ciphertext $\rho_\ct$. In the case where all of the encryption keys $\ek_i$ are the same, we assume that each ciphertext $\rho_\ct$ has an associated label $i$ to denote that the encrypted plaintext constitutes an $i$'th input the circuit $C \in \mathcal{C}_\lambda$. For convenience of notation, we omit the labels from the explicit description of the ciphertexts. It might also be useful to distinguish between classical and quantum input. Since any classical input can be embedded in a quantum state we do not explicitly differentiate between these two types of inputs here. 
    \item[$\bm{\Dec}$] $\Dec(\sk_C, \rho_{\ct_1}, \ldots, \rho_{\ct_n})\rightarrow \rho_y$ is a QPT algorithm that takes as input a secret key $\sk_C$ and $n$ ciphertexts $\rho_{\ct_1}, \ldots, \rho_{\ct_n}$ and outputs a state $\rho_y \in \mathcal{Y}_\lambda$.
\end{itemize}

In the description of this syntax we only declared inputs, outputs and ciphertexts explicitly as quantum states but other parts of the scheme such as keys could also contain quantum data in a specific instantiation. 

\subsubsection{Indistinguishability Based Security}The scheme is parameterized by $k$ which denotes the number of ciphertexts the adversary is allowed to learn per secret key.

Admissible challenge messages are defined using the same concept as in \cref{qfe:sec:inddef} for the IND-security of simple functional encryption. We additionally have to take into account that the adversary can choose between different combinations of input ciphertexts to evaluate the circuit. 

\begin{definition}(Admissible queries for $\QMIFE$)
\label{qfe:def:compatabilityQ}
Let $Q$ be a set of circuits containing circuits $C \in \{C_\lambda\}_{\lambda \in \N}$ with input size $n$. The adversary in $\Exp_{\mathcal{A}}^{\operatorname{IND}-\QMIFE}$ specifies a challenge query by states $\rho_{m^0}, \rho_{m^1}$ with the following structure: A state $ \rho_{m^b_{h,j}}$ is defined by taking the partial trace of $\rho_{m^b}$ indexed by $h \in [n], j \in [k]$:

$$ \rho_{m^b_{h,j}} = \Tr_{(\Bar{h},\Bar{j})}[\rho_{m^b}]$$

The messages are grouped in vectors $X^0, X^1$ where $X^b = \{\rho_{m^b_{1,j}}^{EU}, \cdots, \rho_{m^b_{n,j}}^{EU}\}_{j \in [k]}$. For each challenge message indexed by $h \in [n], j \in [k], b \in \bin $ the adversary can specify a register  $E$ that will be used for encryption and a register $U$ that will be returned unencrypted. The challenge messages corresponding to $1-b$ are not returned to the adversary.

Let $\rho^U_{m^b_{j^*}}$ be a state that groups together the registers not used for encryption, the state contains $\rho^{U}_{m^b_{h,j^*}}$ for all $h \in [n]$ and a specific choice of $j^* = (j_1, \ldots, j_n)$ with each $j_i \in [k]$.

 We say $(X^0,X^1)$ and $Q$ are compatible  if the following property is satisfied for all $C \in Q$ and for all choices of $j^*$:

\begin{align*}
     \T\left(\sum_{i} p_i C(\rho^E_{\msg^0_{1,j_1,i}}, \cdots, \rho^E_{\msg^0_{n,j_n,i}}) \otimes \rho^U_{m^b_{j^*,i}} \otimes \rho_{A_i}, \sum_{i} q_i C(\rho^E_{\msg^1_{1,j_1,i}}, \cdots, \rho^E_{\msg^1_{n,j_n,i}}) \otimes \rho^U_{m^b_{j^*,i}} \otimes \rho_{A_i}\right) \leq negl(\lambda)
\end{align*}

where $\rho_A$ is the local state of the adversary.

\end{definition}

\begin{definition}(Quantum MIFE IND-Security)

Let $\QMIFE = (\Setup,\Keygen, \Enc, \Dec)$ be a quantum multi-input functional encrpytion scheme for a circuit family $\{C_\lambda\}_{\lambda \in \N}$ and let $\Adv = (\Adv_1, \Adv_2)$ be a QPT adversary.

\begin{align*}
& \Exp_{\mathcal{A}}^{\operatorname{IND}-\QMIFE}\left(1^\lambda\right): \\
& (\ek, \msk) \leftarrow \Setup\left(1^\lambda,n\right) \\
& \left(\vec{X}^0, \vec{X}^1, \rho_{\st_1}\right) \leftarrow \mathcal{A}_1^{\Keygen\left(\msk_{,}\cdot\right)}\left(1^\lambda,n\right) \text { where } \vec{X}^{\ell}=\left\{\rho_{\msg_{1, j}}^{\ell}, \ldots, \rho_{\msg_{n, j}}^{\ell}\right\}_{j \in [k]} \\
& b \leftarrow\{0,1\} \\
& \ct_{i, j} \leftarrow \Enc\left(\mathrm{ek}_i, \rho_{\msg_{i, j}}^b\right) \forall i \in[n], j \in[k] \\
& b^{\prime} \leftarrow \mathcal{A}_2^{O(\cdot)}\left(\rho_{\st_1}, \{\rho_{\ct_{i,j}}\}_{i \in [n], j \in [k]}\right) \\
& \text{Output: } \left(b=b^{\prime}\right)
\end{align*}

Let $Q$ denote the entire set of key queries made by $\mathcal{A}$. Then, the challenge message vectors $\vec{X}_0$ and $\vec{X}_1$ chosen by $\mathcal{A}_1$ must be compatible with $Q$ (\cref{qfe:def:compatabilityQ}).
The scheme is $k$-IND-secure if for every QPT adversary $\mathcal{A}=\left( \mathcal{A}_1, \mathcal{A}_2\right)$, the advantage of $\mathcal{A}$ defined as
$$
\operatorname{Adv}_{\mathcal{A}}^{\QMIFE, \mathrm{IND}}\left(1^\lambda\right)=\left|\operatorname{Pr}\left[\Exp_{\mathcal{A}}^{\operatorname{IND}-\QMIFE}\left(1^\lambda\right)=1\right]-\frac{1}{2}\right| \leq negl(\lambda)
$$

Adaptive vs. Non-adaptive security

\begin{itemize}
    \item The scheme is called non-adaptively secure if the the adversary only queries
the $\Keygen$ oracle before receiving a ciphertext. Then the oracle $O(\cdot)$ is the
empty oracle.
\item The scheme is called adaptively secure if the adversary can either query the
KeyGen oracle before or after receiving the ciphertext. Then the oracle $O(\cdot)$
is the function $\Keygen(\msk, \cdot)$.
\end{itemize}

\end{definition}

\subsubsection{Simulation Security} In the simulation security setting we need to give the simulator access to the output of the circuit evaluated on any combination of inputs. In the classical setting this is simple: There is a trusted part which holds the input messages $\vec{X}=\left\{\msg_{1, j}, \ldots, \msg_{n, j}\right\}_{j \in [k]}$ and the simulator can specify a queries of the form $(g, j_1, \ldots, j_n)$ where $g$ is a function and $j_1$ to $j_n$ are indices selecting the input for the function. The simulator can make multiple queries using an arbitrary combination of indices and any function that the adversary requested keys for. 

In the quantum setting we run into the issue that the inputs which are quantum states cannot be reused arbitrarily. On the other hand for some functionalities it might be possible or even desired that after obtaining one output the state of the input ciphertext can be restored by uncomputing the decryption unitary. Then the inputs can be reused to evaluate the same or a different functionality on a combination of input ciphertexts. 

In the quantum setting a standard way of modelling quantum access to a oracle is the following. The user specifies a query $|\phi\rangle = \sum_i \alpha_i |x_i\rangle |u_i\rangle$ and the oracle answers with the state  $|\phi'\rangle = \sum_i \alpha_i |x_i\rangle |u_i \xor f(x_i)\rangle$. This state is computed by first applying $f$ to the $x$ register, xoring the result to the $u$ register and uncomputing the function on the $x$ register. We can use the same concept to define how the trusted party answers queries with the difference that the trusted party already holds the input register. This allows the trusted party to reuse the input messages and answer multiple queries of the form $(g, \sum \alpha_l |j_{1,l}\otimes \ldots \otimes j_{n,l}\rangle )$. It is to be noted though that this causes the answer register to be entangled with the input register. Therefore a measurement by the simulator will also collapse the input state and multiple evaluations are not guaranteed to work correctly.

\begin{definition}(Quantum MIFE SIM-Security)
    A multi-input functional encryption scheme for a circuit family $\{C_\lambda\}_{\lambda\in \N}$ is k-SIM-secure if for every QPT adversary $\Adv = (\Adv_1, \Adv_2)$ there exists a stateful simulator $\Sim$ such that the outputs of the following experiments are computationally indistinguishable: 

     \begin{table}[H]
        \centering
        \begin{tabular}{p{5.5cm}|p{6.5cm}}
            $\Exp_{\Adv}^{Real}(1^\lambda)$ & $\Exp_{\Adv}^{Ideal}(1^\lambda)$ \\
            $(\{\ek_i\}_{i\in[n]},\msk) \leftarrow \Setup(1^\lambda,n)$& \\
            $ (X,  \st) \leftarrow \Adv_1^{\Keygen(\cdot)}(1^\lambda,n)$ & $ (X,  \st) \leftarrow \Adv_1^{O_1(\cdot)}(1^\lambda,n)$\\
            where $X = \{ {\rho_{\msg_{1,j}}}, \dots, {\rho_{\msg_{n,j}}} \}_{j \in [k]}$&where $X = \{ {\rho_{\msg_{1,j}}}, \dots, {\rho_{\msg_{n,j}}} \}_{j \in [k]}$\\
            $ {\rho_{\ct_{i,j}}} \leftarrow \Enc(\ek_i,  {\rho_{\msg_{i,j}}})  \forall i \in [n], j \in [k]$& $ \{\rho_{\ct_{i,j}}\}_{i, j} \leftarrow \Sim^{\TP(\cdot)}(1^\lambda, 1^{|C|}, \{1^{| \rho_{\msg_{i,j}}|}\}_{i\in[n], j \in [k]})$ \\
            $\alpha \leftarrow \Adv_2^{O_2'(\cdot)}( \{\rho_{\ct_{i,j}}\}_{i\in[n],j\in[k]},  \st)$ &  $\alpha \leftarrow \Adv_2^{O_2(\cdot)}( \{\rho_{\ct_{i,j}}\}_{i\in[n],j\in[k]},  \st)$\\
            The experiment outputs $\alpha$ & The experiment outputs $\alpha$\\
        \end{tabular}
    \end{table}

    where the oracle $\TP(\cdot)$ denotes the ideal world trusted party. It accepts queries of the form $(g, \sum \alpha_l |j_{1,l}\otimes \ldots \otimes j_{n,l}\rangle )$ and computes 

     \begin{align*}
        &\sum \alpha_{l} |j_{1,l}, \cdots, j_{n,l}\rangle \otimes \rho_\msg \otimes g(\rho_{\msg_{1,j_{1,l}}}, \ldots,\rho_{\msg_{n,j_{n,l}}})\\
    \end{align*}

    The message register $\rho_\msg$ is kept by $\TP$ and used for future queries, the rest is  returned to the simulator.

    The oracle $O_1(\cdot)$ is a $\Keygen$ oracle controlled by the simulator and the oracle $O_2(\cdot)$ is a $\Keygen$ oracle controlled by the simulator with access to the trusted party $\TP$. A simulator is admissible if it only queries the trusted party on functionalities that $\Adv$ queried to its oracle.
\end{definition}

\begin{remark} 
    In this definition we have for the first time in this work considered the case of multiple function queries. We remark that defining a multi-query QFE scheme for only a single ciphertext runs into the same issues we described above. Given multiple  function keys a single ciphertext has the possibility to be evaluated to different outputs but physically not all these evaluations might be possible. Therefore a solution as presented here for the multi-input case is necessary and a definition for a multi-query simulation secure $\QFE$ scheme can be derived from this definition by restricting the input to a single message $n=1$.
\end{remark}

\subsection{IND-secure QMIFE implies qiO}
\begin{theorem}
\label{qfe:thm:indtoqio}
    A $\QMIFE$ scheme that fullfills non-adaptive single-query 2-IND-security unconditionally implies quantum indistinguishability obfuscation.
\end{theorem}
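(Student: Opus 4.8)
The plan is to adapt the classical reduction of \textcite{EC:GGGJKH14} from multi-input functional encryption to obfuscation, using the observation that with $k=2$ ciphertexts per encryption key and one input slot per input bit, a single choice of one ciphertext per slot encodes an arbitrary classical input $x\in\bin^N$. To obfuscate a circuit $C$ on $N$-bit inputs I would first run $(\msk,\ek_1,\dots,\ek_N)\gets\QMIFE.\Setup(1^\lambda,N)$, then for every slot $i\in[N]$ produce the two ciphertexts $\ct_{i,0}\gets\QMIFE.\Enc(\ek_i,0)$ and $\ct_{i,1}\gets\QMIFE.\Enc(\ek_i,1)$, appending the classical description of $C$ to the two plaintexts of the first slot. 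Finally I would generate a single function key $\sk_G\gets\QMIFE.\Keygen(\msk,G)$ for the fixed universal circuit $G$ that reads one bit from each slot, recovers $C$ from the first slot, and outputs $C(x_1\cdots x_N)$. The obfuscation is $\tilde C=(\sk_G,\{\ct_{i,b}\}_{i\in[N],b\in\bin})$ and evaluation on $x$ is $\Eval(\tilde C,x)=\QMIFE.\Dec(\sk_G,\ct_{1,x_1},\dots,\ct_{N,x_N})$.

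Correctness follows from \QMIFE correctness, since decrypting the selected combination yields $G(\cdots)=C(x)$; only two ciphertexts per slot are ever consumed, respecting the $k=2$ bound, and the obfuscation is $2N$ ciphertexts plus one key, each of size $poly(\lambda,|C|)$, which gives the required efficiency.

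For security, fix perfectly equivalent $C_0,C_1$ with $||C_0-C_1||_\diamond=0$ and build an IND-\QMIFE adversary $\Adv=(\Adv_1,\Adv_2)$ from an alleged qiO distinguisher. $\Adv_1$ makes its single key query on $G$ and outputs challenge vectors $\vec X^0,\vec X^1$ that agree on the bit in every slot and differ only in the circuit carried in the first slot ($C_0$ versus $C_1$), with empty unencrypted register. The crux is verifying compatibility (\cref{qfe:def:compatabilityQ}): each index choice $j^*\in[2]^N$ corresponds to an input $x\in\bin^N$, and for the only queried circuit $G$ the output on $\vec X^0$ is $C_0(x)$ and on $\vec X^1$ is $C_1(x)$; since $||C_0-C_1||_\diamond=0$ these are the same quantum state, so the trace distance is $0$ and the query is admissible. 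Then $\Adv_2$ receives $\{\ct_{i,j}\}$, assembles $\tilde C=(\sk_G,\{\ct_{i,j}\})$, runs the distinguisher, and forwards its guess; as $\tilde C$ is distributed exactly as $\Obf(C_b)$, any non-negligible qiO advantage yields the same advantage against non-adaptive single-query $2$-IND security, a contradiction.

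I expect the main obstacle to be evaluation on genuinely quantum inputs $\rho_\msg\in\mathcal{X}$ rather than classical $x$. For classical inputs the argument above is immediate. For quantum inputs $\Eval$ must coherently select $\ct_{i,x_i}$ controlled on the $i$th input qubit and run $\Dec$ coherently, and one must argue both that the input register does not decohere through its entanglement with the selected ciphertexts and that the branchwise decryptions reassemble the correct channel $C(\rho_\msg)$. I would therefore first state the reduction for circuits with classical inputs (matching the classical case) and treat the fully quantum-input case as a separate step requiring this coherence analysis.
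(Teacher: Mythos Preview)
Your reduction for classical-input circuits is correct and matches both the classical argument of \textcite{EC:GGGJKH14} and the paper's remark that, for classical inputs, one may dispense with any quantum gadget and let the evaluator simply select one of two ciphertexts per bit. The admissibility check via perfect functional equivalence is also exactly right.

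The gap is the quantum-input case, which is the point of the theorem (the qiO definition here allows arbitrary quantum inputs). Your proposed fix---run $\Dec$ coherently on a superposition of ciphertext tuples controlled by the input qubits---does not work as stated. Even when $C$ is deterministic on computational-basis inputs, the ciphertexts $\ct_{i,0}$ and $\ct_{i,1}$ are produced with independent encryption randomness, so on different branches of the superposition the decryption leaves behind different residual states in the ciphertext registers. This entangles the output register with the ciphertext registers branch by branch, and there is no way to uncompute that garbage without access to the encryption randomness. Hence the evaluator would output a mixture rather than $C(\rho_\msg)$, breaking correctness.

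The paper avoids this entirely by a teleportation trick. In addition to the $2n$ bit-selection slots and a slot carrying $C$, it adds $n$ further slots, each encrypting one half of a fresh EPR pair; the other halves are published in the clear as part of $\tilde C$. The function key is for the circuit $U(C,\rho_1,\dots,\rho_n,a_1,b_1,\dots,a_n,b_n)=C(X^{a_1}Z^{b_1}\rho_1,\dots,X^{a_n}Z^{b_n}\rho_n)$. To evaluate on a quantum input $\rho_x$, the user teleports $\rho_x$ into the encrypted EPR halves, obtains \emph{classical} correction keys $(a_i,b_i)$, and then performs an ordinary classical selection of the bit-ciphertexts encoding those keys before running $\Dec$. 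Thus the only selection is classical, and correctness follows directly from QMIFE correctness plus teleportation. In the security reduction the unencrypted EPR halves are placed in the $U$-register of the challenge query; since the adversary hands them to the experiment and gets them back unencrypted, the admissibility condition of \cref{qfe:def:compatabilityQ} reduces to the same trace-distance-zero argument you gave. This teleportation step is the one missing idea in your proposal.
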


\begin{proof}
    Let $\QMIFE$ be a quantum multi input functional encryption scheme. We define an obfuscation scheme $(\Obf, \Eval)$ for a familiy of circuits $\{C_\lambda\}_{\lambda \in \N}$ that take as input $n$ qubits and are described by a classical string of length $l$.
    
    \paragraph{$\Obf(C)$:} 
    \begin{itemize}
        \item Run $\QMIFE.\Setup(1^\lambda, n')\rightarrow (\msk, \ek_1, \ldots, \ek_{n'})$ where $n' = 3n + l$
        \item Run $\QMIFE.\Keygen(U, \msk) \rightarrow sk_U$ where $U$ is a variant of a universal circuit that computes $U(C, \rho_1, \cdots, \rho_{n}, a_1, b_1, \ldots, a_{n}, b_{n}) = C( X^{a_1} Z^{b_1} \rho_1, \cdots, X^{a_{n}} Z^{a_{n}} \rho_{n})$
        \item Create $n$ ciphertexts that encrypt the bit $b=0$ and $n$ ciphertexts that encrypt $b=1$:
        $$ \forall i \in [2n], b \in \bin : {\ct^b_i} \leftarrow \QMIFE.\Enc(\ek_{i}, b)$$
         \item Create $n$ EPR pairs and take the first qubit of each EPR pair $\rho_e = (\rho_{e_1}, \rho_{e_2})$ and encrypt it: 
        $$\forall i \in [n] : \rho_{ct_{2n+i}} \leftarrow \QMIFE.\Enc(\ek_{2n+i}, \rho_{e_{i,1}})$$.
        \item Encrypt the circuit $C$: 
        $$ \ct_C \leftarrow \QMIFE.\Enc(\ek_{3n+1}, C)$$
    \item Output $\Tilde{C} = (\sk_U, \ct_C, \{\ct_i^b\}_{i \in [n],b\in \bin}, \{\rho_{\ct_i}\}_{i \in [n]}, \{ \rho_{e_{i,2}}\}_{i \in [n]})$
    \end{itemize}

    \paragraph{$\Eval(\Tilde{C}, \rho_x)$}
    \begin{itemize}
        \item Teleport the state $\rho_x$ which is of size $n$ trough the EPR pairs $\rho_{e_{1,2}} \otimes \cdots \otimes \rho_{e_{n,2}}$ and obtain $((a_1, b_1), \cdots, (a_n, b_n))$ as teleportation keys.
        \item Select the remaining ciphertexts such that they are encryptions of $(a_i,b_i)$: $\forall i \in [n]$ select $\ct_i^{a_i}$ and $\ct_{i+1}^{b_i}$.
        \item Run $\QMIFE.\Dec(\sk_U, \ct_C, \rho_{\ct_1}, \ldots, \rho_{\ct_n}, \ct^{a_1}_1, \ct^{b_1}_2, \cdots, \ct^{a_n}_{2n-1}, \ct^{b_n}_{2n}) = \rho_y$
    \end{itemize}

First we analyse the correctness of the scheme. By correctness of the $\QMIFE$ scheme and correctness of the teleportation gadgets the scheme outputs the correct evaluation.
\begin{align*}
    &\QMIFE.\Dec(\sk_U, \ct_C, \rho_{\ct_1}, \ldots, \rho_{\ct_n}, \ct^{a_1}_1, \ct^{b_1}_2, \cdots, \ct^{a_n}_{2n-1}, \ct^{b_n}_{2n})\\
    &=U(C, \rho_1, \cdots, \rho_{n}, a_1, b_1, \ldots, a_{n}, b_{n})\\
    &=C( X^{a_1} Z^{b_1} \rho_1, \cdots, X^{a_{n}} Z^{a_{n}} \rho_{n})\\
    &= C(\rho_{x_1}, \ldots, \rho_{x_n})
\end{align*}

We note that a honest user will only be guaranteed one use of the obfuscated program since the teleportation ciphertexts are consumed during this operation. If the quantum circuit belongs to a class of circuits that only take classical inputs we can avoid the use of the teleportation helper state and the scheme can be redefined to let the user select it's classical inputs in the same manner as the bits $(a_i,b_i)$ are selected here. This will still not guarantee a reusable qiO scheme since the obfuscated circuit itself might be a quantum state that collapses during evaluation. 

No we show that the security of the qiO scheme can be reduced to the security of the underlying $\QMIFE$ scheme. Let $\Adv$ be an adversary that wins the qiO experiment with non-negligible advantage. Then we can construct an adversary $\AdvB$ that wins the $\QMIFE$ IND-security experiment with non-negligible advantage. 

$\AdvB$ receives $1^\lambda$ and runs $\Adv$ on input $1^\lambda$ until $\Adv$ outputs $(C_0, C_1)$.  
$\AdvB$ queries the $\Keygen$ oracle on the function $U$ as defined above and receives the secret key $\sk_U$.

$\AdvB$ constructs it's challenge vectors as follows: 
Sample $n$ EPR pairs $\rho_{e_i} = \frac{1}{\sqrt{2}} (|0\rangle^1|0\rangle^2 \otimes |1\rangle^1|1\rangle^2) = (\rho_{e_{i,1}},\rho_{e_{i,2}}) $ and put the first qubit each in the challenge vector $X^0$ and put the second qubit each in the 'do not encrypt' part of the challenge query. 
$X^0 = (C_0, \rho_{e^0_{1,1}}, \cdots, \rho_{e^0_{n,1}}, \{a_{i,1} = 0, a_{i,2} = 1, b_{i,1} = 0, b_{i,2} = 1\}_{i \in [n]})$
Sample $n$ additional  EPR pairs and put the first qubit each in the challenge vector $X^1$ and put the second qubit each in the 'do not encrypt' part of the challenge query. 
$X^1 = (C_1, \rho_{e^1_{1,1}}, \cdots, \rho_{e^1_{n,1}}, \{a_{i,1} = 0, a_{i,2} = 1, b_{i,1} = 0, b_{i,2} = 1\}_{i \in [n]})$. Let $\rho^U_{X^b} = \rho_{e^b_{1,2}}\otimes \cdots \otimes \rho_{e^b_{n,2}}$ for each $b \in \bin$.

The experiment sends $(\ct_C,  \{\rho_{\ct_i}\}_{i \in [n]}, \{\ct_i^d\}_{i \in [n],d\in \bin}, \{ \rho_{e_{i,2}}\}_{i \in [n]})$ where $\ct_C$ is the encryption of $C_b$, $\{\rho_{\ct_i}\}_{i \in [n]}$ are the encryptions of the EPR pair halves, $\{\ct_i^d\}_{i \in [2n],d\in \{a,b\}}$ are the encryptions of $a_i,b_i$ and the unencrypted second halves of the EPR pairs $\{ \rho_{e_{i,2}}\}_{i \in [n]}$ to $\AdvB$.

No we need to verify that the query $(U,X^0, X^1)$ forms an admissible query for the $\QMIFE$ IND-experiment according to \cref{qfe:def:compatabilityQ}.
The challenge vectors $X^1, X^0$ only differ in the first component which contains $C_b$. Let the state $\rho_{X^0,ab}$ and $\rho_{X^1,ab}$ denote the state containing the classical bit queries of each challenge vector. Then for the inputs to $U(C_0, \cdot), U(C_1, \cdot)$ it holds that 

$$ \T\left(\underbrace{\rho_{e^0_{1,1}}\otimes  \cdots \otimes \rho_{e^0_{n,1}}}_{\rho_{X^0,e}} \otimes \rho_{X^0,ab} , \underbrace{\rho_{e^1_{1,1}}\otimes \cdots \otimes \rho_{e^1_{n,1}}}_{\rho_{X^1,e}} \otimes \rho_{X^1,ab}  \right) = 0 $$

$\AdvB$ does not need to keep any information other than the secret key in it's local state $\rho_B$, in particular $\AdvB$ is not entangled with any part of the challenge query (the remaining halves of the EPR pairs of the challenge query are given away to the experiment and returned without encryption per the definition of admissible queries). Let these qubits be contained in the registers $\rho^U_{X^0}$ and $\rho^U_{X^1}$ respectively.

$$\T\left( \sum_i \rho_{X^0,e,i} \otimes \rho_{X^0,ab} \otimes \rho^U_{X^0,i} \otimes \rho_B, 
    \sum_i \rho_{X^1,e,i}  \otimes \rho_{X^1,ab} \otimes \rho^U_{X^1,i}\otimes \rho_B\right) =0$$

By the requirement of the qiO IND-experiment the circuits $C_0, C_1$ are perfectly functionally equivalent. The circuits $U(C_0, \cdot), U(C_1, \cdot)$ inherit this property. 

Then, 

    $$\T\left( \sum_i U(C_0,\rho_{X^0,e,i}, \rho_{X^0,ab}) \otimes \rho^U_{X^0,i} \otimes \rho_B, 
    \sum_i U(C_1,\rho_{X^1,e,i},\rho_{X^1,ab}) \otimes \rho^U_{X^1,i}, \rho_B\right) =0$$
which means the query $(U,X^0, X^1)$ forms an admissible query for the $\QMIFE$ IND-experiment.

$\AdvB$ sends the obfuscated circuit $\Tilde{C_b} = (\sk_U,\ct_C, \{\ct_i^d\}_{i \in [2n],d\in \{a,b\}}, \{\rho_{\ct_i}\}_{i \in [n]}, \{ \rho_{e_{i,2}}\}_{i \in [n]})$ to $\Adv$.
$\AdvB$ outputs whatever $\Adv$ outputs. If the $\QMIFE$ IND-experiment selected $X^0$ as a challenge it perfectly simulates an obfuscation of $C_0$ if the $\QMIFE$ IND-experiment selected $X^1$ as a challenge it perfectly simulates an obfuscation of $C_1$.
Therefore $\AdvB$ wins with the same probability as $\Adv$.

\end{proof}

\subsection{SIM-secure QMIFE implies QVBB}
 In this section we show that a simulation-secure QMIFE implies QVBB, even if we cannot hope to achieve such a construction. It is known that quantum virtual black box obfuscation is impossible to achieve for general circuits~\cite{AF16}, therefore, impossibility of QMIFE immediately follows.

\begin{theorem}\label{QMIFE:thm:VBBO}
    A QMIFE scheme that fulfills non-adaptive single-query 2-SIM-security unconditionally implies virtual black box quantum obfuscation.
\end{theorem}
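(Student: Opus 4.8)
The plan is to reuse \emph{verbatim} the obfuscation scheme $(\Obf,\Eval)$ built in the proof of \cref{qfe:thm:indtoqio}; only the security argument changes, replacing the appeal to IND-security by an appeal to SIM-security of the $\QMIFE$ scheme. Correctness of $(\Obf,\Eval)$ is exactly as before, so it remains to produce, for every QVBB distinguisher $\Adv$, a single QPT simulator $\Sim$ that reproduces $\Adv(\Tilde C)$ given only $1^\lambda$, $1^{|C|}$ and superposition oracle access to $C(\cdot)$.

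First I would package the obfuscator together with $\Adv$ as a $\QMIFE$ adversary $\AdvB=(\AdvB_1,\AdvB_2)$: $\AdvB_1$ queries its $\Keygen$ oracle on the universal circuit $U$ (where $U(C,\rho_1,\dots,a_1,b_1,\dots)=C(\Xgate^{a_1}\Zgate^{b_1}\rho_1,\dots)$), samples the $n$ EPR pairs, and outputs message vectors consisting of the encrypted halves, the correction bits $b\in\bin$, and $C$, while keeping the unencrypted EPR halves in its internal state; $\AdvB_2$ assembles $\Tilde C$ from the received ciphertexts, the key $\sk_U$, and the retained halves, runs $\Adv$ on $\Tilde C$, and outputs its bit. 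By construction the \emph{real} $\QMIFE$ experiment with $\AdvB$ is distributed exactly as $\Adv(\Obf(C))$.

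SIM-security then hands me a $\QMIFE$ simulator $\QMIFE.\Sim$ whose \emph{ideal} experiment is computationally indistinguishable from the real one, and in which the ciphertexts are fabricated from the message \emph{sizes} alone, using only oracle access to the trusted party $\TP$ that holds the true messages and answers evaluation queries of the form $(U,\sum_l\alpha_l|j_{1,l}\otimes\cdots\rangle)$. I would define the QVBB simulator $\Sim^{C(\cdot)}$ to run $\QMIFE.\Sim$ and $\AdvB_2=\Adv$ internally and to service each $\TP$-query itself, exploiting the identity $U(C,\cdot)=C(\Xgate^{a}\Zgate^{b}\,\cdot)$: for the selected index it applies the (classically determined) Pauli corrections to the EPR-half registers it already holds, invokes its superposition oracle for $C$, and keeps the answer register entangled with the retained message register exactly as $\TP$ prescribes. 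Because $\AdvB$ and $\QMIFE.\Sim$ never read the plaintext values (only their lengths), the entire procedure touches $C$ only through these oracle calls, so $\Sim$ is one algorithm that works for every $C$; the required bound on $|\Pr[\Adv(\Tilde C)=1]-\Pr[\Sim^{C(\cdot)}(1^\lambda,1^{|C|})=1]|$ is then immediate from SIM-security.

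I expect the main obstacle to be the faithful emulation of $\TP$ under black-box access to $C$: one must verify that \emph{every} evaluation query generated inside the obfuscation is of the shape ``apply $C$, preceded by classically-selected Pauli corrections, to registers the simulator already possesses'', so that it can be answered by a controlled call to the $C$-oracle while preserving the entanglement between the answer register and the retained input register that the $\TP$ interface demands. A second point to check is the universality of $\Sim$ in $C$ --- that the dependence of $\QMIFE.\Sim$ on $\AdvB$ introduces no dependence on $C$ beyond the oracle, which holds precisely because the $\QMIFE$ simulator is oblivious to plaintext contents. Finally, since QVBB for general circuits is impossible \cite{AF16}, this reduction simultaneously rules out non-adaptive single-query $2$-SIM-secure $\QMIFE$, matching the remark preceding the theorem.
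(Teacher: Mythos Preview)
Your proposal is correct and follows essentially the same approach as the paper: reuse the obfuscator from \cref{qfe:thm:indtoqio}, wrap the QVBB adversary into a $\QMIFE$ adversary, and let the QVBB simulator run the $\QMIFE$ simulator while servicing its $\TP$ queries by applying the classically-selected Pauli corrections to the locally held EPR halves and then invoking the $C$-oracle. The paper's own proof is extremely terse (it simply says the simulator ``creates $n$ EPR pairs'' and ``forwards the query to its own oracle''); your write-up is more explicit about the two genuine verification points---faithful black-box emulation of $\TP$ and $C$-independence of the resulting simulator---which the paper leaves implicit.
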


\begin{proof}
The same construction as in the previous proof of \cref{qfe:thm:indtoqio} implies QVBB if the QMIFE scheme is 2-SIM secure. 

    For any adversary $\Adv(1^\lambda)$ we define a simulator $\Sim(1^\lambda)$ for the scheme as follows. Let $\widetilde{\Sim}(1^\lambda)$ be the simulator for the QMIFE scheme. Then $\Sim$ creates n EPR pairs as required by the construction and 
runs the simulator $\widetilde{\Sim}$ to create the remaining parts of the obfuscated circuit, i.e. the ciphertexts and the key for the universal circuit pairs as defined in \cref{qfe:thm:indtoqio}. Upon receiving a query from $\widetilde{\Sim}$ the simulator forwards the query to its own oracle. Indistinguishability follows from the security of the QMIFE scheme. 
\end{proof}

\printbibliography

\appendix
\section{Additional Definitions and their Relations}

\subsection{Multi-Message Simulation-Secure QFE}\label{sec:Apendix(Multi-Input)}

In \cref{qfe:def:simsecurity} the adversary only chooses a single message. We can adjust the experiment to allow the adversary to choose multiple messages, where each message is a quantum state of dimension $d$. In the Real world the experiment is adjusted as follows:  
\begin{align*}
    &(\rho_{\msg_1}, \dots, \rho_{\msg_n}, st) \leftarrow \Adv^{O_1(\cdot)}(\mpk)\\
    & (\rho_{\ct_i}) \leftarrow \Enc(\mpk, \rho_{\msg_i}) \quad \text{for all } i \in [n]
\end{align*}

In the Ideal world the experiment is adjusted as follows:

\begin{align*}
    &(\rho_{\msg_1}, \dots, \rho_{\msg_n}, st) \leftarrow \Adv^{O_1(\cdot)}(\mpk)\\
    & (\rho_{\ct_1}, \dots, \rho_{\ct_n}) \leftarrow \Sim(1^{\lambda}, \mpk, \mathcal{V}) \quad \text{for all } i \in [n]\\
    &\quad \text{where } \mathcal{V}= (C_f, \sk_f, C_f(\rho_{\msg_1}), \dots, C_f(\rho_{\msg_n}), 1^{d})
\end{align*}

In the classical world it is known that a non-adaptive single-message secure scheme is also secure for multiple messages. In the adaptive setting this is not the case~\cite{C:GorVaiWee12}. We show that the implication from single-message schemes to multi-message schemes in the non-adaptive setting also holds for QFE schemes. To show this we need the function secret key of the QFE scheme to be classical which is true for our scheme but might not be a requirement for every realisation of QFE. 

\begin{lemma}
\label{qfe:lem:singletomulti}
    A non-adaptive single-query simulation-secure QFE scheme with classical secret keys is also a non-adaptive single-query multi-message simulation secure QFE scheme. 
\end{lemma}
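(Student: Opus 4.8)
The plan is to mirror the classical argument of \cite{C:GorVaiWee12} with a hybrid argument that converts the $n$ ciphertexts from honestly encrypted to simulated one at a time. First I would fix the multi-message simulator: given the single-message simulator $\Sim$ from \cref{qfe:def:simsecurity}, define $\Sim_{\text{multi}}$ which, on input $\mathcal{V} = (C, \sk_C, C(\rho_{\msg_1}), \dots, C(\rho_{\msg_n}), 1^{d})$, runs $\rho_{\ct_j} \leftarrow \Sim(1^\lambda, \mpk, (C, \sk_C, C(\rho_{\msg_j}), 1^{|\rho_{\msg_j}|}))$ independently for each $j \in [n]$ and outputs $(\rho_{\ct_1}, \dots, \rho_{\ct_n})$. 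This is well-defined precisely because $\sk_C$ is \emph{classical} and can therefore be copied and reused across all $n$ invocations of $\Sim$; this is the single place where the classical-key hypothesis is indispensable.

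For indistinguishability I would introduce hybrids $H_0, \dots, H_n$, where $H_i$ generates $\rho_{\ct_1}, \dots, \rho_{\ct_i}$ via $\Sim$ (each from the corresponding $C(\rho_{\msg_j})$) and generates $\rho_{\ct_{i+1}}, \dots, \rho_{\ct_n}$ honestly via $\Enc(\mpk, \rho_{\msg_j})$, so that $H_0$ is the real experiment and $H_n$ is the ideal one. To bound the gap between $H_{i-1}$ and $H_i$ I would build a single-message adversary $\Adv^*$ against \cref{qfe:def:simsecurity}: it runs $\Adv_1$, forwarding $\Adv_1$'s single key query to its own oracle $O_1$ while retaining a copy of the classical $\sk_C$; when $\Adv_1$ outputs $(\rho_{\msg_1}, \dots, \rho_{\msg_n}, \rho_\st)$, it submits $\rho_{\msg_i}$ as its own single challenge message and keeps the remaining messages together with $\rho_\st$ and $\sk_C$ as its internal state. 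On receiving the challenge $\rho_{\ct_i}$, it assembles the other ciphertexts: for $j < i$ it applies $C$ to its retained copy of $\rho_{\msg_j}$ and runs $\Sim$, and for $j > i$ it encrypts $\rho_{\msg_j}$ honestly under $\mpk$; finally it runs $\Adv_2$ on all $n$ ciphertexts with $\rho_\st$ and echoes its output. If the challenge was honestly encrypted the view is exactly $H_{i-1}$, and if it was simulated the view is exactly $H_i$, so any distinguisher for the two hybrids breaks single-message SIM-security. Summing over $n = \mathrm{poly}(\lambda)$ steps yields $\{\Exp^{Real}_\Adv\}_\lambda \approx_c \{\Exp^{Ideal}_\Adv\}_\lambda$.

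The \textbf{main obstacle} is twofold and both parts are quantum in nature. First is the classical-key requirement already flagged: $\Adv^*$ must both hand $\sk_C$ to the internal $\Adv_1$ and reuse it in the $\Sim$ calls for $j<i$, which no-cloning would forbid for a quantum key. Second, and more delicate, is entanglement bookkeeping: the messages $\rho_{\msg_1}, \dots, \rho_{\msg_n}$ may be jointly entangled with $\rho_\st$ and with one another, so I must verify that isolating the $i$-th register as the single-message challenge preserves all correlations. This goes through because in the single-message game \emph{all} of $\Adv^*$'s retained registers collectively play the role of the auxiliary state, and the single-query SIM-guarantee already maintains entanglement between the challenge and that auxiliary state; moreover applying the channel $C$ to a locally held register $\rho_{\msg_j}$ preserves its entanglement, so $\Adv^*$'s treatment of the $j<i$ ciphertexts coincides with how the ideal experiment feeds $C(\rho_{\msg_j})$ to $\Sim_{\text{multi}}$. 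A final routine check is that $\Adv^*$ remains QPT, which holds since $n$, $|C|$, and the message dimensions are all polynomial in $\lambda$.
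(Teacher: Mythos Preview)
Your proposal is correct and follows essentially the same approach as the paper: define the multi-message simulator by running the single-message simulator independently on each $C(\rho_{\msg_j})$ (using that $\sk_C$ is classical and hence copyable), and then run a hybrid argument swapping one ciphertext at a time between honest encryption and simulation. The only cosmetic difference is the direction of your hybrids (you convert honest to simulated left-to-right, the paper right-to-left), and you spell out the entanglement bookkeeping more carefully than the paper's terse version.
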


\begin{proof}
    Let $(\Setup, \Keygen, \Enc, \Dec)$ be a non-adaptive single-query simulation-secure QFE scheme with simulator $\Sim$. Then we can construct the following simulator $\Sim^*$ for the multi-message scheme:
    \begin{enumerate}
        \item Obtain $\mathcal{V}= (C_f, \sk_f, C_f(\rho_{\msg_1}), \dots, C_f(\rho_{\msg_n}), 1^{|\rho_{\msg_n}|})$ from the experiment.
        \item For every $i \in [n]$ invoke the single message simulator:
        $$ \rho_{\ct_i} \leftarrow \Sim(1^{\lambda}, \mpk, \{C_f,sk_f, C_f(\rho_{\msg_i})\})$$
        \item output $(\rho_{\ct_1}, \cdots, \rho_{\ct_n})$
    \end{enumerate}

    Let $\Adv$ be an adversary that succeeds in distinguishing the Real and Ideal world in the multi-message experiment. Then there is an adversary $\Adv^*$ that can distinguish Real and Ideal world of the single-message experiment. In the following  way a Hybrid experiment is defined for each $i \in [n]$. $\Adv^*$ receives $\mpk$ and forwards it to $\Adv$. When $\Adv$ makes a key query $C_f$ $\Adv^*$ forwards the query to it's $\Keygen$ oracle and receives $\sk_f$ which it forwards to $\Adv$. When $\Adv$ outputs $(\rho_{\msg_1}, \dots, \rho_{\msg_n})$ $\Adv^*$ encrypts messages $1 $ to $i- 1$ honestly and forwards $\rho_{\msg_i}$ to it's own experiment and receives $\rho_{\ct_i}$. Messages $i+1 $ to $ n$ are encrypted using the simulator $\Sim$. $\Adv^*$ send $(\rho_{\ct_1}, \dots, \rho_{\ct_n})$ to $\Adv$ and outputs whatever $\Adv$ outputs.   Indistinguishability between Hybrids $i$ and $i+1$ follows from the security of the single-message QFE scheme. 
\end{proof}

\begin{corollary}
    The schemes in \cref{qfe:sec:unitaryfe} and \cref{qfe:sec:polyqfe} are non-adaptive single-query multi-message simulation-secure QFE schemes.
\end{corollary}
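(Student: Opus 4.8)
The plan is to obtain this as an immediate consequence of \cref{qfe:lem:singletomulti}. That lemma upgrades any non-adaptive single-query simulation-secure QFE scheme \emph{with classical secret keys} into a multi-message one, so the entire task reduces to checking that the two constructions satisfy both hypotheses.

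First I would verify single-query simulation security in the non-adaptive regime. The theorem of \cref{qfe:sec:unitaryfe} establishes that the single-circuit scheme is adaptively sim-secure, and the theorem of \cref{qfe:sec:polyqfe} does the same for the poly-sized construction. Since adaptive simulation security is strictly stronger than its non-adaptive counterpart -- the non-adaptive experiment is just the restriction in which the key query must precede the challenge ciphertext -- both schemes are in particular non-adaptive single-query simulation-secure, which is exactly the form required by the lemma.

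Second, and this is the one genuinely substantive thing to check, I would confirm that both schemes produce \emph{classical} function secret keys. For the single-circuit scheme the key is $\sk^* = \IdFE.\Keygen(\msk)$, a key of the underlying classical identity-FE scheme, hence classical. For the poly-sized scheme the key is $\sk_C^* = (\sk^*_1,\dots,\sk^*_l,\sk^*_{in},\sk^*_{off})$, where each $\sk^*_i = \TwoFE.\Keygen(\sk_i,f_{C[i]})$ is a classical $\TwoFE$ key, and $\sk^*_{in},\sk^*_{off}$ are $\OneFE.\Keygen$ outputs, which are themselves $\IdFE$ keys and therefore classical. Thus every component of the function key is classical, and the second hypothesis is met.

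With both hypotheses verified, applying \cref{qfe:lem:singletomulti} to each scheme yields the claim. I do not expect any real obstacle: the statement is a direct instantiation of the lemma, and the only part requiring care is the structural inspection of the two $\Keygen$ procedures to confirm the classical-key condition, which the constructions satisfy by design since all their key material is inherited from the classical $\IdFE$ and $\TwoFE$ building blocks.
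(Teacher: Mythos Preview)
Your proposal is correct and matches the paper's approach: the corollary is stated without explicit proof in the paper, as it is an immediate application of \cref{qfe:lem:singletomulti}. Your verification that both schemes have classical secret keys (via inspection of the $\IdFE$, $\TwoFE$, and $\OneFE$ $\Keygen$ outputs) and that adaptive security implies non-adaptive security is exactly the justification the paper leaves implicit.
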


\subsection{2-Player Security of QFE}
In this Lemma we show that a single-query secure $\QFE$ scheme is still secure if two non-communicating parties each obtain a function secret key. In this definition we consider that a single ciphertext must be split between the two non-communicating parties. A slightly different notion of security where both $B$ and $C$ obtain their own copy of the ciphertext would also be implied by a $\QFE$ scheme. 

\begin{definition}[Non-Adaptive 2-player Single-Query IND-Security for QFE]
\label{qfe:def:na2playerindsecurity}

Let $\lambda$ be the security parameter and let $\Adv = (A,B,C)$ be a QPT adversary.
   \begin{align*}
            &\Exp^{2P-IND}_{\Adv,b}(1^\lambda)\\
            &(\mpk,\msk) \leftarrow \Setup(1^\lambda)\\          
             &(\rho_{\msg_0}, \rho_{\msg_1}, \rho_{\st_A}, \rho_{\st_B}, \rho_{\st_C},C_B, C_C) \leftarrow A(\mpk)\\
            &\rho_\ct \leftarrow \Enc(\mpk, \rho_{\msg_b})\\
            & \rho_{BC} \leftarrow A(\rho_{\st_A}, \rho_\ct)\\
            & \sk_{C_B} \leftarrow \Keygen(\msk, C_B), \sk_{C_C} \leftarrow \Keygen(\msk, C_C)\\
            & b_B \leftarrow B(\mpk, \rho_\ct, \rho_{\st_B}, \sk_{C_B})\\
            & b_C \leftarrow C(\mpk, \rho_\ct, \rho_{\st_C}, \sk_{C_C})
    \end{align*}
The FE scheme is called secure if for any adversary $\Adv = (A,B,C)$ where $(\rho_{\msg_0}, \rho_{\msg_1},C_B, \rho_{\st_B})$ and $(\rho_{\msg_0}, \rho_{\msg_1},C_C, \rho_{\st_C})$ are  each admissible queries (\cref{qfe:def:admis}) it holds that
$$\Pr\left[ b_B = b_C = b \right] \leq \frac{1}{2} + negl(\lambda)$$
where the random coins are taken over the randomness of $\Adv$, $\Setup, \Keygen$ and $\Enc$.
\end{definition}

\begin{remark}
    One could obtain an adaptive security notion 
    by allowing  $B$ and $C$ to make adaptive function secret key queries themselves. 
\end{remark}

\subsubsection{Difference to Unclonable Functional Encryption Experiment.} The experiments for 2-player single-query IND-Security for $\QFE$ and the experiment for unclonable functional encryption look very similar. Note that in this experiment the function secret keys that are obtained are restricted to be admissible queries. In the unclonable functional encryption experiment the function queries are not subject to any admissibility constraint which is a much stronger notion.

\begin{lemma}
    Any non-adaptively IND-secure single-query QFE scheme (\cref{qfe:def:indsecurity}) is also a 2-player single-query IND-secure QFE scheme (\cref{qfe:def:na2playerindsecurity}).
\end{lemma}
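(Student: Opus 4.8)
The plan is to exploit an asymmetry that the single-query restriction hides: although the $2$-player experiment hands out two function keys $\sk_{C_B}$ and $\sk_{C_C}$, the winning event is a \emph{conjunction}, so to refute it I only need to defeat one of the two players. Concretely, since $\{b_B = b_C = b\} \subseteq \{b_B = b\}$, we have $\Pr[b_B = b_C = b] \le \Pr[b_B = b]$, and it therefore suffices to bound the single probability $\Pr[b_B = b]$ using only $B$ and its one key $\sk_{C_B}$. The other player $C$, together with $\sk_{C_C}$, can be discarded entirely; this is exactly what keeps the reduction inside the single-query regime, since the constructed adversary will only ever ask for a single key.

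Next I would turn $(A,B)$ into a non-adaptive single-query IND adversary $\Adv^* = (\Adv^*_0,\Adv^*_1)$ against the underlying QFE scheme. In its first stage $\Adv^*_0$ runs $A(\mpk)$, forwards the circuit $C_B$ to its single allowed $\Keygen$ query to obtain $\sk_{C_B}$, and submits $(\rho_{\msg_0},\rho_{\msg_1})$ as its challenge pair while retaining $(\rho_{\st_A},\rho_{\st_B},\sk_{C_B})$ as side state (dropping $\rho_{\st_C}$ and $C_C$). After receiving $\rho_\ct \leftarrow \Enc(\mpk,\rho_{\msg_b})$, the second stage $\Adv^*_1$ first applies $A$'s splitting channel to $(\rho_{\st_A},\rho_\ct)$ to recover $B$'s share, then runs $B(\mpk,\cdot,\rho_{\st_B},\sk_{C_B})$ and outputs its bit $b_B$ as the final guess $b'$. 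Because the only key query happens before the challenge ciphertext and no further oracle access is used afterwards, $\Adv^*$ is a legal non-adaptive single-query adversary, and it reproduces $B$'s view in the $2$-player game exactly, so $\Pr[b'=b] = \Pr[b_B=b]$.

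It then remains to invoke single-query IND-security (\cref{qfe:def:indsecurity}). The hypothesis of \cref{qfe:def:na2playerindsecurity} guarantees that $(\rho_{\msg_0},\rho_{\msg_1},C_B,\rho_{\st_B})$ is an admissible query in the sense of \cref{qfe:def:admis}, i.e. the outputs $C_B(\rho_{\msg_0})$ and $C_B(\rho_{\msg_1})$, tensored with the unencrypted register and the correlated side state, are statistically close. This is precisely the admissibility condition that $\Adv^*$'s challenge must meet, so single-query IND-security gives $\Pr[b_B=b] = \Pr[b'=b] \le \tfrac{1}{2} + negl(\lambda)$. Combining this with the containment from the first step yields $\Pr[b_B = b_C = b] \le \tfrac{1}{2} + negl(\lambda)$, the desired bound; by symmetry the argument routed through $C$ and $\sk_{C_C}$ works identically.

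The step I expect to be the main obstacle is the admissibility bookkeeping rather than the probabilistic counting. I must check that the side state $\Adv^*$ carries into its second stage, namely $\rho_{\st_A}$, the ciphertext portion consumed by the splitting channel, and the key $\sk_{C_B}$, is consistent with the internal state $\rho_{\st_B}$ for which admissibility was assumed, so that the reduction's query is genuinely admissible in the single-query game. Since $\sk_{C_B}$ is sampled independently of $b$ and the splitting channel touches only $\rho_{\st_A}$ and the challenge ciphertext, this should collapse cleanly onto the admissibility already imposed on $(\rho_{\msg_0},\rho_{\msg_1},C_B,\rho_{\st_B})$; fixing the register split $(E,U)$ so that the two notions of admissibility line up is the one point that needs care.
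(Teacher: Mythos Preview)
Your proposal is correct and follows essentially the same approach as the paper: use the containment $\{b_B=b_C=b\}\subseteq\{b_B=b\}$ to drop player $C$ entirely, then simulate $(A,B)$ inside the single-query IND game using only the one key query $C_B$. The paper's proof is considerably terser and does not spell out the admissibility bookkeeping you flag in your last paragraph; your more careful treatment of how the reduction's internal state relates to the assumed admissibility of $(\rho_{\msg_0},\rho_{\msg_1},C_B,\rho_{\st_B})$ is a genuine addition rather than a deviation.
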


\begin{proof}
    An adversary $\Tilde{A}$ in the single-query QFE IND-experiment can execute an adversary $(A,B,C)$ that wins the 2-player IND-experiment by only executing $A$ and $B$ and by only making a single key query $C_B$. Since to break security in the 2-player IND-security experiment both players $B$ and $C$ need to guess the correct bit $b$, $\Tilde{A}$ can win the IND-security experiment with the same probability as $(A,B,C)$ by outputting the guess $B$ outputs. 
\end{proof}

\section{Proof: Sim-security implies IND-security}
\label{qfe:app:lem1proof}

Here we provide the proof of \cref{qfe:lem:simtoind}. We restate the Lemma for convenience.
\begin{lemma}
    A QFE scheme that is single-query (non)-adaptively  SIM-secure (\cref{qfe:def:simsecurity}) is also single-query (non)-adaptively IND-secure (\cref{qfe:def:indsecurity}).
\end{lemma}

\begin{proof}
    Let $\Adv$ be an adversary that wins $\Exp^{IND}_{\Adv,b}$ with non-negligible probability. Then we can define an adversary $\Adv^*$ that wins the SIM-security experiment with non-negligible probability. Upon receiving $\mpk$ $\Adv^*$ runs $\Adv$ on input $\mpk$ until $\Adv$ outputs $(\rho^{EU}_{\msg_0}, \rho^{EU}_{\msg_1},\rho_{\st})$. A key-query of $\Adv$ is forwarded by $\Adv^*$ to it's own key oracle. Then $\Adv^*$ samples a random bit $b$ and sends $\rho^E_{\msg_b}$ as it's challenge message and receives $\rho_\ct$. $\Adv^*$ runs $\Adv$ on input $(\rho_\ct ,\rho^U_{\msg_b}, \rho_{\st})$ until it outputs a guess $b'$. 

    $\Adv^*$ outputs the state $(b',b)$. If $\Adv^*$ interacted in the ideal world the probability that $b = b'$ is $\frac{1}{2}+ negl(\lambda)$. In the ideal world the simulator receives the state $C(\rho^U_{\msg_b})$ without any information on the bit $b$. Let $\Phi$ be a completely positive trace preserving  (CPTP) map that describes the action of the simulator in the ideal experiment and $\Phi'$ be a CPTP map that applies $\Phi$ on the corresponding subsystem and the identity everywhere else.  After receiving the ciphertext the adversary holds the state $\sum_i \Phi(C(\rho^E_{\msg_{b,i}})) \otimes \rho^U_{\msg_{b,i}} \otimes \rho_{A_i}$. 

\begin{align*}
    \label{qfe:admis2}
    &\T(\sum_{i} \Phi( C(\rho^E_{\msg_{0,i}})) \otimes \rho^U_{\msg_{0,i}} \otimes \rho_{A_i}, \sum_{i} \Phi (C(\rho^E_{\msg_{1,i}})) \otimes \rho^U_{\msg_{1,i}} \otimes \rho_{A_i}) \\
    &=\T(\sum_{i} \Phi'( C(\rho^E_{\msg_{0,i}}) \otimes \rho^U_{\msg_{0,i}} \otimes \rho_{A_i}), \sum_{i} \Phi' (C(\rho^E_{\msg_{1,i}}) \otimes \rho^U_{\msg_{1,i}} \otimes \rho_{A_i}))\\
    &=\T(\Phi'(\sum_{i}  C(\rho^E_{\msg_{0,i}}) \otimes \rho^U_{\msg_{0,i}} \otimes \rho_{A_i}), \Phi' (\sum_{i} C(\rho^E_{\msg_{1,i}}) \otimes \rho^U_{\msg_{1,i}} \otimes \rho_{A_i}))\\
     &\leq \T(\sum_{i} C(\rho^E_{\msg_{0,i}}) \otimes \rho^U_{\msg_{0,i}} \otimes \rho_{A_i}, \sum_{i} C(\rho^E_{\msg_{1,i}}) \otimes \rho^U_{\msg_{1,i}} \otimes \rho_{A_i})\\
   & \leq negl(\lambda)
\end{align*}
The second to last step follows from the fact that the trace distance cannot be increased by applying a CPTP map. By definition of the trace distance $\Adv$ cannot distinguish the two states with more than negligible probability in the ideal world. 

By assumption $\Adv$ wins the IND-experiment with non-negligible advantage, therefore in the case of the real world $b=b'$ with $\frac{1}{2} + \eps$ where $\eps$ is non-negligible probability and we can distinguish the real and ideal cases with advantage $\eps/2$. 
\end{proof}

\end{document}